\newtheorem{theorem}{Theorem}[section]
\newtheorem{lemma}[theorem]{Lemma}
\newtheorem{fact}[theorem]{Fact}
\newtheorem{observation}[theorem]{Observation}
\newtheorem{corollary}[theorem]{Corollary}
\newtheorem{definition}[theorem]{Definition}
\renewcommand{\H}{\mathcal{H}}
\newcommand{\B}{\mathcal{B}}
\newcommand{\NS}{\mathcal{NS}}
\newcommand{\M}{\mathcal{M}}
\newcommand{\Q}{\mathcal{Q}}
\renewcommand{\S}{\mathcal{S}}
\newcommand{\Mp}{\mathcal{M}^\text{\tiny P}}
\newcommand{\vecP}{\vec{P}}
\newcommand{\SCHSH}{\mathcal{S}_\text{\tiny CHSH}}
\newcommand{\vecPH}{\vec{P}_\text{Hardy}}
\newcommand{\vecPQ}{\vec{P}_\text{Q}}
\newcommand{\vecPQii}{\vec{P}_\text{Q,2}}
\newcommand{\vecPQiii}{\vec{P}_\text{Q,3}}
\newcommand{\vecPQiv}{\vec{P}_\text{Q,4}}
\newcommand{\vecPC}{\vec{P}_\text{Cabello}}
\newcommand{\Exp}[1]{\langle #1 \rangle}
\newcommand{\MESd}{\Phi^+_d}
\newcommand{\MESqb}{\Phi^+_2}
\newcommand{\tp}{^\text{\,\tiny T}}
\newcommand{\proj}[1]{\ket{#1}\!\!\bra{#1}}
\renewcommand{\L}{\mathcal{L}}
\newcommand{\ketA}[1]{\ket{e_{#1}}}
\newcommand{\ketB}[1]{\ket{f_{#1}}}
\newcommand{\tr}{{\rm tr}}
\begin{document}

\begin{CJK*}{UTF8}{bsmi}

\title{Quantum correlations on the no-signaling boundary: self-testing and more}
\author{Kai-Siang Chen}
\affiliation{Department of Physics and Center for Quantum Frontiers of Research \& Technology (QFort), National Cheng Kung University, Tainan 701, Taiwan}

\author{Gelo Noel M. Tabia}
\affiliation{Department of Physics and Center for Quantum Frontiers of Research \& Technology (QFort), National Cheng Kung University, Tainan 701, Taiwan}
\affiliation{Physics Division, National Center for Theoretical Sciences, Taipei 10617, Taiwan}
\affiliation{Center for Quantum Technology, National Tsing Hua University, Hsinchu 300, Taiwan}

\author{Chellasamy Jebarathinam }
\affiliation{Center for Theoretical Physics, Polish Academy of Sciences, Aleja Lotnik\'{o}w 32/46, 02-668 Warsaw, Poland}
\affiliation{Department of Physics and Center for Quantum Information Science, National Cheng Kung University, Tainan 70101, Taiwan}

\author{Shiladitya Mal}
\affiliation{Department of Physics and Center for Quantum Frontiers of Research \& Technology (QFort), National Cheng Kung University, Tainan 701, Taiwan}
\affiliation{Physics Division, National Center for Theoretical Sciences, Taipei 10617, Taiwan}

\author{Jun-Yi Wu}
\affiliation{Department of Physics, Tamkang University, Tamsui, New Taipei 251301, Taiwan}

\author{Yeong-Cherng Liang}
\email{ycliang@mail.ncku.edu.tw}
\affiliation{Department of Physics and Center for Quantum Frontiers of Research \& Technology (QFort), National Cheng Kung University, Tainan 701, Taiwan}
\affiliation{Physics Division, National Center for Theoretical Sciences, Taipei 10617, Taiwan}

\begin{abstract}
In device-independent quantum information, correlations between local measurement outcomes observed by spatially separated parties in a Bell test play a fundamental role. Even though it is long-known that the set of correlations allowed in quantum theory lies strictly between the Bell-local set and the no-signaling set, many questions concerning the geometry of the quantum set remain unanswered. Here, we revisit the problem of when the boundary of the quantum set coincides with the no-signaling set in the simplest Bell scenario. In particular, for each Class of these common boundaries containing $k$ zero probabilities, we provide a $(5-k)$-parameter family of quantum strategies realizing these (extremal) correlations. We further prove that self-testing is possible in {\em all} nontrivial Classes beyond the known examples of Hardy-type correlations, and provide numerical evidence supporting the robustness of these self-testing results. Candidates of one-parameter families of self-testing correlations from some of these Classes are identified. As a byproduct of our investigation, if the qubit strategies leading to an extremal nonlocal correlation are local-unitarily equivalent, a self-testing statement based on this correlation provably follows. Interestingly, all these self-testing correlations found on the no-signaling boundary are provably non-exposed. An analogous characterization for the set $\mathcal{M}$ of quantum correlations arising from  finite-dimensional maximally entangled states is also provided. En route to establishing this last result, we show that all correlations of $\mathcal{M}$ in the simplest Bell scenario are attainable as convex combinations of those achievable using a Bell pair and projective measurements. In turn, we obtain the maximal Clauser-Horne-Shimony-Holt Bell inequality violation by any maximally entangled two-qudit state and a no-go theorem regarding the self-testing of such states.
\end{abstract}
\maketitle

\section{Introduction}

As remarked by Popescu and Rohrlich~\cite{Popescu:FP:1994}, the principle of relativistic causality demands that no signals can be transmitted faster than light. In the context of a Bell test, this means that spatially separated parties cannot alter the outcome distribution observed by a remote experimenter by performing a different measurement. Interestingly, it was shown in ~\cite{Popescu:FP:1994} that this principle alone (or even in conjunction with some other principles~\cite{Brassard_NCCT,Navascues_MacroscopicLocal,Pawlowski_IC,Navascues_AlmostQuantum}) is insufficient in demarcating the boundary of the quantum set $\Q$ of correlations precisely. In contrast, Bell's principle of local causality~\cite{Bell04,Norsen:2011aa} demands that physical influences propagate continuously through space. Even though also very well-motivated, it is too restrictive since quantum correlations arising from certain entangled quantum states are known~\cite{Bell64} to be incompatible with the constraints (known as Bell inequalities) derived therefrom.

By now, it is well-known that Bell-nonlocality (hereafter abbreviated as nonlocality), i.e., the possibility of exhibiting correlations stronger than that allowed by local causality, is an indispensable resource for device-independent (DI) quantum information (QI)~\cite{Brunner_RevModPhys_2014,Scarani_DIQI_12}. In other words, the fact that some entangled states can generate nonlocal correlation guarantees its usefulness in some DI quantum information processing protocols, such as quantum key distribution~\cite{Acin07} and random number generation~\cite{Colbeck09,Pironio10}.  More generally, even though we make no assumptions about the internal workings of the devices in a DI analysis, we can infer nontrivial conclusions about the employed devices (e.g.,~\cite{Gallego:PRL:2010,Bancal11,Moroder13,Liang:PRL:2015,SLCHen16,Baccari17,Bancal:PRL:2018,Zwerger_DICGME_18,Sekatski2018,Chen_18,Arnon-Friedman:2019aa,Wagner2020,Chen2021robustselftestingof,GSD2022}) or the underlying physical theory~\cite{LZ:Ent:2019} directly from the observed correlations. In some cases, one can even perform self-testing~\cite{Mayers04} of the devices, confirming that the devices work as expected, up to local isometries. See~\cite{Supic19} for a recent review on this topic and~\cite{Wang:2018tg,Zhang:PRL:2019,Zhang:npj:2019,Gomez:PRA:2019,Bancal:Quantum:2021,Dian:PRL:2022} for some recent experimental demonstrations and challenges.

In the bipartite scenario, the principle of relativistic causality~\cite{Popescu:FP:1994} gives rise to the no-signaling (NS) conditions~\cite{Barrett_05}, hence the set of $\NS$ correlations. However, in the multipartite scenario~\cite{Horodecki:2019ub}, the relationship between the two notions is more subtle. Since $\Q$ is a subset of $\NS$, the quantum advantage offered by any entangled state is necessarily constrained by the NS conditions. Still, there are known instances (e.g.,~\cite{Pearle:PRD:1970,BC:AP:1990,BKP:PRL:2006}) where the nonlocality of quantum correlations is constrained {\em only} by these conditions. Geometrically~\cite{Goh2018}, these correspond to situations when the boundary of $\Q$ meets that of $\NS$, beyond trivial instances when they also meet the boundary of the Bell-local set. 

Interestingly, correlations on these common boundaries have found applications in various contexts. For example, they have been used to illustrate the monogamous~\cite{BKP:PRL:2006} nature (a desired feature for secret key distributions) of specific quantum correlations, to show that no extension of quantum theory can have improved predictive power~\cite{CR:NC:2011}, free randomness can be amplified~\cite{CR:NatPhys:2012}, for manifesting quantum nonlocality with an arbitrarily small amount of measurement independence~\cite{PRB:PRL:2014}, efficient verification and certification of quantum devices~\cite{GSD2022}, etc. See also~\cite{KA:IEEE:2020} for a discussion about some of these applications.

A well-known nonlocal example of such a correlation is that introduced in the context of the Hardy paradox~\cite{Hardy1993}, which demonstrates nonlocality without resorting to any Bell inequality. Intriguingly, Hardy's argument applies to {\em almost all} entangled states of two qubits, except the maximally entangled ones. In fact, as we see in this work, even a maximally entangled state of an arbitrary finite Hilbert space dimension cannot exhibit the original Hardy paradox. See, however, Proposition 9 of~\cite{Ravi18}, where the authors proposed  a Hardy-type paradox for maximally entangled states of {\em almost all dimensions} except two.

The above observation manifests the difference between $\Q$ and $\M$ (the convex hull of the set of correlations attainable by finite-dimensional maximally entangled states).  In this work, we revisit the problem of characterizing---in the simplest Bell scenario---when the boundary of $\Q$, as well as $\M$ meet the boundary of $\NS$. The former characterization can also be deduced from the findings of~\cite{Rai:PRA:2019}, but we provide in addition explicit strategies for their quantum realization.

Another interesting feature of Hardy's correlation is that it can be used to self-test~\cite{Rabelo12} and hence certify that the underlying quantum strategy is essentially unique. Since then, other examples of correlations~\cite{KCA+06,Liang:2005vl} manifesting a Hardy-type paradox were also shown~\cite{Rai:PRA:2021,Rai:PRA:2022} to be a self-test. Here, we show that one can similarly find self-testing correlations for all the other nontrivial Classes of common boundaries between $\Q$ and $\NS$. Moreover, as with the Hardy correlation~\cite{Goh2018}, all these examples are provably non-exposed points.

We structure the rest of this paper as follows. In~\cref{sec2}, we formally introduce the various sets of correlations alluded to above and the other basic knowledge required for our analysis. Then, in~\cref{Sec:Q-NS}, we provide an alternative characterization of the common boundary of $\Q$ and $\NS$, quantum strategies for their realization, and examples of nonlocal quantum correlations lying on them that can be self-tested.  In~\cref{sec4}, we prove a Lemma concerning the extreme points of the set $\M$ of correlations due to finite-dimensional maximally entangled states in the simplest Bell scenario. We then use this to prove, for such states, the maximal Clauser-Horne-Shimony-Holt~\cite{CHSH_1969} Bell inequality violation and a no-go theorem for their self-testing. Finally, \cref{sec5} summarizes the results obtained and outlines some possible directions for future research.

\section{Preliminaries}~\label{sec2}
\subsection{Sets of correlations}\label{Sec:Sets}

In the simplest Bell scenario, two experimenters (Alice and Bob) may perform two dichotomic measurements each. The correlation between the local measurement outcomes may  be summarized~\cite{Brunner_RevModPhys_2014} by a collection of joint conditional probability distributions $\vecP=\{P(a,b|x,y)\}_{x,y,a,b=0,1}$ where $a$ and $b$ ($x$ and $y$) are, respectively, the label for the measurement outcome (setting) for the first and the second party. 

Note that entries of $\vecP$ have to satisfy certain constraints. For example, a probability distribution has to be non-negative and normalized, so
\begin{subequations}\label{Eq:Prob}
\begin{gather}
    P(a,b|x,y)\ge 0 \quad \forall\,\, a,b,x,y, \label{nonneg}\\
    \sum_{a,b}P(a,b|x,y)=1,\quad \forall\,\, x,y.\label{normalization}
\end{gather}
\end{subequations}
 
In addition, we are exclusively interested only in the set $\NS$, i.e., the set of correlations that satisfy the NS conditions~\cite{Barrett_05}
 \begin{equation}\label{Eq:NS}
 \begin{split}
     \sum_a P(a,b|x,y) =\sum_a P(a,b|x',y)\,\, \forall\,\, x,x',b,y\\
     \sum_b P(a,b|x,y) =\sum_b P(a,b|x,y')\,\, \forall\,\, y,y',a,x.
    \end{split}
 \end{equation}
The set of all such correlations forms a convex polytope. In particular, within the subspace of $\vecP$ satisfying \cref{Eq:NS}, $\NS$ is fully characterized by ``positivity facets'',\footnote{A facet of a polytope $\mathcal{P}$ is a boundary of $\mathcal{P}$ having maximal dimension.} i.e., those inequalities given in \cref{nonneg}.  In the simplest Bell scenario, $\NS$ is spanned by 24 vertices, 16 of which are Bell-local (or simply {\em local}), while the remaining 8 take the form of~\cite{Barrett_05}:
 \begin{equation}\label{Eq:PRBoxes}
    P(a,b|x,y)=\frac{1}{2}\delta_{a\oplus b, xy\oplus \alpha x\oplus \beta y\oplus \gamma}
 \end{equation}
 where $\delta_{c,d}$ is the Kronecker delta between $c$ and $d$, $\oplus$ means addition modulo $2$, and $\alpha,\beta,\gamma\in\{0,1\}$ are free parameters characterizing these nonlocal extremal points. In honor of the pioneering work of Popescu and Rohrlich (PR)~\cite{Popescu:FP:1994}, these nonlocal extremal points are referred to as PR boxes~\cite{Barrett_05}.
 
In contrast, the convex hull of the 16 local extremal points gives the so-called Bell polytope, or the local polytope, which we denote by $\L$. This is the set of correlations allowed by a local hidden variable theory~\cite{Bell64}, or equivalently the set of correlations that respects the principle of local causality~\cite{Bell04,Norsen:2011aa}
 \begin{equation}\label{Eq:LHV}
     P(a,b|x,y)=\sum_\lambda q_\lambda\, p(a|x,\lambda)p(b|y,\lambda) \quad \forall\, x,y,
 \end{equation}
 where the local response functions $p(a|x,\lambda), p(b|y,\lambda)$ can be taken, without loss of generality, to be either $0$ or $1$, and $\lambda$ can be understood as the local hidden variable (or computationally as a label for the extreme points of the local polytope). 
 
In this simplest Bell scenario, the Bell polytope is equivalently characterized by 16 positivity facets and 8 facet-defining Bell inequalities~\cite{Collins04}, first discovered by Clauser-Horne-Shimony-Holt (CHSH)~\cite{CHSH_1969}. Consequently, this Bell scenario is often referred to as the CHSH Bell scenario. For subsequent discussions, the following CHSH (Bell) inequality is of particular relevance:
\begin{equation}\label{Eq:CHSH}
    \SCHSH = \sum_{x,y,a,b=0}^1 (-1)^{xy+a+b+1} P(a,b|x,y)   \overset{\L}{\le} 2
\end{equation}

 The very basis of DIQI is that not all quantum correlations, namely, those obtained by performing local measurements on a shared quantum state $\rho$ can be decomposed in the form of ~\cref{Eq:LHV}. Formally, the set of quantum correlations $\Q$ consists of all $\vecP$ that respect Born's rule:
 \begin{equation}\label{Eq:Born}
     P(a,b|x,y)=\tr[(M^A_{a|x}\otimes M^B_{b|y})\rho],
 \end{equation}
 where ${M^A_{a|x}}$ and ${M^B_{b|y}}$ are, respectively, the positive operator-valued measure (POVM) elements associated with Alice's and Bob's local measurements. Notice that there is no restriction on the local Hilbert space dimension in~\cref{Eq:Born}. Thus, by Neumark's theorem~\cite{Neumark},  we can always take the POVM elements to be projectors when deciding whether a given $\vecP$ is in $\Q$.

Within $\Q$, a set of particular interest is the set of correlations attainable by finite-dimensional maximally entangled states, i.e., $\rho=\proj{\MESd}$ with
\begin{equation}\label{Eq:MES}
    \ket{\MESd}=\tfrac{1}{\sqrt{d}}\sum_{i=0}^{d-1}\ket{i}\!\ket{i},\quad 2 \le d < \infty.
\end{equation}
For each $d\ge 2$, we denote by $\M_d$ the convex hull of the set of all correlations attainable by $\ket{\MESd}$ via \cref{Eq:Born} and by $\M$ the convex hull of the union $\cup_{d=2,3,\ldots} \M_d$. From \cref{Eq:Born}, it follows that for any such states
\begin{equation}\label{Eq:MESCorrelation}
\begin{split}
    P(a,b|x,y)
        =\tfrac{1}{d}\tr\!\left[M^A_{a|x} \left(M^B_{b|y}\right)^\text{\tiny T}\right]
        =\tfrac{1}{d}\tr\!\left[\left(M^A_{a|x}\right)^\text{\tiny T}\! M^B_{b|y}\right],
\end{split}
\end{equation}
where $\tp$ denotes the transposition operation. Clearly, $\M\subsetneq\Q$, and the fact that the inclusion is generally strict was shown independently in~\cite{Liang2011}, ~\cite{Vidick2011}, and ~\cite{Junge:2011aa} (see also~\cite{Christensen15} and~\cite{Lin:Quantum:2022}).

In the CHSH Bell scenario, it is expedient to consider also the correlator, which is the probability of getting the same outcome minus the probability of getting different outcomes:
\begin{equation}\label{Eq:Correlator}
	\langle A_xB_y \rangle:=\sum_{a,b=0}^1 (-1)^{a+b} P(a,b|x,y).
\end{equation}
In quantum theory, this corresponds to the expectation value $\langle A_xB_y \rangle = \tr(A_x\otimes B_y\,\rho)$ where
\begin{equation}\label{Eq:GeneralObservables}
	A_x:= M^A_{0|x} - M^A_{1|x}, \quad B_y:= M^B_{0|y} - M^B_{1|y}
\end{equation}
are both dichotomic observables (i.e., with $\pm1$ eigenvalues). From \cref{Eq:GeneralObservables}, we can also express the POVM elements for all $x,y,a,b\in\{0,1\}$ in terms of the observables as
\begin{equation}\label{Eq:Obs->POVM}
        M^A_{a|x} = \frac{\mathbb{1}_2+(-1)^a A_x}{2}\,\text{ and }\,
        M^B_{b|y} =\frac{\mathbb{1}_2+(-1)^b B_y}{2}.
\end{equation}

\subsection{Self-testing}

One of our interests is to show that nonlocal correlations lying on the common boundaries of $\Q$ and $\NS$ can be used to self-test~\cite{Supic19} certain reference state and measurements. For completeness, we now recall from~\cite{Goh2018} the following definition for the self-testing of states and measurements.

\begin{definition}[Self-testing]\label{Dfn:Self-testing}
A quantum correlation $\vecP\in\Q$ is said to self-test the reference quantum realization $\{\ket{\widetilde{\psi}}, \{\widetilde{M^A_{a|x}}\}, \{\widetilde{M^B_{b|y}}\}\}$ of $\vecP$ if {\em for all} states $\ket{\psi}_{AB}$ and measurements $\{M^A_{a|x}\}$, $\{M^B_{b|y}\}$ giving the same correlation $\vecP$, one can find $\Phi:=\Phi_A \otimes \Phi_B$ with local isometries $\Phi_A: \H_A\mapsto\H_{A'}\otimes \H_{A''}$ and $\Phi_B: \H_B\mapsto\H_{B'}\otimes \H_{B''}$ such that
\begin{subequations}\label{Eq:Selftest}
\begin{equation}
    \Phi (M^A_{a|x}\otimes M^B_{b|y}\ket{\psi}_{AB}) =\ket{\varsigma}_{A'B'}\otimes \widetilde{M^A_{a|x}}\otimes \widetilde{M^B_{b|y}}\ket{\widetilde{\psi}}_{A''B''},\label{st-s&m}
\end{equation}
holds for some auxiliary ``junk" state $\ket{\varsigma}_{A'B'}$. 
By summing over all $a$, $b$ and using the linearity of $\Phi$, one also obtains the usual self-testing requirement for a state
\begin{equation}
    \Phi (\ket{\psi}_{AB}) =\ket{\varsigma}_{A'B'}\otimes \ket{\widetilde{\psi}}_{A''B''}.\label{st-s}
\end{equation}
\end{subequations}
\end{definition}
Notice that there are examples of $\vecP\in\Q$ that self-test a reference state but not the underlying measurements, see, e.g.,~\cite{Jeba:PRR:2019,Kaniewski:PRR:2020,Gigena:PRA:2022}. Furthermore, in the CHSH Bell scenario, a  self-testing quantum correlation is necessarily extremal~\cite{Goh2018} in $\Q$. 

Let us now recall from~\cite{Masanes06} and~\cite{Franz11} the following characterization of the extreme points of $\Q$ applicable to this Bell scenario. 
\begin{fact}[Masanes~\cite{Masanes06}]\label{CHSH:ExtremeQ}
In an $N$-partite Bell scenario involving only two dichotomic measurements, all extreme points of $\Q$ are achievable by measuring $N$-qubit pure states with projective measurements.
\end{fact}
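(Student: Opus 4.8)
The plan is to combine three ingredients: the freedom to take the shared state pure, Neumark's theorem (already invoked in the excerpt) to render all measurements projective, and Jordan's lemma to reduce each party's pair of dichotomic observables to a direct sum of blocks of dimension at most two. First I would argue that an extreme point $\vecP$ of $\Q$ always admits a realization with a \emph{pure} global state: any mixed-state realization $\rho=\sum_k q_k\proj{\psi_k}$ (with fixed measurements) yields $\vecP=\sum_k q_k\vecP_k$, where $\vecP_k$ is obtained from $\proj{\psi_k}$ via \cref{Eq:Born}, and extremality forces every $\vecP_k=\vecP$. Hence we may assume a pure state $\ket{\psi}$ on $\H_1\otimes\cdots\otimes\H_N$ together with, by Neumark's theorem, projective dichotomic observables $A_0^{(i)},A_1^{(i)}$ (eigenvalues $\pm1$) for each party $i$.

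Next I would apply Jordan's lemma to each party separately: for the two Hermitian involutions $A_0^{(i)},A_1^{(i)}$ there is an orthogonal decomposition $\H_i=\bigoplus_{\alpha}\H_{\alpha}^{(i)}$ into subspaces of dimension at most two that are invariant under both observables. Writing $\Pi^{(i)}_\alpha$ for the projector onto $\H_\alpha^{(i)}$, each $\Pi^{(i)}_\alpha$ commutes with $A_0^{(i)}$ and $A_1^{(i)}$, hence with all local POVM elements $M^{(i)}_{a|x}$ through \cref{Eq:Obs->POVM}. Setting $\Pi_{\vec\alpha}=\Pi^{(1)}_{\alpha_1}\otimes\cdots\otimes\Pi^{(N)}_{\alpha_N}$ gives a complete family of mutually orthogonal projectors that commute with every product operator $M:=M^{(1)}_{a_1|x_1}\otimes\cdots\otimes M^{(N)}_{a_N|x_N}$. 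Because $[M,\Pi_{\vec\alpha}]=0$ and the $\Pi_{\vec\alpha}$ are orthogonal, the cross terms $\tr[M\,\Pi_{\vec\alpha}\proj{\psi}\Pi_{\vec\beta}]$ with $\vec\alpha\neq\vec\beta$ vanish, so the probabilities split as $P(\vec a|\vec x)=\sum_{\vec\alpha}p_{\vec\alpha}\,\tr[M\,\proj{\psi_{\vec\alpha}}]$ with $p_{\vec\alpha}=\tr[\Pi_{\vec\alpha}\proj{\psi}]$ and $\ket{\psi_{\vec\alpha}}=\Pi_{\vec\alpha}\ket{\psi}/\sqrt{p_{\vec\alpha}}$. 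Crucially, each $\proj{\psi_{\vec\alpha}}$ is rank one, hence a pure state supported on a tensor product of spaces of local dimension at most two.

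This exhibits $\vecP=\sum_{\vec\alpha}p_{\vec\alpha}\vecP_{\vec\alpha}$ as a convex combination of correlations $\vecP_{\vec\alpha}\in\Q$, each realized by an $N$-qubit pure state with projective qubit measurements (a one-dimensional Jordan block being embedded trivially into a qubit carrying a deterministic $\pm1$ observable). Extremality of $\vecP$ then forces $\vecP=\vecP_{\vec\alpha}$ for any $\vec\alpha$ with $p_{\vec\alpha}>0$, which is precisely the claimed qubit realization. I expect the substantive content to reside in Jordan's lemma itself---the simultaneous block-diagonalization of two involutions into $1$- and $2$-dimensional invariant subspaces---since everything downstream is bookkeeping; the one point demanding care is the vanishing of the off-diagonal terms, which rests entirely on the commutation $[M^{(i)}_{a|x},\Pi^{(i)}_\alpha]=0$ inherited from the invariance of each Jordan block.
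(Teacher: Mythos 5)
Your proof is correct and follows essentially the same route as the original argument of Masanes~\cite{Masanes06}, which the paper only cites but whose key ingredient---the simultaneous block-diagonalization of each party's two projective dichotomic observables into $1\times1$ and $2\times 2$ invariant blocks, followed by a convex decomposition of $\vecP$ over blocks and an appeal to extremality---is exactly the machinery the paper itself redeploys in the proofs of Lemma~\ref{CHSH:Extreme:MES} and \cref{Thm:self-test:upgrade}. The points you flag (purity via extremality, Neumark, vanishing of the cross terms from $[M^{(i)}_{a|x},\Pi^{(i)}_\alpha]=0$) are all handled correctly, so there is nothing to add.
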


\begin{fact}[Franz {\em et al.}~\cite{Franz11}]\label{CHSH:ExtremeQ2}
All extreme points of $\Q$ in the CHSH Bell scenario are achievable by measuring a two-qubit pure state $\sum_{i,j=0}^1 c_{ij}\ket{i}\ket{j}$, $\sum_{i,j} c^2_{ij}=1$, 
with the observables:
\begin{equation}\label{Eq:Meas:Class1}
\begin{split}
        &A_0 = B_0 = \sigma_z,\\
        A_1 = C_\alpha\, \sigma_z +&S_\alpha\, \sigma_x,\quad
        B_1 = C_\beta\, \sigma_z +S_\beta\, \sigma_x,
\end{split}
\end{equation}
where $C_\alpha:=\cos\alpha$, $C_\beta:=\cos\beta$, $S_\alpha:=\sin\alpha$, $S_\beta:=\sin\beta$, and $\alpha,\beta\in[0,2\pi)$, $c_{ij}\in\mathbb{R}$ are free parameters.
\end{fact}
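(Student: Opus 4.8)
The plan is to start from \cref{CHSH:ExtremeQ}, which already reduces any extreme point $\vecP$ of $\Q$ in the CHSH scenario to a realization on a two-qubit pure state $\ket{\psi}$ with projective measurements. Projectivity forces each observable in \cref{Eq:GeneralObservables} to have eigenvalues $\pm1$, so I can write $A_x=\vec{a}_x\cdot\vec{\sigma}$ and $B_y=\vec{b}_y\cdot\vec{\sigma}$ for unit Bloch vectors $\vec{a}_x,\vec{b}_y\in\mathbb{R}^3$, where $\vec{\sigma}=(\sigma_x,\sigma_y,\sigma_z)$. The remaining task is then purely one of choosing a convenient local frame and showing that the state can be taken to be real.

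First I would exploit the invariance of $\vecP$ under local unitaries. Replacing $\ket{\psi}\mapsto(U_A\otimes U_B)\ket{\psi}$, $A_x\mapsto U_A A_x U_A^\dagger$, and $B_y\mapsto U_B B_y U_B^\dagger$ leaves every entry $\tr[(M^A_{a|x}\otimes M^B_{b|y})\proj{\psi}]$ unchanged, while the adjoint action of $SU(2)$ rotates the Bloch vectors by an arbitrary $R\in SO(3)$, independently on each side. Using Alice's rotation I can send $\vec{a}_0\mapsto\hat{z}$ so that $A_0=\sigma_z$; the residual freedom of rotations about $\hat{z}$ then lets me place $\vec{a}_1$ in the $xz$-plane, giving $A_1=C_\alpha\sigma_z+S_\alpha\sigma_x$. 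Running the same two steps with Bob's independent rotation yields $B_0=\sigma_z$ and $B_1=C_\beta\sigma_z+S_\beta\sigma_x$, reproducing exactly the observables in \cref{Eq:Meas:Class1}.

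The most delicate point is the reality of the coefficients $c_{ij}$. After the reduction above, all four observables---and hence every product operator $M^A_{a|x}\otimes M^B_{b|y}$ built from them via \cref{Eq:Obs->POVM}---are \emph{real symmetric} matrices. Writing the a priori complex state as $\ket{\psi}=\ket{\psi_R}+i\ket{\psi_I}$ with real vectors $\ket{\psi_R},\ket{\psi_I}$, I would use that for any real symmetric $O$ the cross term $\bra{\psi_R}O\ket{\psi_I}-\bra{\psi_I}O\ket{\psi_R}$ vanishes (by $O^{\text{T}}=O$ and reality of the vectors), so that $\bra{\psi}O\ket{\psi}=\bra{\psi_R}O\ket{\psi_R}+\bra{\psi_I}O\ket{\psi_I}$. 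Applying this with $O=M^A_{a|x}\otimes M^B_{b|y}$ and setting $p:=\langle\psi_R|\psi_R\rangle$ (so $1-p=\langle\psi_I|\psi_I\rangle$ by normalization), every probability splits as $P(a,b|x,y)=p\,P_R(a,b|x,y)+(1-p)\,P_I(a,b|x,y)$, where $P_R,P_I\in\Q$ are the correlations obtained from the normalized real states $\ket{\psi_R}/\sqrt{p}$ and $\ket{\psi_I}/\sqrt{1-p}$ with the same real observables.

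Finally I would invoke extremality: since $\vecP$ is a convex combination of $P_R,P_I\in\Q$, either the combination is trivial ($p\in\{0,1\}$, so $\ket{\psi}$ is already real up to a global phase) or extremality of $\vecP$ in $\Q$ forces $\vecP=P_R=P_I$. In every case $\vecP$ is realized by a real, normalized two-qubit state together with the canonical observables of \cref{Eq:Meas:Class1}, which is precisely the asserted form $\sum_{i,j}c_{ij}\ket{i}\ket{j}$ with $c_{ij}\in\mathbb{R}$ and $\sum_{i,j}c_{ij}^2=1$. I expect the main obstacle to be exactly this last reality reduction: the Bloch-frame normalization is routine, but securing real coefficients genuinely requires both the real-symmetric structure of the canonical observables and the extremality hypothesis.
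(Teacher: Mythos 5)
The paper does not actually prove this statement: it is imported verbatim as a Fact from Franz \emph{et al.}~\cite{Franz11}, so there is no in-house argument to compare against. Judged on its own terms, your derivation is essentially sound and follows a genuinely different route from the original source. Franz \emph{et al.} obtain the two-qubit normal form from the structure theory of a pair of projections and characterize extremal correlations algebraically; you instead take Fact~\ref{CHSH:ExtremeQ} as a black box for the qubit reduction, fix the Bloch frame by local unitaries, and obtain reality of the state by splitting $\ket{\psi}=\ket{\psi_R}+i\ket{\psi_I}$ and invoking extremality against the induced convex decomposition $\vecP=p\,\vecP_R+(1-p)\,\vecP_I$. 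That realification step is correct: once $A_0=B_0=\sigma_z$ and $A_1,B_1$ are real linear combinations of $\sigma_z$ and $\sigma_x$, every operator $M^A_{a|x}\otimes M^B_{b|y}$ is real symmetric, the cross terms cancel, and extremality (or $p\in\{0,1\}$) leaves a real, normalized realization. What this buys is a short, self-contained proof; what it costs is reliance on Masanes's theorem, which the operator-algebraic route of~\cite{Franz11} effectively reproves along the way.

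The one genuine (though small) gap is the opening claim that projectivity forces $A_x=\vec{a}_x\cdot\vec{\sigma}$ with a unit Bloch vector. A projective dichotomic qubit observable can also be $\pm\mathbb{1}_2$ (one POVM element of full rank), in which case there is no unit Bloch vector and the frame-fixing step does not apply. You should dispose of this case explicitly: if any observable is trivial the resulting correlation is local (cf.~the remark on full-rank POVM elements in the proof of Lemma~\ref{lem1}), an extreme point of $\Q$ that is local must be a deterministic point, and every deterministic point is reproduced in the canonical form of~\eqref{Eq:Meas:Class1} by taking $\alpha,\beta\in\{0,\pi\}$ together with a real product state. With that sentence added, the argument is complete.
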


With these facts in mind, it seems natural to first understand whether seemingly different qubit strategies for realizing a given correlation $\vecP$ are indeed different. To this end, we introduce the following definition to capture equivalence classes of quantum strategies.
\begin{definition}[$d$-equivalent]\label{Dfn:Equivalent}
A quantum realization of $\vecP$ consisting of $\left\{\ket{\psi}, \{M^A_{a|x}\}, \{M^B_{b|y}\}\right\}$ is said to be $d$-equivalent to another quantum realization of $\vecP$ given by $\left\{\ket{\tilde{\psi}}, \{\widetilde{M^A_{a|x}}\}, \{\widetilde{M^B_{b|y}}\}\right\}$ if there exist $d$-dimensional local unitary operators $u_A$ and $u_B$ such that
\begin{subequations}\label{Eq:Dfn:d-equivalent}
\begin{align}
	&u_A\otimes u_B \ket{\psi} = \ket{\tilde{\psi}},\label{Eq:Equivalent:State}\\
	u_A\otimes u_B (M^A_{a|x}\otimes &M^B_{b|y})u_A^\dag\otimes u_B^\dag = \widetilde{M^A_{a|x}}\otimes\widetilde{M^B_{b|y}}.\label{Eq:Equivalent:Meas}
\end{align}
\end{subequations}
\end{definition} 
From \cref{Eq:Born}, $d$-equivalent quantum strategies clearly result in the same correlation $\vecP$. The converse, however, is not necessarily true. To show that a given $\vecP$ in the CHSH Bell scenario is a self-test, it thus seems natural--- given Fact~\ref{CHSH:ExtremeQ}---to first determine if all qubit strategies realizing $\vecP$ are $2$-equivalent. If so, the following Theorem allows one to promote such an observation to a self-testing statement.
\begin{theorem}~\label{Thm:self-test:upgrade}
In the CHSH Bell scenario, if all qubit strategies $\left\{\ket{\psi}, \{M^A_{a|x}\}, \{M^B_{b|y}\}\right\}$ giving an extremal nonlocal correlation $\vecP$ are 2-equivalent to a specific reference strategy $\left\{\ket{\tilde{\psi}}, \{\widetilde{M^A_{a|x}}\}, \{\widetilde{M^B_{b|y}}\}\right\}$, then $\vecP$ self-tests the quantum realization given by this reference strategy.
\end{theorem}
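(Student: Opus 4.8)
The plan is to reduce an arbitrary-dimensional realization of $\vecP$ to a direct sum of qubit strategies and then apply the $2$-equivalence hypothesis block by block. Let $\left\{\ket{\psi}_{AB}, \{M^A_{a|x}\}, \{M^B_{b|y}\}\right\}$ be \emph{any} realization of $\vecP$. By Neumark's theorem I may assume the measurements are projective, so the observables $A_x=M^A_{0|x}-M^A_{1|x}$, $B_y=M^B_{0|y}-M^B_{1|y}$ have eigenvalues $\pm1$. Applying Jordan's lemma (the structural ingredient underlying \cref{CHSH:ExtremeQ}) to the pairs $(A_0,A_1)$ on $\H_A$ and $(B_0,B_1)$ on $\H_B$ yields orthogonal decompositions $\H_A=\bigoplus_i\H_A^{(i)}$ and $\H_B=\bigoplus_j\H_B^{(j)}$ into subspaces of dimension at most two on which all observables, and hence all POVM elements, act block-diagonally. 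Writing $\Pi_A^{(i)},\Pi_B^{(j)}$ for the block projectors and setting $\ket{\phi_{ij}}:=(\Pi_A^{(i)}\otimes\Pi_B^{(j)})\ket{\psi}$, $p_{ij}:=\braket{\phi_{ij}|\phi_{ij}}$, $\ket{\psi_{ij}}:=\ket{\phi_{ij}}/\sqrt{p_{ij}}$, I obtain $\ket{\psi}=\sum_{ij}\sqrt{p_{ij}}\ket{\psi_{ij}}$.

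Next I invoke extremality. Because each $M^A_{a|x}\otimes M^B_{b|y}$ preserves every block $\H_A^{(i)}\otimes\H_B^{(j)}$ and distinct blocks are orthogonal, Born's rule decomposes as $\vecP=\sum_{ij}p_{ij}\,\vecP_{ij}$, a convex combination of the quantum correlations $\vecP_{ij}$ produced by the block strategies $\left\{\ket{\psi_{ij}},\{A_x|_i\},\{B_y|_j\}\right\}$. Since $\vecP$ is extremal in $\Q$, this forces $\vecP_{ij}=\vecP$ for every $(i,j)$ with $p_{ij}>0$. As $\vecP$ is nonlocal, no such block can be one-dimensional or have commuting observables on either side (that would render $\vecP_{ij}$ local); hence every relevant block is a genuine qubit with non-commuting observables, and $\left\{\ket{\psi_{ij}},\{A_x|_i\},\{B_y|_j\}\right\}$ is a qubit strategy realizing $\vecP$.

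The hard part is the third step, ensuring locality of the transforming unitaries. By hypothesis each block strategy is $2$-equivalent (\cref{Dfn:Equivalent}) to the reference, giving qubit unitaries $u_A^{ij},u_B^{ij}$ satisfying both parts of \cref{Eq:Dfn:d-equivalent}. The measurement condition \emph{factorizes}: matching the (unit) traces of the two sides of \cref{Eq:Equivalent:Meas} fixes the scalar relating the tensor factors to unity, so $u_A^{ij}M^A_{a|x}|_i(u_A^{ij})^\dag=\widetilde{M^A_{a|x}}$ and $u_B^{ij}M^B_{b|y}|_j(u_B^{ij})^\dag=\widetilde{M^B_{b|y}}$; equivalently $u_A^{ij}A_x|_i(u_A^{ij})^\dag=\widetilde{A_x}$ and likewise for Bob. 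Because $\vecP$ is nonlocal the reference observables $\widetilde{A_0},\widetilde{A_1}$ do not commute, so they generate the full qubit algebra and their joint stabilizer is $\{e^{i\theta}\mathbb{1}_2\}$. Hence the unitary diagonalizing block $i$ is unique up to a phase and depends on $i$ alone; fixing a choice $u_A^{(i)}$ (and similarly $u_B^{(j)}$), the remaining state condition reads $u_A^{(i)}\otimes u_B^{(j)}\ket{\psi_{ij}}=e^{i\mu_{ij}}\ket{\tilde\psi}$ for every relevant $(i,j)$, with a phase $\mu_{ij}$ to be absorbed below.

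Finally I assemble the isometry. Define $\Phi_A:=\sum_i\ket{i}_{A'}\otimes u_A^{(i)}\Pi_A^{(i)}$ and $\Phi_B:=\sum_j\ket{j}_{B'}\otimes u_B^{(j)}\Pi_B^{(j)}$; orthonormality of $\{\Pi_A^{(i)}\}$ and $\{\Pi_B^{(j)}\}$ makes each an isometry. Using block-diagonality together with the intertwining relation $u_A^{(i)}M^A_{a|x}|_i=\widetilde{M^A_{a|x}}u_A^{(i)}$ (and its Bob analogue), a direct computation gives
\[
\Phi\bigl(M^A_{a|x}\otimes M^B_{b|y}\ket{\psi}\bigr)=\ket{\varsigma}_{A'B'}\otimes\bigl(\widetilde{M^A_{a|x}}\otimes\widetilde{M^B_{b|y}}\bigr)\ket{\tilde\psi}_{A''B''},
\]
where $\ket{\varsigma}_{A'B'}:=\sum_{ij}\sqrt{p_{ij}}\,e^{i\mu_{ij}}\ket{i}_{A'}\ket{j}_{B'}$ is normalized and, crucially, independent of $a,b,x,y$. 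This is exactly \cref{st-s&m}, so $\vecP$ self-tests the reference realization. The main obstacle, flagged above, is the factorization-plus-uniqueness argument that pins the per-block unitaries down to local, setting-independent operators; the remaining content is bookkeeping on the block decomposition and extremality.
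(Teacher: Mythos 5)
Your proof is correct and follows essentially the same route as the paper's (\cref{App:Proof:SelfTestingTheorem}): Jordan/Masanes block decomposition into at most two-dimensional blocks, extremality forcing every surviving block to reproduce the nonlocal $\vecP$ (hence no $1\times1$ blocks), per-block application of the $2$-equivalence hypothesis, and assembly of local isometries whose junk state is independent of $a,b,x,y$. The one place you go beyond the paper is in justifying that Alice's block unitary can be chosen to depend only on her index $i$ and not on Bob's block $j$ --- via factorizing the measurement condition of \cref{Eq:Equivalent:Meas} and invoking uniqueness (up to phase) of the unitary conjugating a pair of non-commuting qubit observables, with the residual phases $e^{i\mu_{ij}}$ absorbed into $\ket{\varsigma}$ --- a point the paper's proof takes for granted when it writes the block unitaries as $u_{A,i}\otimes u_{B,j}$.
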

For a proof of the Theorem, see~\cref{App:Proof:SelfTestingTheorem}.

\subsection{Exposed vs non-exposed point }

Often, instead of using the full measurement statistics given by $\vecP$, self-testing can also achieved by observing the maximal quantum violation of a Bell inequality. In this case, the corresponding extremal $\vecP=\vecP_0$ must be a {\em unique} maximizer of some Bell function, violating some Bell inequality maximally. Following~\cite{Goh2018}, we refer to such points as being exposed in $\Q$. More formally, since any Bell function may be specified by a vector of real numbers $\vec{B}$, an exposed point $\vecP_0$ in $\Q$ is a correlation that satisfies
\begin{equation}\label{Dfn:Exposed}
	\max_{\vecP\in\Q | \vecP\neq\vecP_0} \vec{B}\cdot\vecP < \vec{B}\cdot\vecP_0.
\end{equation}	
Conversely, if there is no Bell function $\vec{B}$ such that \cref{Dfn:Exposed} holds for $\vecP_0$, then the correlation $\vecP_0$ is said to be non-exposed (in $\Q$). In~\cite{Goh2018}, motivated by the findings presented in~\cite{Mancinska:2014uc}, the correlation leading to the Hardy paradox~\cite{Hardy1993} was shown to be non-exposed. Below, we show that the same holds also for several other extremal quantum correlations lying on the boundary of $\NS$.

\subsection{Correlation tables and relabeling}

A correlation $\vecP$ is conveniently represented using a table showing all entries of $\vecP$, as shown in \cref{tab:1}. Note that the non-negativity requirement of~\cref{nonneg} demands that all entries in the correlation table are non-negative. The normalization requirement of~\cref{normalization}, on the other hand, means that within each block corresponding to a fixed value of $x,y$, the sum of all entries gives one.

\begin{table}[h]
    \small\addtolength{\tabcolsep}{-1pt}
    \captionsetup{justification=RaggedRight,singlelinecheck=off}
    \centering
    \begin{tabular}{c|c||c|c||c|c||}
    \hhline{~~|t|~~|t|~~|t|}
     \multicolumn{2}{c||}{}  & \multicolumn{2}{c||}{$x=0$} & \multicolumn{2}{c||}{$x=1$} \\ 
     \hhline{~~--||--||}
     \multicolumn{2}{c||}{}  & $a=0$ & $a=1$ & $a=0$ & $a=1$ \\
    \hhline{|=t=::==::==:|}
    \multirow{2}{*}{$y=0$} & $b=0$ &  & $\delta$ &  &  \\ \hhline{~-||--||--||}
                             & $b=1$ & &  &  &  \\
    \hhline{|==::==::==:|}
    \multirow{2}{*}{$y=1$} & $b=0$ & & &  & \\
    \hhline{~-||--||--||}
                             & $b=1$ &  &  & &  \\
    \hhline{|=b=:b:==:b:==:b|}
         \end{tabular}
\caption{\label{tab:1} A correlation table showing the entries of a correlation $\vecP$. For instance, if the probability of getting outcomes $a=1,b=0$ for the given settings $x=0,y=0$ is $\delta$, we write $P(1,0|0,0)=\delta$, or equivalently, a $\delta$ as the corresponding entry in the correlation table. To simplify the presentation, we omit ``$a=$", ``$b=$", and the lines separating the different outcomes in all the correlation tables presented below. }
\end{table}

For the rest of the paper, it is worth remembering that a relabeling of the measurement settings $x$ ($y$), the measurement outcomes $a$ ($b$), or the parties Alice $\leftrightarrow$ Bob does not lead to any change in the nature of the correlation. In other words, if $\vecP$ is a member of $\L$, $\Q$, or $\NS$, it remains so after any such relabeling or even combinations thereof. In the context of the correlation table, cf.~\cref{tab:1}, a relabeling of the measurement settings $x=0\leftrightarrow x=1$ amounts to interchanging the two block columns (consisting of two columns ) whereas a relabeling of the measurement outcomes $a=0\leftrightarrow a=1$ for some given value of $x$ amounts to swapping the two columns associated with that value of $x$. Similarly, a relabeling of Bob's measurement settings $y=0\leftrightarrow y=1$ or his outcome $b=0\leftrightarrow b=1$ entails a permutation of the corresponding rows in the table. Finally, a relabeling of the parties amounts to performing a transposition of the table.

\subsection{A signature of Bell-local correlations}

When seen as a convex combination of the extreme points of $\NS$, {\em any} correlation $\vecP$ lying outside the local polytope, i.e., $\vecP\not\in\L$ must have contribution(s) from at least one PR box, such as that obtained from \cref{Eq:PRBoxes} by setting $\alpha=\beta=0$ and $\gamma=1$; see~\cref{Eq:vecPR} for the corresponding correlation table. 
\begin{equation} \label{Eq:vecPR}
    \vec{P}_{\text{PR}} := ~
    \begin{tabular}{cc|cc|cc|}
         &  & \multicolumn{2}{c|}{$x=0$} & \multicolumn{2}{c|}{$x=1$} \\
         &  & $0$ & $1$ & $0$ & $1$ \\
        \hline
        \multirow{2}{*}{$y=0$} & $0$ & $0$ & $\frac{1}{2}$ & $0$ & $\frac{1}{2}$\\
                               & $1$ & $\frac{1}{2}$  & $0$  & $\frac{1}{2}$ &$0$ \\ [1.02ex]
        \hline
        \multirow{2}{*}{$y=1$} & $0$ & $0$  & $\frac{1}{2}$ & $\frac{1}{2}$ & $0$ \\
                               & $1$ & $\frac{1}{2}$ &$0$  & $0$& $\frac{1}{2}$\\ [1.02ex]
        \hline
        \end{tabular}
\end{equation}

A distinctive feature of the PR box correlation shown in \cref{Eq:vecPR} is that it consists of three anti-diagonal blocks satisyfing $xy=0$ and one diagonal block for $x=y=1$. Or equivalently, $\vecP_{\text{PR}}$ consists of three blocks of diagonal zeros (DZs) and one of block anti-diagonal zeros (ADZs).  Moreover, all non-vanishing entries are $\frac{1}{2}$. With some thought, it is easy to see that all the other PR boxes, which can be obtained via a relabeling from~\cref{Eq:vecPR}, must have either (i) three blocks of DZs and one block of ADZs [like the one shown in~\cref{Eq:vecPR}] or (ii) three blocks of ADZs and one block of DZs.
 
\begin{table*}
         \small\addtolength{\tabcolsep}{-1pt}
         \captionsetup{justification=RaggedRight,singlelinecheck=off}
    \begin{subtable}[h]{0.23\textwidth}
        \centering
        \begin{tabular}{cc|cc|cc|}
         &  & \multicolumn{2}{c|}{$x=0$} & \multicolumn{2}{c|}{$x=1$} \\
         &  & $0$ & $1$ & $0$ & $1$ \\
        \hline
        \multirow{2}{*}{$y=0$} & $0$ & $0$ & & & \\
         & $1$ &  & $0$ & & \\
        \hline
        \multirow{2}{*}{$y=1$} & $0$ & & & & \\
         & $1$ & & & & \\
        \hline
        \end{tabular}
        \caption{Zeros at the diagonal positions (DZs). }
         \label{tab:3a}
    \end{subtable}
    \quad
    \begin{subtable}[h]{0.23\textwidth}
        \centering
        \begin{tabular}{cc|cc|cc|}
         &  & \multicolumn{2}{c|}{$x=0$} & \multicolumn{2}{c|}{$x=1$} \\
         &  & $0$ & $1$ & $0$ & $1$ \\
        \hline
        \multirow{2}{*}{$y=0$} & $0$ &  & $0$ & & \\
         & $1$ & $0$ &   & & \\
        \hline
        \multirow{2}{*}{$y=1$} & $0$ &  &  & & \\
         & $1$ & & & & \\
        \hline
        \end{tabular}
        \caption{Zeros at the anti-diagonal positions (ADZs). }
         \label{tab:3b}
    \end{subtable}
    \quad
    \begin{subtable}[h]{0.23\textwidth}
        \centering
        \begin{tabular}{cc|cc|cc|}
         &  & \multicolumn{2}{c|}{$x=0$} & \multicolumn{2}{c|}{$x=1$} \\
         &  & $0$ & $1$ & $0$ & $1$ \\
        \hline
        \multirow{2}{*}{$y=0$} & $0$ & $0$ & $0$ &  &  \\
         & $1$ &  &  &  &  \\
        \hline
        \multirow{2}{*}{$y=1$} & $0$ & & & & \\
         & $1$ & & & & \\
        \hline
        \end{tabular}
        \caption{Zeros on the same row.}
        \label{tab:3c}
    \end{subtable}
    \quad
    \begin{subtable}[h]{0.23\textwidth}
        \centering
        \begin{tabular}{cc|cc|cc|}
         &  & \multicolumn{2}{c|}{$x=0$} & \multicolumn{2}{c|}{$x=1$} \\
         &  & $0$ & $1$ & $0$ & $1$ \\
        \hline
        \multirow{2}{*}{$y=0$} & $0$ & $0$ & & & \\
         & $1$ & $0$ & &  & \\
        \hline
        \multirow{2}{*}{$y=1$} & $0$ & & & & \\
         & $1$ & & & & \\
        \hline
        \end{tabular}
        \caption{Zeros on the same column.}
         \label{tab:3d}
    \end{subtable}
    
    \caption{All possibilities of correlation tables with two zeros in the same block (shown here for $x=y=0$).}
    \label{tab:3}
\end{table*}

Since any convex combination of non-negative numbers is non-vanishing, the above observations imply that a $\vecP$ containing two neighboring zeros within a block (such as that shown in~\cref{tab:3c} and ~\cref{tab:3d}) cannot have a nontrivial contribution from a PR box. Hence, as was already noted in~\cite{Fritz_2011}, we have the following simple sufficient condition for a given $\vecP$ to be in $\L$.
\begin{observation}\label{SufficiencyL}
In the simplest Bell scenario, if the correlation table for a certain $\vecP\in\NS$ has two neighboring zeros within a block, then $\vecP$ is necessarily Bell-local, i.e., $\vecP\in\L$.
\end{observation}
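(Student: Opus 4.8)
The plan is to reduce the claim to the structural facts already established about PR boxes and to exploit the fact that $\NS$ is a polytope whose only nonlocal extreme points are the $8$ PR boxes. First I would invoke the decomposition of an arbitrary $\vecP\in\NS$ into the $24$ extreme points of $\NS$: writing $\vecP = \sum_i p_i \vec{v}_i$ with $p_i \ge 0$, $\sum_i p_i = 1$, where the $\vec{v}_i$ are the $16$ local vertices and the $8$ PR boxes. The goal is to show that if $\vecP$ has two neighbouring zeros within a block, then every PR-box weight in this decomposition vanishes, so that $\vecP$ is a convex combination of local vertices and hence lies in $\L$.

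The key observation to leverage is the one already recorded in the paragraph preceding \cref{Eq:vecPR}: every PR box, being a relabeling of $\vec{P}_{\text{PR}}$, has within each of its four blocks exactly two nonzero entries equal to $\tfrac12$, arranged either diagonally or anti-diagonally. In particular, no PR box has two \emph{neighbouring} zeros in any block --- within any single block a PR box always has one zero on the diagonal-pair and one on the anti-diagonal-pair, never two adjacent zeros in the same row or same column (the situations of \cref{tab:3c} and \cref{tab:3d}). The second ingredient is the elementary positivity fact noted just before \cref{SufficiencyL}: a convex combination of non-negative numbers can vanish only if every summand with positive weight vanishes at that entry.

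Concretely, suppose the block $(x,y)$ of $\vecP$ contains two neighbouring zeros, say $P(a,b|x,y)=P(a',b|x,y)=0$ for $a\neq a'$ (the same-row case of \cref{tab:3c}; the same-column case of \cref{tab:3d} is identical after relabeling). By the positivity fact, for \emph{every} extreme point $\vec{v}_i$ carrying positive weight $p_i$, the corresponding two entries of $\vec{v}_i$ in this block must both be zero. But any PR box has, in each block, only a single zero among the two entries of any fixed row; it can never have both entries of a row equal to zero. Hence no PR box can appear with positive weight, forcing $\vecP$ to be a convex combination of local vertices alone, i.e.\ $\vecP\in\L$. The same argument applies verbatim to two neighbouring zeros in a column.

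The only point requiring a little care --- and the natural candidate for the main obstacle --- is to verify that the ``two nonzero entries per block, arranged diagonally or anti-diagonally'' pattern genuinely holds for \emph{all} eight PR boxes and not merely for the representative in \cref{Eq:vecPR}. I would handle this by noting that all eight arise from \cref{Eq:PRBoxes} with $\alpha,\beta,\gamma\in\{0,1\}$: in each block the outcomes satisfy $a\oplus b = xy\oplus\alpha x\oplus\beta y\oplus\gamma$, a fixed linear condition on $(a,b)$, so exactly one of the two values of $a$ is compatible with each value of $b$ and the nonzero entries form either the diagonal or the anti-diagonal of the block. This immediately rules out any block with two zeros sharing a row or a column, completing the reduction. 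Thus the statement follows directly from the polytope structure of $\NS$ together with positivity, with no further computation needed.
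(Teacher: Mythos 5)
Your proposal is correct and follows essentially the same route as the paper: the paper likewise notes that any $\vecP\notin\L$ must receive a nonzero weight from some PR box in its decomposition over the $24$ vertices of $\NS$, that positivity forces every positively weighted vertex to vanish wherever $\vecP$ does, and that every PR box has its two nonzero entries per block arranged (anti)diagonally, so no PR box can have two neighbouring zeros in a block. Your extra verification via \cref{Eq:PRBoxes} that this pattern holds for all eight PR boxes is a sound way of making explicit what the paper asserts by appeal to relabeling.
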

For example, the local deterministic correlation $P(a,b|x,y)=\delta_{a,x}\delta_{b,y}$, which is an extreme point of both $\L$ and $\NS$, has neighboring zeros in every block. Observation~\ref{SufficiencyL}, of course, allows one to identify  $\vecP\in\L$ beyond such trivial examples.

\subsection{Correlations lying on the $\NS$ boundary}

Finally, note the following fact regarding correlations lying on the boundary of $\NS$.
\begin{fact}
	Each (positivity) facet of $\NS$ is characterized by having {\em exactly} one joint conditional probability distribution $P(a,b|x,y)$ being zero.
\end{fact}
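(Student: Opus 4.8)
The plan is to unpack the word ``exactly'' into two claims and prove them separately: every positivity facet forces \emph{at least} one probability to vanish, and it forces \emph{no more than} one. The ``at least one'' half is immediate, since the facet attached to the inequality $P(a_0,b_0|x_0,y_0)\ge 0$ is by definition the set $F=\{\vecP\in\NS : P(a_0,b_0|x_0,y_0)=0\}$, on which that single entry is pinned to zero. All the work lies in the ``at most one'' half: showing that in the relative interior of $F$ no second entry is forced to zero, which will simultaneously guarantee that distinct entries index distinct facets.

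The cleanest self-contained route uses the vertex description of $\NS$ in this scenario \cite{Barrett_05}: the $16$ local deterministic points together with the $8$ PR boxes of \cref{Eq:PRBoxes}. Because relabelings of inputs, outputs, and parties act transitively on the $16$ entries --- each being an affine symmetry of $\NS$ that permutes vertices and facets and preserves zero-counts --- it suffices to treat one representative, say $(a_0,b_0,x_0,y_0)=(0,0,0,0)$. I would then enumerate the vertices lying on $F$: the $12$ local points with $p_A(0|0)\,p_B(0|0)=0$ together with the $4$ PR boxes with $\gamma=1$. Since the vertices of a face are exactly the vertices of the polytope contained in it, the barycenter $\vecP^\star$ of these $16$ points lies in $\mathrm{relint}(F)$. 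An entry of $\vecP^\star$ vanishes if and only if it vanishes on all $16$ vertices, and a short inspection shows the only common zero is $P(0,0|0,0)$ itself: the four PR boxes already share only the zeros $P(0,0|0,0)$ and $P(1,1|0,0)$ in the block $x=y=0$, while $P(1,1|0,0)$ is nonzero for several of the $12$ local vertices. Hence $\vecP^\star$ has precisely one vanishing entry, which proves the claim for $F$ and, by symmetry, for all $16$ facets.

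A shorter but less elementary alternative would invoke the completeness of the facet list recalled above: within the affine hull fixed by \cref{normalization} and \cref{Eq:NS}, the only facets of $\NS$ are the positivity facets of \cref{nonneg}. By the standard polytope fact that a point in the relative interior of a facet lies on that facet and on no other, any $\vecP^\star\in\mathrm{relint}(F)$ saturates no positivity inequality besides the defining one, so every other entry is strictly positive. This yields ``exactly one zero'' at once, and conversely shows that any correlation with a single vanishing entry sits in the relative interior of the corresponding facet, giving the full bijection between facets and entries.

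The genuine obstacle is not arithmetic but logical: the ``exactly one'' conclusion is only as strong as our knowledge that the $16$ positivity inequalities constitute the complete, non-redundant facet list of $\NS$. The streamlined argument presupposes this, whereas the vertex argument sidesteps it, since the barycenter construction simultaneously certifies that each $F$ is genuinely $7$-dimensional --- one extracts $8$ affinely independent vertices from the $16$ lying on $F$, matching $\dim\NS=8$ --- and that it carries a single zero. I would therefore present the vertex computation as the backbone, citing \cite{Barrett_05} only for the facet completeness used in the streamlined version.
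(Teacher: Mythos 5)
The paper states this Fact without any proof, treating it as standard geometry of the no-signaling polytope, so there is no in-paper argument to compare against; your proposal supplies a proof where the authors supply none. Your vertex-based argument is correct and self-contained. The facet $\{\vecP\in\NS: P(0,0|0,0)=0\}$ contains exactly the $12$ local deterministic vertices failing $f(0)=g(0)=0$ together with the $4$ PR boxes of \cref{Eq:PRBoxes} with $\gamma=1$; the barycenter of these $16$ points lies in the relative interior of that face, an entry of the barycenter vanishes iff it vanishes on all $16$ vertices, and your identification of the unique common zero checks out (the only other candidate, $P(1,1|0,0)$, shared by the four PR boxes, is nonzero on the four local vertices with $f(0)=g(0)=1$, and in every block other than $x=y=0$ the four PR boxes realize both the diagonal and anti-diagonal zero patterns). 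The reduction to one representative entry via relabelings is legitimate, since these are affine symmetries of $\NS$ acting transitively on the $16$ entries. Two minor remarks: your ``streamlined'' alternative does lean on the completeness of the positivity-facet list, but the paper explicitly asserts this when introducing $\NS$, so it is admissible here, and you are right that the vertex computation is the logically lighter route; and you assert rather than exhibit the $8$ affinely independent vertices certifying that the face is genuinely $7$-dimensional --- this does follow (for instance because $P(0,0|0,0)\ge0$ is already a facet of the full-dimensional local polytope $\L$), but since the Fact presupposes facet-hood, that step is optional.
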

In the CHSH Bell scenario, this corresponds to having one and only one out of the 16 inequalities of~\cref{nonneg} saturated. Consequently, when $k$ of the entries of $\vecP$ vanish, the correlation $\vecP$ lies at the intersection of $k$ such positivity facets of $\NS$. Thus, we have the following observation, which we rely on for the rest of the paper.
\begin{observation}
	A given  $\vecP\in\Q$ belongs to the boundary of $\NS$ if and only if at least one of its entries in  $\{P(a,b|x,y)\}_{x,y,a,b=0,1}$ is zero.
\end{observation}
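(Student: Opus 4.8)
The plan is to reduce the claim to the polyhedral structure of $\NS$ already recalled in the excerpt: within the affine subspace carved out by the normalization conditions \cref{normalization} and the no-signaling conditions \cref{Eq:NS}, the set $\NS$ is a full-dimensional convex polytope whose only facets are the positivity facets coming from \cref{nonneg}. The essential preliminary observation is that ``boundary of $\NS$'' must be read as the \emph{relative} boundary taken inside this affine hull. Indeed, in the full ambient probability space the conditions \cref{normalization} and \cref{Eq:NS} already force $\vecP$ onto a lower-dimensional flat, so $\NS$ has empty interior there and its ambient topological boundary is the whole set, carrying no information. Pinning down this interpretation is really the only conceptual step, and it is the point I would be most careful to state explicitly.

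First I would dispatch the easy direction ($\Leftarrow$): suppose some entry $P(a,b|x,y)=0$. Then $\vecP$ saturates the positivity inequality associated with that entry, hence lies on the corresponding positivity facet; since a facet is by definition contained in the relative boundary of the polytope, $\vecP$ belongs to the boundary of $\NS$. This direction uses only $\vecP\in\NS$, which is automatic since $\Q\subseteq\NS$.

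For the converse ($\Rightarrow$) I would invoke the standard characterization of the relative interior of a polytope: a point lies in the relative interior precisely when it satisfies every facet-defining inequality \emph{strictly}, so, contrapositively, a point on the relative boundary must saturate at least one facet-defining inequality. By the Fact stated immediately before this Observation, every facet of $\NS$ is a positivity facet, each characterized by exactly one entry $P(a,b|x,y)$ vanishing. Hence if $\vecP\in\Q\subseteq\NS$ lies on the boundary of $\NS$, it must saturate at least one such positivity inequality, i.e., at least one of its entries is zero. Combining the two directions yields the stated equivalence.

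Finally, I would remark that the hypothesis $\vecP\in\Q$ plays no essential role beyond guaranteeing $\vecP\in\NS$: the equivalence holds verbatim for every $\vecP\in\NS$, and the restriction to $\Q$ merely reflects the setting of interest for the rest of the paper. The argument is thus entirely structural, with no computation required once the relative-boundary reading and the already-recalled facet description of $\NS$ are in place.
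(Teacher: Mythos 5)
Your proposal is correct and follows essentially the same route as the paper, which derives the Observation directly from the preceding Fact that $\NS$ is characterized (within the no-signaling affine subspace) solely by positivity facets, each saturated exactly when one entry of $\vecP$ vanishes. Your explicit discussion of the relative-boundary reading and the strict-inequality characterization of the relative interior merely makes precise what the paper leaves implicit.
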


\section{When the boundary of $\Q$ meets the boundary of $\NS$}~\label{Sec:Q-NS}

In this section, we proceed to classify the different possibilities of when a correlation $\vecP$ lying on the quantum boundary also sits on the no-signaling boundary. Our classification, inspired by the work of Fritz~\cite{Fritz_2011}, is based on the different number of zeros appearing in the correlation table. Since $\NS$ and $\Q$ are both eight-dimensional in the CHSH Bell scenario, it is worth noting that their common boundary with $k$ zeros is $(8-k)$-dimensional, as noted in~\cite{Rai:PRA:2019}.

For characterizing the common boundary, we prove, independently of~\cite{Rai:PRA:2019}, whether such correlations are necessarily local and provide families of quantum strategies realizing the extremal correlations from each Class. To this end, we start from the five-parameter family of (extremal) quantum strategies given in Fact~\ref{CHSH:ExtremeQ2}. By imposing the $k$ zero constraints of each Class, where $k\in\{1,2,3,4\}$, we then arrive at our $(5-k)$-parameter family of quantum strategies for each Class. Among them, we further identify the one from each Class that maximally violates the CHSH Bell inequality of~\cref{Eq:CHSH}.

Furthermore, let us remind that $\Q$ is the convex hull of its extreme points. It thus follows from Fact~\ref{CHSH:ExtremeQ}  that {\em if no} two-qubit pure state with projective measurements can generate a correlation table having zeros at the designated positions, then the corresponding correlation $\vecP$ must lie outside $\Q$. Similarly, {\em if all} two-qubit states with projective measurements giving rise to $k$ zeros at some designated positions in a correlation table are such that the corresponding $\vecP\in\L$, then there is {\em no} quantum $\vecP$ outside $\L$ giving the same set of zeros in the correlation table. Therefore, for the present purpose, it suffices to restrict to pure qubit strategy in conjunction with projective measurements, even though the set of dimension-restricted quantum correlations is known generally to be concave (see, e.g., Refs.~\cite{Pal:PRA:2009,DW15}).

\subsection{Further signature of local quantum correlations}

With the above observations, we arrive at our first result.
\begin{lemma}\label{lem1}
In the CHSH Bell scenario, if there are two (or more) zeros in the same row (or column) of the correlation table, then the corresponding quantum correlation is local.
\end{lemma}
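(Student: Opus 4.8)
The plan is to lean on the reduction established just above the statement: to decide whether a quantum correlation with a prescribed zero pattern can escape $\L$, it suffices to examine two-qubit pure states measured with rank-one projective observables, since these realize the extreme points of $\Q$ (\cref{CHSH:ExtremeQ,CHSH:ExtremeQ2}). By the relabeling invariance of locality and of the zero pattern, I would assume without loss of generality that the two zeros sit in the row indexed by $y=0,b=0$; the column case then follows by transposing the correlation table (i.e.\ exchanging Alice and Bob), which preserves both membership in $\L$ and the property of having two zeros in a line.

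The crucial simplification is that an entire row of the table depends on only a single one of Bob's measurements. Writing Bob's rank-one projector for the outcome $b=0$ of setting $y=0$ as $\proj{\varphi}$ and introducing the (unnormalized) conditional vector $\ket{\chi}_A := (\mathbb{1}_2\otimes\bra{\varphi}_B)\ket{\psi}$, every entry of this row collapses to the single-party expression $P(a,0|x,0)=\bra{\chi}M^A_{a|x}\ket{\chi}$. Because each $M^A_{a|x}$ is a rank-one projector, a vanishing entry $\bra{\chi}M^A_{a|x}\ket{\chi}=0$ forces $M^A_{a|x}\ket{\chi}=0$, so that $\ket{\chi}$ is a $+1$-eigenvector of the complementary projector $M^A_{1-a|x}$.

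I would then split into cases according to the positions of the two zeros. If $\ket{\chi}=0$—which in particular occurs whenever both zeros lie in the same block, since then $\braket{\chi|\chi}=\bra{\chi}(M^A_{0|x}+M^A_{1|x})\ket{\chi}=0$—the state obeys $\ket{\psi}\in\H_A\otimes\mathrm{span}\{\ket{\varphi^\perp}\}$, hence factorizes into a product state and is trivially in $\L$. If instead $\ket{\chi}\neq0$ while the two zeros lie in different blocks, say $P(a_1,0|0,0)=P(a_2,0|1,0)=0$, then $\ket{\chi}$ is a common $+1$-eigenvector of the rank-one projectors $M^A_{1-a_1|0}$ and $M^A_{1-a_2|1}$; since a rank-one projector on a qubit is fixed by its eigenvector, these projectors coincide, so Alice's observables $A_0$ and $A_1$ are identical up to an outcome relabeling. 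A Bell correlation in which one party effectively possesses a single measurement is always local—one promotes Alice's outcome to shared randomness—so $\vecP\in\L$ here as well. Having two \emph{or more} zeros in the row changes nothing, as any pair already forces one of these conclusions.

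As a sanity check, the same dichotomy can be read off directly from the explicit family of \cref{CHSH:ExtremeQ2}: fixing $A_0=B_0=\sigma_z$, the row $y=0,b=0$ equals $(c_{00}^2,\,c_{10}^2,\,(c_{00}\cos\tfrac{\alpha}{2}+c_{10}\sin\tfrac{\alpha}{2})^2,\,(c_{10}\cos\tfrac{\alpha}{2}-c_{00}\sin\tfrac{\alpha}{2})^2)$, and requiring any two of these to vanish forces either $c_{00}=c_{10}=0$ (a product state) or $\alpha\in\{0,\pi\}$, i.e.\ $A_1=\pm A_0$. The main obstacle I anticipate lies in the cross-block case: one must verify that a shared eigenvector genuinely collapses the two apparently distinct observables into a single one, and then argue cleanly that single-setting correlations are local—going beyond \cref{SufficiencyL}, whose PR-box argument only covers the easier same-block configuration.
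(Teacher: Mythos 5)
Your proposal is correct and follows essentially the same route as the paper's proof: restrict to two-qubit pure states with rank-one projective measurements, observe that one zero in the row annihilates a component of the state (your $\ket{\chi}$ is exactly the paper's surviving coefficient $c_{10}\ketA{1|0}$), and show that a second zero in a different block either forces a product state or forces Alice's two projectors to coincide up to outcome relabeling, whence locality follows from joint measurability (Fine's theorem). The only cosmetic difference is that you dispatch the same-block case via $\ket{\chi}=0\Rightarrow$ product state rather than invoking Observation~\ref{SufficiencyL}, which changes nothing of substance.
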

\begin{proof}
Given Observation~\ref{SufficiencyL}, it remains to consider correlation tables having two zeros in the same row (or column) but different blocks. By relabeling, they take the form of~\cref{tab:4}.
\begin{table}[h]
    \captionsetup{justification=RaggedRight,singlelinecheck=off}
    \centering
    \begin{tabular}{cc|cc|cc|}
     &  & \multicolumn{2}{c|}{$x=0$} & \multicolumn{2}{c|}{$x=1$} \\
     &  & $0$ & $1$~ & $0$ & $1$ \\
    \hline
    \multirow{2}{*}{$y=0$} & $0$ & $0$&  & $0$ &  \\
                             & $1$ & & & & \\
    \hline
    \multirow{2}{*}{$y=1$} & $0$ & & & &  \\
                             & $1$ & & & & \\
    \hline
    \end{tabular}
     \caption{A correlation table with two zeros in the same row but different blocks. }
     \label{tab:4}
\end{table}\\
Then, it suffices to show that all extreme points of $\Q$ having the same two zeros as in Table~\ref{tab:4} are local.

\begin{table*}[]
\small\addtolength{\tabcolsep}{-1pt}
    \captionsetup{justification=RaggedRight}
    \begin{subtable}[h]{0.23\textwidth}
        \centering
        \begin{tabular}{cc|cc|cc|}
         &  & \multicolumn{2}{c|}{$x=0$} & \multicolumn{2}{c|}{$x=1$} \\
         &  & $0$ & $1$~ & $0$ & $1$ \\
        \hline
        \multirow{2}{*}{$y=0$} & $0$ & $0$ &  & & \\
                               & $1$ &  &   & &$0$ \\
        \hline
        \multirow{2}{*}{$y=1$} & $0$ &  &  &$0$ & \\
                               & $1$ & &$0$ & & \\
        \hline
        \end{tabular}
        \caption{Class 4a: All four blocks have one zero each.}
         \label{tab:class4a}
    \end{subtable}
    \quad
    \begin{subtable}[h]{0.23\textwidth}
        \centering
        \begin{tabular}{cc|cc|cc|}
         &  & \multicolumn{2}{c|}{$x=0$} & \multicolumn{2}{c|}{$x=1$} \\
         &  & $0$ & $1$~ & $0$ & $1$ \\
        \hline
        \multirow{2}{*}{$y=0$} & $0$ &  & $0$ & & \\
         & $1$ & $0$ &  & & \\
        \hline
        \multirow{2}{*}{$y=1$} & $0$ & & & & $0$\\
         & $1$ & & &$0$ & \\
        \hline
        \end{tabular}
        \caption{Class 4b: Two non-neighboring blocks have two DZs  each.}
         \label{tab:class4b}
    \end{subtable}
    \quad
    \begin{subtable}[h]{0.23\textwidth}
        \centering
        \begin{tabular}{cc|cc|cc|}
         &  & \multicolumn{2}{c|}{$x=0$} & \multicolumn{2}{c|}{$x=1$} \\
         &  & $0$ & $1$~ & $0$ & $1$ \\
        \hline
        \multirow{2}{*}{$y=0$} & $0$ & $0$ &  & & \\
                               & $1$ &  &   & &$0$ \\
        \hline
        \multirow{2}{*}{$y=1$} & $0$ &  &  & & \\
                               & $1$ & &$0$ & & \\
        \hline
        \end{tabular}
        \caption{Class 3a: Three blocks having one zero each. }
         \label{tab:class3a}
    \end{subtable}
    \quad
    \begin{subtable}[h]{0.23\textwidth}
        \centering
        \begin{tabular}{cc|cc|cc|}
         &  & \multicolumn{2}{c|}{$x=0$} & \multicolumn{2}{c|}{$x=1$} \\
         &  & $0$ & $1$~ & $0$ & $1$ \\
        \hline
        \multirow{2}{*}{$y=0$} & $0$ & $0$ &  & & \\
         & $1$ &  & $0$ & & \\
        \hline
        \multirow{2}{*}{$y=1$} & $0$ & & & & $0$\\
         & $1$ & & & & \\
        \hline
        \end{tabular}
        \caption{Class 3b: One block has two DZs, the remaining zero is in a non-neighboring block.}
         \label{tab:class3b}
    \end{subtable}
    \quad
    \begin{subtable}[h]{0.23\textwidth}
        \centering
        \begin{tabular}{cc|cc|cc|}
         &  & \multicolumn{2}{c|}{$x=0$} & \multicolumn{2}{c|}{$x=1$} \\
         &  & $0$ & $1$~ & $0$ & $1$ \\
        \hline
        \multirow{2}{*}{$y=0$} & $0$ & $0$ & & & \\
         & $1$ &  & $0$ & & \\
        \hline
        \multirow{2}{*}{$y=1$} & $0$ & & & & \\
         & $1$ & & & & \\
        \hline
        \end{tabular}
        \caption{Class 2a: The two zeros appear in the same block at the (anti) diagonal positions.}
	\label{tab:class2a}
    \end{subtable}
    \quad
    \begin{subtable}[h]{0.23\textwidth}
        \centering
        \begin{tabular}{cc|cc|cc|}
         &  & \multicolumn{2}{c|}{$x=0$} & \multicolumn{2}{c|}{$x=1$} \\
         &  & $0$ & $1$~ & $0$ & $1$ \\
        \hline
        \multirow{2}{*}{$y=0$} & $0$ & $0$ &  & & \\
         & $1$ &  &   & & $0$ \\
        \hline
        \multirow{2}{*}{$y=1$} & $0$ &  &  & & \\
         & $1$ & & & & \\
        \hline
        \end{tabular}
        \caption{Class 2b: The two zeros appear in blocks with the same value of $x$ (or $y$).}
         \label{tab:class2b}
    \end{subtable}
    \quad
    \begin{subtable}[h]{0.23\textwidth}
        \centering
        \begin{tabular}{cc|cc|cc|}
         &  & \multicolumn{2}{c|}{$x=0$} & \multicolumn{2}{c|}{$x=1$} \\
         &  & $0$ & $1$~ & $0$ & $1$ \\
        \hline
        \multirow{2}{*}{$y=0$} & $0$ & $0$ &  &  &  \\
         & $1$ &  &  &  &  \\
        \hline
        \multirow{2}{*}{$y=1$} & $0$ & & & & $0$ \\
         & $1$ & & & & \\
        \hline
        \end{tabular}
        \caption{Class 2c: The two zeros appear in non-neighboring blocks. }
        \label{tab:class2c}
    \end{subtable}
    \quad
    \begin{subtable}[h]{0.23\textwidth}
        \centering
        \begin{tabular}{cc|cc|cc|}
         &  & \multicolumn{2}{c|}{$x=0$} & \multicolumn{2}{c|}{$x=1$} \\
         &  & $0$ & $1$~ & $0$ & $1$ \\
        \hline
        \multirow{2}{*}{$y=0$} & $0$ & $0$ & & & \\
         & $1$ &  & &  & \\
        \hline
        \multirow{2}{*}{$y=1$} & $0$ & & & & \\
         & $1$ & & & & \\
        \hline
        \end{tabular}
        \caption{Class 1: There is only zero in the table. }
         \label{tab:class1}
    \end{subtable}
    \caption{Modulo the freedom of relabeling, there are only eight distinct Classes of correlation tables having four or fewer zeros. Classes having more than one zero in a row or column are omitted as they cannot contain nonlocal quantum correlations (see Lemma~\ref{lem1}.)}
    \label{tab:Classes}
\end{table*}

For the realization of these extreme points, we may consider, by Fact~\ref{CHSH:ExtremeQ}, a 2-qubit pure state in conjunction with projective measurements. Note, however, that if any of these POVM elements is full rank, the resulting quantum correlation is necessarily local (see, e.g., Appendix B.3.1 of~\cite{Liang:PhDthesis}). Hence, we consider hereafter only rank-one projective measurements in the orthonormal bases 
\begin{equation}\label{Eq_MeasBases}
\begin{split}
	&\{\ketA{0|x},\ketA{1|x}\}_{x=0,1} \text{ for Alice},\\
	\text{and \ \ } &\{\ketB{0|y},\ketB{1|y}\}_{y=0,1} \text{ for Bob},
\end{split}
\end{equation}
or equivalently, dichotomic observables [\cref{Eq:GeneralObservables}]:
\begin{equation}\label{Eq_Observables}
\begin{split}
	A_x = \proj{e_{0|x}}-\proj{e_{1|x}},\\ B_y = \proj{f_{0|y}}-\proj{f_{1|y}}.
\end{split}
\end{equation}
By the completeness of these bases, we may express a generic bipartite two-qubit pure state as :
\begin{equation}
    \ket{\Psi}=\sum_{a,b=0}^1c_{ab}\ketA{a|0}\!\ketB{b|0}
\end{equation}
where $c_{ab}\in\mathbb{C}$ and $\sum_{a,b} |c_{ab}|^2=1$. To satisfy $P(0,0|0,0)=0$, we must have $c_{00}=0$, then 
\begin{equation}\label{Eq_SimplifiedState}
    \ket{\Psi}=c_{10}\ketA{1|0}\!\ketB{0|0}+c_{01}\ketA{0|0}\!\ketB{1|0}+c_{11}\ketA{1|0}\!\ketB{1|0}.
\end{equation}
Now, either $c_{10}=0$ or $1\ge |c_{10}|>0$. For the former, the quantum state is separable state and the resulting correlation $\vecP$ must be~\cite{Werner:PRA:1989} local. For the latter, the requirement of $P(0,0|1,0)=0$ together with \cref{Eq:Born}, \cref{Eq_MeasBases}, and \cref{Eq_SimplifiedState} imply that $\braket{e_{0|1}|e_{1|0}}=0$.
This means that Alice's two measurements are identical up to a relabeling of the outcome, i.e., they are jointly measurable~\cite{Liang:PRep}. Hence, there exists a joint distribution involving both settings of Alice. By the celebrated result of Fine~\cite{Fine:PRL:1982}, the resulting correlation is again local. Since all extreme points of $\Q$ having the same two zeros are local, so must their convex mixtures, which completes the proof of the Lemma.
\end{proof}
Lemma ~\ref{lem1} implies that a nonlocal quantum correlation {\em cannot} have more than one zero in any row or column, or more than four zeros in total in its correlation table. Note, however, that correlation tables having more zeros may not even be realizable quantum mechanically. For example, a correlation table having more than twelve zeros is nonphysical whereas that having exactly eleven zeros must violate the NS conditions of \cref{Eq:NS}.

Hereafter, we focus on characterizing the boundary of $\Q$ where the corresponding correlation table contains four or less zeros. In particular, we are interested to know when each of these cases contain also nonlocal quantum correlations and whether they lead to self-testing statements. Again, given Lemma~\ref{lem1}, we do not consider Classes having more than one zero in any row or column of the correlation table.

\subsection{Four-zero Classes}

Using the freedom of relabeling, it is easy to see that there remain only two other kinds of four-zero correlation tables that are of interest:
\begin{itemize}[leftmargin=50pt]
    \item[Class 4a:] All blocks have exactly one zero each (\cref{tab:class4a})
    \item[Class 4b:] Two non-neighboring blocks each contains two zeros that lie on the opposite corner (\cref{tab:class4b})
\end{itemize}
    
Interestingly, Fritz already showed in Ref.~\cite{Fritz_2011} that correlations from Class 4a necessarily violate the NS conditions of \cref{Eq:NS}, and are thus not quantum realizable. For Class 4b, we prove in~\cref{app1} that all such $\vecP\in\Q$ must be local. Using Fact~\ref{CHSH:ExtremeQ2}, we find the following one-parameter family of strategies realizing such correlations:
\begin{subequations}\label{Eq:4b}
\begin{gather}
    \ket{\psi} = \cos{\theta}\ket{0}\!\ket{0}+\sin{\theta}\ket{1}\!\ket{1},\\
\begin{split}
    &A_0=\sigma_z,  \quad
    A_1=\cos{2\alpha}\,\sigma_z-\sin{2\alpha}\,\sigma_x, \\
    &B_0=\sigma_z, \quad
    B_1=\cos{2\beta}\,\sigma_z-\sin{2\beta}\,\sigma_x,
\end{split}
\end{gather}
\end{subequations}
where $\sigma_z$ and $\sigma_x$ are Pauli matrices, while $\theta = \pm\frac{\pi}{4}$ in accordance with $\beta=\pm\alpha$. Notice, however, that none of the resulting correlations is extremal in $\Q$ or $\L$.

\subsection{Three-zero Classes}

The remaining possibilities of correlation tables with three zeros fall under---after an appropriate relabeling---two distinct Classes:
\begin{itemize}[leftmargin=50pt]
    \item[Class 3a:]\label{3a} Three blocks having one zero (\cref{tab:class3a})
    \item[Class 3b:]\label{3b} One block has two DZs (ADZs) and a non-neighboring block has another zero  (\cref{tab:class3b})
\end{itemize}

We define the sets of quantum correlations corresponding to these boundary Classes as
\begin{equation}\label{Dfn:Class3}
    \begin{split}
        \Q_{3a}\coloneqq\{\vecP\,|&\,\vecP \in \Q ,\\
        &P(0,0|0,0)=P(1,1|1,0)=P(1,1|0,1)=0 \},\\
        \Q_{3b}\coloneqq\{\vecP\,|&\,\vecP \in \Q ,\\
        &P(0,0|0,0)=P(1,1|0,0)=P(1,0|1,1)=0 \}.
    \end{split}
\end{equation}
Strictly, the definition given above for $\Q_{3b}$ also contains Class $4b$ and other more-zero $\vecP\in\L$.  In this regard, note that it is more convenient to parameterize quantum strategies that include those more-zero Classes as special cases. However, in the following discussions for each Class, it should be understood that our primary interest lies in those correlations where the zeros probabilities are only those explicitly specified. The same remarks also hold for our definition of the two-zero and one-zero Classes given in~\cref{Dfn:Class2} below.

\subsubsection{Class 3a}    

For Class 3a, a  two-parameter family of quantum strategies compatible with the positions of the zeros consists of 
\begin{subequations}\label{Eq:Class2b}
\begin{gather}\label{Eq:HardyState}
    \ket{\psi} = \sin\theta\left(\cos{\alpha}\ket{0}-\sin\alpha\ket{1}\right)\!\ket{1}+\cos\theta\ket{1}\!\ket{0},\\
\begin{split}\label{Eq:2b:Observables}	
    A_0=\sigma_z,  \quad
    A_1=\cos{2\alpha}\,\sigma_z-\sin{2\alpha}\,\sigma_x, \\
    B_0=\sigma_z, \quad
    B_1=\cos{2\beta}\,\sigma_z-\sin{2\beta}\,\sigma_x,
\end{split}
\end{gather}
\end{subequations}
where 
\begin{equation}\label{Eq:3aParameters}	
	\theta = \tan^{-1}{\frac{\tan{\beta}}{\sin{\alpha}}}.
\end{equation} 
If we further set
\begin{equation}\label{Eq:HardyParameter}
	\beta=\alpha=\frac{1}{2} \tan ^{-1}\left(-2 \sqrt{\sqrt{5}+2}\right)
\end{equation}
one finds a {\em symmetric} quantum realization~\cite{Rabelo12} equivalent to the one originally proposed by Hardy~\cite{Hardy1993}.
The resulting correlation $\vecPH\in \Q_{3a}$ reads as:
\begin{equation}\label{Eq:vecPH}
    \vecPH = 
    \scalebox{0.95}{\begin{tabular}{cc|cc|cc|}
         &  & \multicolumn{2}{c|}{$x=0$} & \multicolumn{2}{c|}{$x=1$} \\
         &  & $0$ & $1$~ & $0$ & $1$ \\
        \hline
        \multirow{2}{*}{$y=0$} & $0$ & $0$ & $\frac{1-\nu}{2}$ & $\frac{1-3\nu}{2}$ & $\nu$\\
                               & $1$ & $\frac{1-\nu}{2}$~  & $\nu$  & $\frac{1+\nu}{2}$ &$0$ \\ [1.02ex]
        \hline
        \multirow{2}{*}{$y=1$} & $0$ & $\frac{1-3\nu}{2}$  & $\frac{1+\nu}{2}$ & $\frac{1+\nu}{2}$ & $\frac{1-3\nu}{2}$ \\
                               & $1$ & $\nu$ &$0$  & $\frac{1-3\nu}{2}$~ & $\frac{5\nu-1}{2}$\\ [1.02ex] \hline
        \end{tabular}}~,
\end{equation}
where $\nu = \sqrt{5}-2$.
This particular correlation is known to be non-exposed~\cite{Goh2018}. Moreover, from the result of~\cite{Xiang:2011wb}, we know that for quantum correlations having exactly the zeros shown in~\cref{tab:class3a}, its maximum achievable CHSH violation is $\SCHSH=10(\sqrt{5}-2)\approx2.36068$. 

Furthermore, $\vecPH$ has been shown~\cite{Rabelo12} to provide a self-test for the partially entangled two-qubit pure state defined in \cref{Eq:Class2b} -- \cref{Eq:HardyParameter}. Using the SWAP method of~\cite{Yang14,Bancal15}, we find that $\vecPH$ {\em also} self-tests the measurements given in the same equations. In fact, this {\em numerical} method allows us to understand how robust these self-testing results are with respect to deviations from the ideal situation. In~\cref{fig:selftest:class3a}, we show the fidelity lower bound with respect to the reference state of~\cref{Eq:HardyState} when the observed CHSH Bell-inequality violation is suboptimal and/or the zero probability constraints are not strictly enforced. Analogous robustness results for self-testing the target observables are also shown in~\cref{fig:selftest:class3a}.

More recently, a different two-parameter family of correlations from this Class have been provided in~\cite{Rai:PRA:2022} (see also~\cite{Jordan:PRA:1994}). Intriguingly, each correlation from this family again self-tests a partially entangled two-qubit state. Upon a closer inspection, one finds that their family of correlations are in fact equivalent to the ones that result from \cref{Eq:Class2b} and \cref{Eq:3aParameters}. To obtain our family of correlations $\vecP$ from theirs, it suffices to identify their parameters $r$ and $s$ with ours via:
\begin{equation*}
	r= \frac{3-\cos{2\alpha}-2\cos^2{\alpha}\cos{2\beta}}{4},~
	s=\frac{\csc^2{\alpha}\tan^2{\beta}}{1+\csc^2{\alpha}\tan^2{\beta}}
\end{equation*}
followed by the three steps of relabeling: 
(1) $A_1\mapsto -A_1$ and $B_1\mapsto -B_1$ (2) $A_0\leftrightarrow A_1$ and $B_0\leftrightarrow B_1$
(3) swapping Alice and Bob.

\begin{figure}[h!tbp]
\centering
    \captionsetup{justification=RaggedRight,singlelinecheck=off}
  \includegraphics[width=0.9\linewidth]{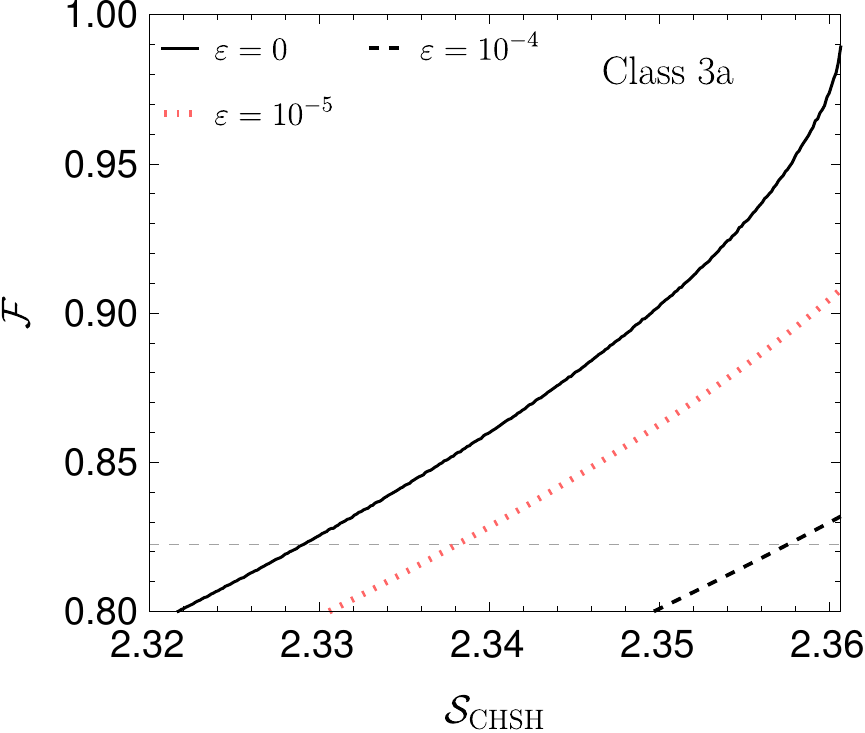}\vspace{0.5cm}
  \includegraphics[width=0.9\linewidth]{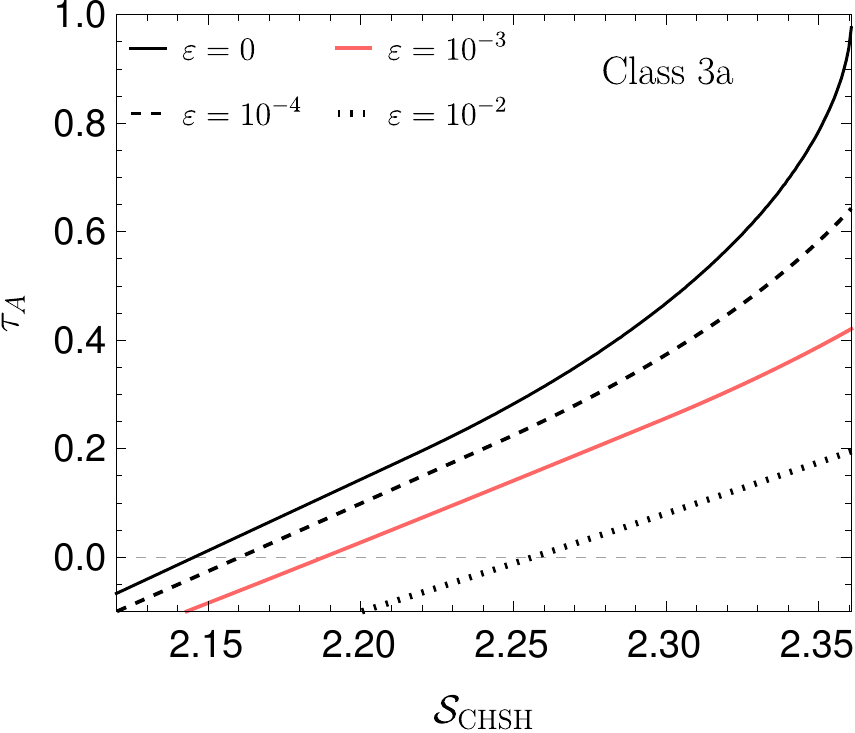}  
  \caption{Plots illustrating the robustness of the self-testing result pertinent to the correlation $\vecPH$ of \cref{Eq:vecPH} from Class 3a. The top and the lower plot show as a function of the Bell value $\SCHSH$, respectively, a lower bound on the relevant figure of merit for the self-testing of the quantum strategy given in \cref{Eq:Class2b} -- \cref{Eq:HardyParameter}. For the self-testing of state, we use the fidelity with respect to the reference state, \cref{Eq:Fidelity}, as the figure of merit. Meanwhile for the self-testing of the measurements, we use the figure of merit defined in \cref{Robust_ST_Meas}. Notice that due to the symmetry in Alice's and Bob's reference measurements, it suffices to show the plot for Alice. Here and below, $\varepsilon$ represents the allowed deviation  from the required zero probability ($\NS$ boundary). Throughout, we use the level-3 outer approximation of $\Q$ due to~\cite{Moroder13} in the fidelity (top) plot but the level-4 outer approximation for the others. Each dashed line marks the respective value for the figure of merit achievable by a classical strategy, i.e., only the region above the horizontal line could give nontrivial self-testing. For details about these figures of merit, see~\cref{App:robust_self-test}.}
  \label{fig:selftest:class3a}
\end{figure}

\subsubsection{Class 3b}    

For Class 3b, a two-parameter family of quantum examples is obtained by considering the two-qubit strategies 
\begin{subequations}\label{Eq:Class3b}
\begin{gather}
\label{Eq:3b:State}
    \ket{\psi} = \cos{\theta}\ket{0}\!\ket{1}+\sin{\theta}\ket{1}\!\ket{0},\\
\label{Eq:3b:Observables}
\begin{split}
    A_0=\sigma_z,\quad
    A_1=\cos{2\alpha}\,\sigma_z-\sin{2\alpha}\,\sigma_x, \\
    B_0=\sigma_z,\quad
    B_1=-\cos{2\beta}\,\sigma_z-\sin{2\beta}\,\sigma_x,
\end{split}
\end{gather}
\end{subequations}
where 
\begin{equation}\label{Eq:3b:Constraint}
	\theta= \tan^{-1}\frac{\tan\alpha}{\tan\beta}.
\end{equation}

In \cref{App:MaxCHSH-Q3b}, we prove that among all quantum correlations from this Class (see \cref{tab:class3b}), the quantum strategy of \cref{Eq:Class3b} with
\begin{subequations}\label{Eq:3b:Parameters}
\begin{equation}\label{Eq:Parameters}
        \alpha = \tan^{-1}\sqrt{\tan{\theta}},\quad \beta = \frac{\pi}{2}-\alpha,
\end{equation}
and
\begin{equation}\label{Eq:theta-tau}
        \theta = \tan^{-1}\tfrac{1}{3}\left( -1-\tfrac{2}{\tau}+\tau \right),\quad \tau  = \sqrt[3]{17+3\sqrt{33}}\\
\end{equation}
\end{subequations}

gives the maximal CHSH  violation 
\begin{equation}\label{Eq:MaxCHSH:Class3b}
	\SCHSH=4 - 4 (2\kappa_1 + \kappa_2)\approx 2.26977. 
\end{equation}
Moreover, the corresponding quantum strategy can be self-tested using the correlation (see \cref{App:Self-test:Class3b} for a proof):

\begin{subequations}\label{Eq:vecPQ}
\begin{equation}
    \vecP_{\text{Q}} := 
    \begin{tabular}{cc|cc|cc|}
         &  & \multicolumn{2}{c|}{$x=0$} & \multicolumn{2}{c|}{$x=1$} \\
         &  & $0$ & $1$~ & $0$ & $1$ \\
        \hline
        \multirow{2}{*}{$y=0$} & $0$ & $0$ & $\frac{1}{2}-\kappa_2$ & $\kappa_1$ & $\kappa_3$\\
                               & $1$ & $\frac{1}{2}+\kappa_2$  & $0$  & $\frac{1}{2}$ &$\kappa_2$ \\ [1.02ex]
        \hline
        \multirow{2}{*}{$y=1$} & $0$ & $\kappa_2$  & $\kappa_3$ & $\frac{1}{2}-\kappa_1$ & $0$ \\
                               & $1$ & $\frac{1}{2}$ &$\kappa_1$  & $2\kappa_1$& $\frac{1}{2}-\kappa_1$\\ [1.02ex]
        \hline
        \end{tabular}
\end{equation}
where 
\begin{equation}\label{Eq_kappa}
\begin{split}
	\kappa_1 &= \frac{1}{2}\left(\frac{\tau^2-\tau-2}{2\tau}\right)^3\approx 0.0804,\\
	\kappa_2 &= \frac{1-\cos(2\tan^{-1}\sqrt{\tan\theta})}{2\sec^2\theta}\approx 0.2718,\\
	\kappa_3 &= \frac{1-2\kappa_1-2\kappa_2}{2}.
\end{split}
\end{equation}
\end{subequations}
The robustness of these self-testing results is illustrated in~\cref{fig:selftest:class3b}. The correlation of \cref{Eq:vecPQ} is also provably non-exposed (see~\cref{App:NonExposed:3b}). 
More generally, for the one-parameter strategies of \cref{Eq:Class3b} and \cref{Eq:3b:Constraint} with $\theta\in[0.0338\pi,0.2350\pi]$, our numerical results based on the SWAP method~\cite{Yang14,Bancal15} give a lower bound of at least 0.99 for both figures of merit, \cref{Eq:Fidelity} and \cref{Robust_ST_Meas}, associated with self-testing. 
 This observation strongly suggests that the resulting correlation also provides a self-test for the corresponding quantum strategies. Also worth noting is that $\vecP$ in this one-parameter family only violates inequality~\cref{Eq:CHSH} for $\theta\in(0,\frac{\pi}{4})$ but for $\theta\in(\frac{\pi}{4},\frac{\pi}{2})$, it violates the one corresponding to the winning condition $(x\oplus1)(y\oplus1)=a\oplus b$.

Before moving to other Classes with fewer zeros, we note that the non-exposed nature of $\vecP_\text{Hardy}$ and $\vecPQ$ can be illustrated using a projection plot (see also Figure 5 of~\cite{Goh2018}). 
\cref{fig:1sub1} and~\cref{fig:1sub2} show, respectively, a 2-dimensional projection plot of  $\NS$, $\L$, and an outer approximation of $\Q$ (due to~\cite{Moroder13}) on a plane with its axes given by $\SCHSH$ and a linear sum of probabilities that are expected to vanish for the two Classes.\footnote{To determine the said outer approximation of the upper (lower) boundary of $\Q$ on these plots, we compute, for each {\em fixed} value of the Bell function specified on the horizontal axis, the maximum (minimum) CHSH value over all $\vecP\in\tilde{\Q}_\ell$ where $\tilde{\Q}_\ell$ is the level-$\ell$ approximation of $\Q$ due to~\cite{Moroder13}.} 
For comparisons, we also mark on these plots the position of $\vecP_\text{PR}$ [\cref{Eq:vecPR}], the relabeled ($a=0\leftrightarrow a=1$) version of $\vecP_\text{PR}$ denoted by $\vecP_\text{PR,2}$, and their noisy versions giving the maximal CHSH violation in quantum theory:
\begin{equation}\label{Eq:3vecP}
\begin{split}
    \vecP_\text{PR,2}&\overset{a=0\leftrightarrow a=1}{=} \vecP_\text{PR},\\
    \vecP_{\text{CHSH}}&=\frac{1}{\sqrt{2}}\vecP_\text{PR}+(1-\frac{1}{\sqrt{2}})\vecP_{\mathbb{I}},\\
    \vecP_{\text{CHSH,2}}&=\frac{1}{\sqrt{2}}\vecP_\text{PR,2}+(1-\frac{1}{\sqrt{2}})\vecP_{\mathbb{I}},
\end{split}
\end{equation}
where $\vecP_{\mathbb{I}}$ is the uniform distribution, i.e., $P_{\mathbb{I}}(a,b|x,y)=\frac{1}{4}$ for all $a,b,x,y$.

\begin{figure}[h!tbp]
\captionsetup{justification=RaggedRight,singlelinecheck=off}
\centering
\begin{subfigure}{0.45\textwidth}
\centering
  \includegraphics[width=0.9\linewidth]{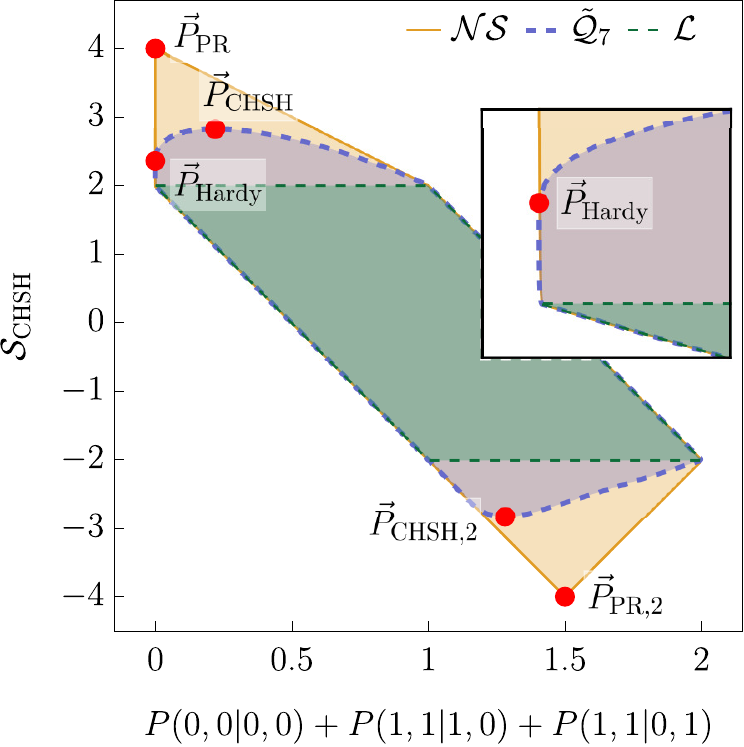}
  \caption{A projection plot illustrating $\Q_{3a}$, which consists of all quantum correlations satisfying $P(0,0|0,0)+P(1,1|1,0)+P(1,1|0,1)=0$. These correlations lie on the vertical line where the value of the horizontal axis is zero. Among them, $\vecPH$ specified in \cref{Eq:vecPH} gives the maximal Bell violation. 
  }
  \label{fig:1sub1}
\end{subfigure}
\begin{subfigure}{0.45\textwidth}
\centering
  \includegraphics[width=0.9\linewidth]{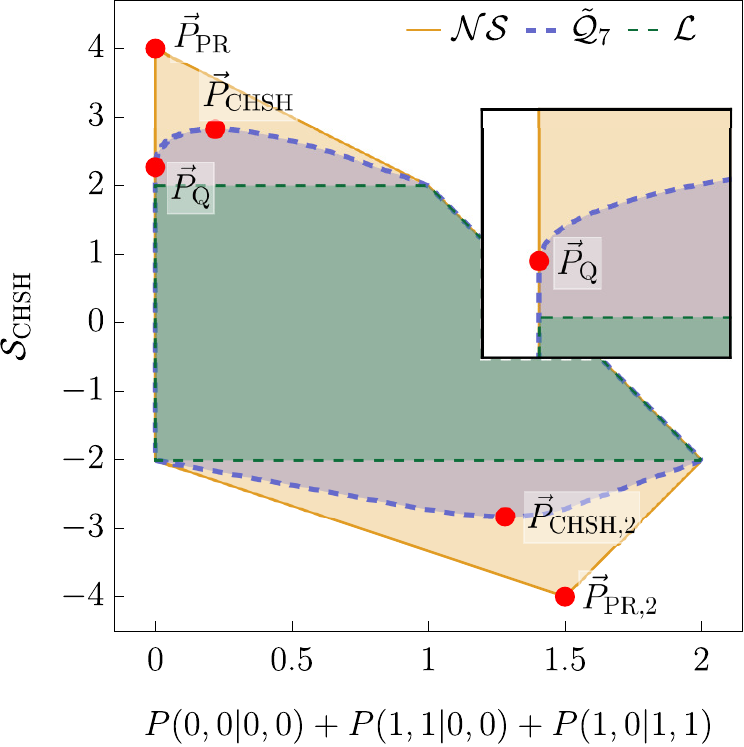}
  \caption{A projection plot illustrating $\Q_{3b}$, which consists of all quantum correlations satisfying $P(0,0|0,0)+P(1,1|0,0)+P(1,0|1,1)=0$. These correlations lie on the vertical line where the value of the horizontal axis is zero. Among them, $\vecPQ$ specified in \cref{Eq:vecPQ} gives the maximal Bell violation and is a self-test for the quantum strategy of \cref{Eq:Class3b} with \cref{Eq:3b:Parameters}. }
  \label{fig:1sub2}
\end{subfigure}
\caption{Two-dimensional projections of $\NS$, $\L$, and an outer approximation~\cite{Moroder13} of $\Q$ onto the plane labeled by the CHSH value $\SCHSH$, cf.~\cref{Eq:CHSH}, and some linear sum of probabilities. When the horizontal parameter is equal to zero, we recover the configurations specified, respectively, in the correlation tables of~\cref{tab:class3a} and ~\cref{tab:class3b}.  The insets illustrate, correspondingly, the non-exposed nature of both $\vecPH$ and $\vecPQ$. The explicit form of the other $\vecP$'s shown are given in \cref{Eq:vecPR} and \cref{Eq:3vecP}.
}\label{fig:1}
\end{figure}

\subsection{Classes having less then three zeros} \label{sec3c}

By exploiting the freedom of relabeling, it is easy to see that there remain three two-zeros Classes and a one-zero Class that can be described as follows:
\begin{itemize}[leftmargin=55pt]
    \item[Class 2a:] There are two DZs (or ADZs) in the same block (\cref{tab:class2a})
    \item[Class 2b:] The two zeros appear in two neighboring blocks having the same value of $x$ (or $y$) (\cref{tab:class2b})
    \item[Class 2c:] The two zeros appear in two non-neighboring blocks (\cref{tab:class2c})
    \item[Class 1~~:] There is only one zero somewhere in the table (\cref{tab:class1})
\end{itemize}
Accordingly, the sets of quantum correlations exhibiting these structures of zeros are defined\footnote{See the remarks given in the paragraph immediately after \cref{Dfn:Class3}.} as
\begin{align}\label{Dfn:Class2}
        \Q_{2a}\coloneqq\{\vecP\,|&\,\vecP \in \Q,  P(0,0|0,0)=P(1,1|0,0)=0 \},\nonumber\\
        \Q_{2b}\coloneqq\{\vecP\,|&\,\vecP \in \Q,  P(0,0|0,0)=P(1,1|1,0)=0 \},\nonumber\\
        \Q_{2c}\coloneqq\{\vecP\,|&\,\vecP \in \Q, P(0,0|0,0)=P(1,0|1,1)=0 \},\nonumber\\
        \Q_{1}\coloneqq\{\vecP\,|&\,\vecP \in \Q,  P(0,0|0,0)=0 \}.
\end{align}

\subsubsection{Class 2a}
\label{Sec:Class2a}

Interestingly, with \cref{Eq:Class3b}
, we immediately get a three-parameter family of strategies for realizing quantum correlations in Class 2a. With the change of basis via  $\sigma_z\otimes\sigma_x$, we arrive at
\begin{subequations}\label{Eq:Class2a}
\begin{gather}
\label{Eq:2a:State}
    \ket{\psi} = \cos{\theta}\ket{0}\!\ket{0}-\sin{\theta}\ket{1}\!\ket{1},\\
    \label{Eq:2a:Observables}
 \begin{split}
    A_0=\sigma_z,\quad
    A_1=\cos{2\alpha}\,\sigma_z+\sin{2\alpha}\,\sigma_x, \\
    B_0=-\sigma_z,\quad
    B_1=\cos{2\beta}\,\sigma_z -\sin{2\beta}\,\sigma_x.
 \end{split}
\end{gather}
Together with the choice of 
\begin{equation}
	\theta=-\frac{\pi}{4},\quad \alpha = \frac{5\pi}{6},\quad \text{and}\quad \beta = -\frac{2\pi}{3},
\end{equation}
\end{subequations}
which corresponds to the Bell state $\ket{\Phi_2^+}$, we get
\begin{equation}\label{Eq:PQ2}
    \vecPQii := ~
    \begin{tabular}{cc|cc|cc|}
         &  & \multicolumn{2}{c|}{$x=0$} & \multicolumn{2}{c|}{$x=1$} \\
         &  & $0$ & $1$~ & $0$ & $1$ \\
        \hline
        \multirow{2}{*}{$y=0$} & $0$ & $0$ & $\frac{1}{2}$ & $\frac{1}{8}$ & $\frac{3}{8}$\\
                               & $1$ & $\frac{1}{2}$  & $0$  & $\frac{3}{8}$ &$\frac{1}{8}$ \\ [1.02ex]
        \hline
        \multirow{2}{*}{$y=1$}& $0$ & $\frac{1}{8}$ & $\frac{3}{8}$  & $\frac{3}{8}$ & $\frac{1}{8}$ \\
                               & $1$&$\frac{3}{8}$& $\frac{1}{8}$  & $\frac{1}{8}$& $\frac{3}{8}$\\ [1.02ex]
        \hline
        \end{tabular}~,
\end{equation}
which gives the maximal CHSH value  $\SCHSH=2.5$ among all  $\vecP\in\Q_{2a}$. The proof is completely analogous to that given in \cref{App:MaxCHSH-Q3b} for $\vecP\in\Q_{3b}$ and is thus omitted.\footnote{In this case, the fact that $\SCHSH\le2.5$ also follows directly from Theorem B.2 of~\cite{Quintino:JPA:2012}.}

Note that the correlation $\vecPQii$ satisfies $\Exp{A_0B_0} = -1$, $\max\{|\Exp{A_1B_1},|\Exp{A_0B_1}|,|\Exp{A_1B_0}|\} < 1$, and saturates the inequality:
\begin{equation}
\begin{split}
	&2\Exp{A_1B_1}\Exp{A_1B_0}\Exp{A_0B_1}+\Exp{A_1B_1}^2\\
	+&\Exp{A_1B_0}^2 + \Exp{A_0B_1}^2 \le 1.
\end{split}
\end{equation}
Hence, from~\cite{Wang2016} (see also the paragraphs below Eq. (23) of~\cite{Goh2018}), the nonlocal correlation $\vecPQii$ self-tests~\cite{Supic19} the Bell state $\ket{\Phi_2^+}$ and  is an extreme point. The robustness of this self-testing result and that for the quantum observables of \cref{Eq:Class2a} are illustrated in \cref{fig:selftest:class2a}. Indeed, by following an analysis similar to those presented in~\cref{app-st2}, it can be shown that $\vecPQii$ also self-tests \cref{Eq:Class2a}. More generally, for the one-parameter family of quantum strategies defined by \cref{Eq:2a:State} -- \cref{Eq:2a:Observables}, $\theta=-\frac{\pi}{4}$, $\beta=-\alpha+\frac{\pi}{6}$, and $\alpha\in(0,\frac{\pi}{6})\cup(\frac{\pi}{6}, \frac{\pi}{2})\cup(\frac{\pi}{2},\frac{2\pi}{3})\cup(\frac{2\pi}{3}, \pi)$, the resulting correlations are easily verified to be nonlocal and self-test a Bell pair since they satisfy the same set of sufficient criteria given above. Also worth noting is that for $\alpha\in[0.0081\pi,0.1513\pi]\cup[0.1822\pi,0.4895\pi]\cup[0.5113\pi,0.6598\pi]\cup[0.6752\pi, 0.9908\pi]$, our numerical results based on the SWAP method~\cite{Yang14,Bancal15} give a lower bound of 0.99 for the figure of merit associated with measurement-self-testing, \cref{Robust_ST_Meas}.
In~\cref{Non-Exposed_Class.2a}, we further show that $\vecPQii$ is non-exposed.

\subsubsection{Class 2b}

For Class 2b, a three-parameter family of quantum realization can be obtained using the strategy specified in \cref{Eq:Class2b}.

For $\theta=\frac{\pi}{4}$, the resulting correlations with $\beta=\frac{\alpha}{2},\pi-\frac{\alpha}{2}$ are local. However, if we set 
\begin{equation}\label{Eq:2b:Parameters}
	\alpha = \frac{\pi}{6}\quad \text{and}\quad \beta =\theta=\frac{\pi}{4},
\end{equation} 
we get
\begin{equation}\label{Eq:PQ3}
    \vecPQiii := ~
    \begin{tabular}{cc|cc|cc|}
         &  & \multicolumn{2}{c|}{$x=0$} & \multicolumn{2}{c|}{$x=1$} \\
         &  & $0$ & $1$~ & $0$ & $1$ \\
        \hline
        \multirow{2}{*}{$y=0$} & $0$ & $0$ & $\frac{1}{2}$ & $\frac{1}{8}$ & $\frac{3}{8}$\\
                               & $1$ & $\frac{3}{8}$  & $\frac{1}{8}$  & $\frac{1}{2}$ &$0$ \\ [1.02ex]
        \hline
        \multirow{2}{*}{$y=1$}& $0$ & $\frac{3}{16}$ & $\frac{9}{16}$  & $\frac{9}{16}$ & $\frac{3}{16}$ \\
                               & $1$&$\frac{3}{16}$& \hl{$\frac{1}{16}$}  & $\frac{1}{16}$& \hl{$\frac{3}{16}$}\\ [1.02ex]
        \hline
        \end{tabular}~,
\end{equation}
which we show in~\cref{App:MaxCHSH-Q2b} to give the maximal CHSH value of $\SCHSH=2.5$ among all $\vecP\in\Q_{2b}$.

Notice that a generalization of Hardy's argument~\cite{Hardy1993} can be given~\cite{NewArgument} for nonlocal correlations in Class 2b using
\begin{equation}
    \begin{split}
        &P(0,0|0,0)=0,\quad
        ~P(1,1|1,0)=0,\\
        &P(1,1|0,1)=q_1,\quad
        P(1,1|1,1)=q_2,
    \end{split}
\end{equation}
and a success probability defined by $D=q_2-q_1$, i.e., the difference between the two highlighted entries in \cref{Eq:PQ3}. This argument has a different logical structure from the so-called Cabello's argument formulated in~\cite{Liang:2005vl} (inspired by~\cite{Cabello02}, see also~\cite{KCA+06}). In particular, it results in an even higher maximal success probability of $D=0.125$. To this end, notice from \cref{normalization} and \cref{Eq:NS} that we can rewrite the left-hand-side of \cref{Eq:CHSH} as $\SCHSH=4D+2-P(0,0|0,0)-P(1,1|1,0)$. Since the last two terms vanish for $\vecP \in \Q_{2b}$, we thus see that for such correlations, the CHSH violation and the success probability are related linearly  by $\SCHSH=4D+2$. This implies that the maximal value of $D$ can be achieved only by $\vecPQiii$.

In ~\cref{App:Self-test:Class2b}, we show further that $\vecPQiii$ self-tests  the quantum strategy of \cref{Eq:Class2b} and \cref{Eq:2b:Parameters}. 
The robustness of this self-testing result is illustrated in \cref{fig:selftest:class2b}. 
More generally, for the one-parameter family strategies of \cref{Eq:Class2b} with $\alpha = \frac{\pi}{6}$ and $\beta =\theta\in[0.0510,0.4614\pi]$, our numerical results obtained from the SWAP method~\cite{Yang14,Bancal15} give a lower bound of at least 0.99 for both figures of merit, \cref{Eq:Fidelity} and \cref{Robust_ST_Meas}, associated with self-testing. Again, this observation suggests that the resulting correlation also provides a self-test for the respective quantum strategies.

In~\cref{Non-Exposed_Class.2b}, we also show that $\vecPQiii$ is non-exposed. For the advantage of this nonlocality-without-inequality argument over the others, see \cite{NewArgument} by some of the present authors.

\subsubsection{Class 2c}

For Class 2c, a three-parameter family of quantum strategies is given by:
\begin{subequations}\label{Eq:Class1-0}
\begin{gather}
    \ket{\psi} = \cos{\phi}(\cos{\theta}\ket{0}\!\ket{1}\!+\!\sin{\theta}\ket{1}\!\ket{0})\!+\!\sin{\phi}\ket{1}\!\ket{1},\\
\begin{split}
    A_0=\sigma_z,  \quad
    A_1=\cos{2\alpha}\,\sigma_z-\sin{2\alpha}\,\sigma_x, \\
    B_0=\sigma_z, \quad
    B_1=\cos{2\beta}\,\sigma_z-\sin{2\beta}\,\sigma_x,
\end{split}
\end{gather}
\end{subequations}
where 
\begin{equation}\label{Eq:phi}
	\phi = \tan^{-1}\left(\frac{\sin\theta}{\tan\beta}-\tan\alpha\cos\theta\right).
\end{equation}
{\noindent}Interestingly, the distribution of zeros in this Class coincides with that required for the so-called Cabello's argument. Moreover, the following quantum correlation~\cite{KCA+06} in $\Q_{2c}$ 
\begin{equation}\label{Eq:PCabello}
    \vecPC \approx
    \scalebox{0.9}{\begin{tabular}{cc|cc|cc|}
         &  & \multicolumn{2}{c|}{$x=0$} & \multicolumn{2}{c|}{$x=1$} \\
         &  & $0$ & $1$~ & $0$ & $1$ \\
        \hline
        \multirow{2}{*}{$y=0$} & $0$ & $0$ & $0.2770$ & $0.1444$ & $0.1326$\\
                               & $1$ & $0.6410$  & $0.0820$  & $0.5786$ &$0.1444$ \\
        \hline
        \multirow{2}{*}{$y=1$}& $0$ & $0.3068$ & $0.3342$  & $0.6410$ & $0$ \\
                               & $1$ & $0.3342$& $0.0247$  & $0.0820$& $0.2770$\\
        \hline
        \end{tabular}}
\end{equation}
gives the maximal probability of success for this argument. Again, from~\cite{Xiang:2011wb}, we know that this also gives the maximal CHSH violation of $\SCHSH\approx 2.4312$ among all $\vecP\in\Q_{2c}$. 

The correlation $\vecPC$ can be obtained using \cref{Eq:Class1-0} with  
\begin{subequations}\label{Eq:2c:AllParameters}
\begin{equation}\label{Eq:2c:Parameters}
	\beta=\frac{\pi}{2}-\alpha \quad\text{and}\quad \theta = \tan^{-1}{\left ( \frac{k_2^2-k_1k_3}{k_2^2+k_3^2}\right )},
\end{equation}
where 
\begin{equation}\label{Eq:k}
    \begin{split}
        k_1 &= \tfrac{1}{6}\left(4-\tfrac{\mu_1^2+1}{\mu_1}\right),\quad k_2  = \sqrt{1-\tfrac{31 \sqrt[3]{36}}{12\mu_2}+\tfrac{\sqrt[3]{6}\mu_2}{12}},\\
        k_3 &= \tfrac{1}{6}\left(\tfrac{\mu_3^2-29}{\mu_3}-2\right),\quad
        \alpha =\tan^{-1}\sqrt{\tfrac{\mu_+ + \mu_--1}{12}},
    \end{split}
\end{equation}
\end{subequations}
$\mu_1=\sqrt[3]{53-6\sqrt{78}}$, $\mu_2=\sqrt[3]{67\sqrt{78}-414}$, $\mu_3=\sqrt[3]{307+39\sqrt{78}}$, and $\mu_\pm = \sqrt[3]{359\pm12\sqrt{78}}$.

Recently, $\vecPC$ was shown~\cite{Rai:PRA:2021} to self-test the partially entangled state:
\begin{subequations}\label{Eq:Class2c}
\begin{equation}\label{Eq:CabelloState}
    \ket{\psi} = k_{1}\ket{0}\!\ket{0}+k_{2}\ket{0}\!\ket{1}+k_{2}\ket{1}\!\ket{0})+k_{3}\ket{1}\!\ket{1},
\end{equation}
which may be obtained from \cref{Eq:Class1-0} via the local unitary $i \sigma_y\otimes(\sin{\beta}\,\mathbb{1}+i\cos{\beta}\,\sigma_y)$. For completeness, note that the same change of basis results in the following observables:
\begin{equation}\label{Eq:2c:Observables}
\begin{split}
    A_0=-\sigma_z,\quad
    A_1=-\cos{2\alpha}\,\sigma_z+\sin{2\alpha}\,\sigma_x, \\
    B_0=\cos{2\alpha}\,\sigma_z-\sin{2\alpha}\,\sigma_x,\quad
    B_1=-\sigma_z,
\end{split}
\end{equation}
\end{subequations}
Using the SWAP method of~\cite{Yang14,Bancal15}, we can further show that $\vecPC$  self-tests the measurements of~\cref{Eq:2c:Observables} with the parameters specified in \cref{Eq:k}.
The robustness of this self-testing result is illustrated in \cref{fig:selftest:class2c}.
More generally, for the one-parameter family of strategies defined by \cref{Eq:Class1-0}, with $\theta$ given in \cref{Eq:2c:Parameters}, $\beta = \frac{\sin\theta-\cos\theta}{\tan\phi}$, $\alpha=\frac{\pi}{2}-\beta$, and  $\phi \in[0.0034\pi,0.1208\pi]$, our numerical results based on the SWAP method give a lower bound of at least 0.99 for both figures of merit, \cref{Eq:Fidelity} and \cref{Robust_ST_Meas}, associated with self-testing, strongly suggesting that these correlations are self-tests.
In~\cref{Non-Exposed_Class.2c} we further show that $\vecPC$ is non-exposed. 

\subsubsection{Class 1}

Finally, for Class 1, a four-parameter family of quantum realization is specified by \cref{Eq:Class1-0}. Among them, one obtains the following permutationaly-invariant strategy by setting $\theta\to\frac{\pi}{4}$, $\phi\to\frac{\pi}{2}-\theta$, $\alpha\to - \frac{\alpha}{2}$, $\beta\to - \frac{\alpha}{2}$:
\begin{subequations}\label{Eq:Class1}
\begin{gather}\label{Eq:StateClass1}
    \ket{\psi} = \sin{\theta}\left(\frac{\ket{0}\!\ket{1}+\ket{1}\!\ket{0}}{\sqrt{2}}\right)+\cos{\theta}\ket{1}\!\ket{1},\\
\label{Eq:1:Observables}
        A_1 = B_1 = \sigma_z, \quad
        A_2 = B_2 = \cos{\alpha} \,\sigma_z + \sin{\alpha}\,\sigma_x.
\end{gather}
Of particular interest is the choice of
\begin{equation}
	\theta =  \cos^{-1}(\sqrt{\xi_{2}}) \approx 0.4267\pi,\quad \alpha = \cos^{-1}(\xi_1)\approx 0.4076\pi,
\end{equation} 
\end{subequations}
with $\xi_1 \approx 0.2862, \xi_2 \approx 0.0521$, and $\xi_3 \approx 2.6353$ being, respectively, the smallest positive roots of the cubic polynomials:
\begin{align}
    p_1(x) &= x^3-9 x^2-x+1, \nonumber \\
    p_2(x) &= 7 x^3-35 x^2+21 x-1, \label{Eq:Poly}\\ 
    p_3(x) &= x^3+26 x^2-36 x-104. \nonumber
\end{align}
 
This quantum strategy gives the correlation
\begin{equation}\label{Eq:PQ4}
    \vecPQiv \approx ~
    \scalebox{0.9}{\begin{tabular}{cc|cc|cc|}
         &  & \multicolumn{2}{c|}{$x=0$} & \multicolumn{2}{c|}{$x=1$} \\
         &  & $0$ & $1$~ & $0$ & $1$ \\
        \hline
        \multirow{2}{*}{$y=0$} & $0$ & $0$ & $0.4740$ & $0.1692$ & $0.3048$\\
                               & $1$ & $0.4740$  & $0.0521$  & $0.4740$ &$0.0521$ \\
        \hline
        \multirow{2}{*}{$y=1$}& $0$ & $0.1692$ & $0.4740$  & $0.5492$ & $0.0939$ \\
                               & $1$ & $0.3048$& $0.0521$  & $0.0939$& $0.2630$\\
        \hline
        \end{tabular}}~,
\end{equation}
which provably gives the maximal CHSH violation of $\SCHSH = \xi_3$ among all $\vecP\in\Q_1$. In~\cref{App:Self-test:Class1}, we prove that $\vecPQiv$ can be used to self-test the strategy of \cref{Eq:Class1}. The robustness of this  self-testing result is shown in \cref{fig:selftest:class1}. In~\cref{Non-Exposed_Class.1}, we show further that $\vecPQiv$ is non-exposed.

In \cref{tab:SummaryReduction}, we summarize how our examples of quantum strategies leading to fewer zeros reduce to those with more zeros when additional constraints are imposed.

\begin{table*}
\small\addtolength{\tabcolsep}{-5pt}
          \captionsetup{justification=RaggedRight}
    \scalebox{0.92}{\begin{tabular}{|c|c|c||c|c||c|}
    \hline
         Classes & Examples & Parameters & Classes   & Examples & Additional constraint(s)/ transformation \\ \hline\hline
        1 (\cref{tab:class1}) & \cref{Eq:Class1-0} & 4 & 2a (\cref{tab:class2a}) &  \cref{Eq:Class3b} & $\phi\to0$ and $2\beta\mapsto \pi-2\beta$    \\ \hline 
        1 (\cref{tab:class1}) & \cref{Eq:Class1-0} & 4 &2b (\cref{tab:class2b}) & \cref{Eq:Class2b} &  $\theta \mapsto \tan^{-1}\frac{\cot{\theta}}{\cos{\alpha}} $ and $\phi \mapsto \sin^{-1}{\left( -\sin{\theta}\sin{\alpha} \right)}$
        \\ \hline 
        1 (\cref{tab:class1}) & \cref{Eq:Class1-0} & 4 &2c (\cref{tab:class2c}) & \cref{Eq:Class1-0} + \cref{Eq:phi} &  \cref{Eq:phi}\\ \hline 
        2a (\cref{tab:class2a}) & \cref{Eq:Class3b} & 3 & 3b (\cref{tab:class3b}) & \cref{Eq:Class3b} + \cref{Eq:3b:Constraint} & \cref{Eq:3b:Constraint}\\ \hline 
        2b (\cref{tab:class2b}) & \cref{Eq:Class2b} & 3 & 3a (\cref{tab:class3a}) &  \cref{Eq:Class2b} + \cref{Eq:3aParameters} & \cref{Eq:3aParameters}\\ \hline 
        2c (\cref{tab:class2a}) & \cref{Eq:Class1-0} + \cref{Eq:phi} & 3 & 3b (\cref{tab:class3b}) & \cref{Eq:Class3b} + \cref{Eq:3b:Constraint} & $\phi\mapsto 0$ and $2\beta\mapsto \pi-2\beta$\\ \hline 
        3b (\cref{tab:class3b}) & \cref{Eq:Class3b} + \cref{Eq:3b:Constraint} & 2 & 4b (\cref{tab:class4b}) & \cref{Eq:4b} & $B_0\mapsto -B_0$ followed by $\mathbb{1}_2\otimes\sigma_x$ and $\beta\mapsto\pm\alpha$\\ \hline 
\end{tabular}}
    \caption{\label{tab:SummaryReduction} Summary of how our multi-parameter family of examples for various Classes reduces to those with more zeros. From left to right, we list, respectively, the Classes with fewer zeros, the equation number(s) corresponding to the multi-parameter examples, the number of free parameters involved, the Classes with more zeros to which they reduce, the corresponding equation number(s), and how this reduction is achieved by imposing additional constraint(s) alongside other auxiliary transformation(s).}
\end{table*}

\section{Correlations from finite-dimensional maximally entangled states}~\label{sec4}

Consider now $\M$, the subset of $\Q$ due to finite-dimensional maximally entangled states $\{\ket{\Phi^+_d}\}_{d=2,3,\ldots}$, see~\cref{Sec:Sets}. To understand when the boundary of $\M$ meets the boundary of $\NS$, we first derive the analog of Fact~\ref{CHSH:ExtremeQ} for $\M$.

\begin{lemma}\label{CHSH:Extreme:MES}
In the CHSH Bell scenario, all extreme points of $\M$ are achievable by measuring a two-qubit maximally entangled state with projective measurements.
\end{lemma}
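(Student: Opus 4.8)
The plan is to show that the correlation of \emph{any} finite-dimensional maximally entangled state decomposes as a convex mixture of correlations attainable with a Bell pair, and then to invoke extremality. A preliminary reduction I would make is to projective measurements \emph{on $\ket{\MESd}$ itself}: note that a Neumark dilation is \emph{not} admissible here, since enlarging the local dimension would spoil the maximal entanglement and hence the trace formula \cref{Eq:MESCorrelation}. Instead, fixing $d$, the set $\M_d$ is compact and convex, and $P(a,b|x,y)=\tfrac1d\tr[M^A_{a|x}(M^B_{b|y})\tp]$ is affine in each two-outcome POVM separately. Thus any extreme point of $\M_d$ lies in the image of the POVM tuples and can be realized with all four local measurements extremal in the (compact, convex) set of two-outcome POVMs, i.e.\ \emph{projective}; moreover every point of $\M_d$ is a convex combination of such extreme points. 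So it suffices to decompose a correlation obtained from $\ket{\MESd}$ with projective $A_0,A_1,B_0,B_1$.

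The key step is then a block decomposition. With $A_0,A_1$ genuine $\pm1$ observables, Jordan's lemma supplies a unitary $U$ simultaneously block-diagonalizing them into blocks of dimension $d_k\le 2$, with real coordinate-block projectors $\Pi_k$. Applying the local unitary $U\otimes\bar U$, which leaves $\ket{\MESd}$ invariant, keeps $\vecP$ unchanged, renders Alice's operators block-diagonal, and sends Bob's to $\bar U M^B_{b|y}\bar U^\dagger$, still valid POVMs. Because $A_x$ is now block-diagonal and $\Pi_k$ is real (so transposition commutes with the compression $X\mapsto\Pi_k X\Pi_k$), the correlation factorizes as
\[
P(a,b|x,y)=\sum_k \frac{d_k}{d}\;\frac{1}{d_k}\tr\!\big[M^{A,(k)}_{a|x}\,(M^{B,(k)}_{b|y})\tp\big],
\]
where $M^{A,(k)}_{a|x}=\Pi_k M^A_{a|x}\Pi_k$ is projective on block $k$ while $M^{B,(k)}_{b|y}=\Pi_k M^B_{b|y}\Pi_k$ is a valid (compressed) POVM on $\mathbb{C}^{d_k}$. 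Each summand is precisely the correlation of a $d_k$-dimensional MES, so $\vecP$ is a convex combination (weights $d_k/d$) of correlations with $d_k\le 2$.

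It remains to promote Bob's block POVMs to projective ones. For a $d_k=2$ block, the pair of compressed qubit observables $(B^{(k)}_0,B^{(k)}_1)$ lies in the product of two operator-norm balls $\{\|M\|\le 1\}$, whose extreme points are exactly the $\pm1$ (dichotomic) observables; hence the pair is a convex combination of pairs of qubit projective measurements, and by affineness $\vecP^{(k)}$ is a convex combination of correlations from the Bell pair with projective measurements on \emph{both} sides. (A $d_k=1$ block yields a local deterministic point, also realizable by $\ket{\MESqb}$ with trivial projective measurements.) Collecting everything shows that every point of $\M_d$ lies in $\text{conv}(T)$, where $T$ denotes the set of correlations of the two-qubit MES with projective measurements; in particular $\M_d\subseteq\M_2$ for all $d$, so $\M=\M_2=\text{conv}(T)$. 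Since $T$ is compact (a continuous image of the compact set of qubit dichotomic-observable tuples), Milman's partial converse to Krein--Milman gives $\mathrm{ext}(\M)\subseteq T$, which is the assertion.

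I expect the factorization in the second paragraph to be the main obstacle: verifying that the maximally entangled correlation genuinely splits over Alice's Jordan blocks requires the careful interplay of the partial transpose with \emph{real} block projectors, and one must recognize that the resulting block operators on Bob's side are merely POVMs rather than projective measurements, which is exactly what forces the extra convexity argument of the final paragraph. The reduction to projective measurements on the MES itself (rather than on a dilation) is the other point demanding care, as it is what legitimizes the use of Jordan's lemma.
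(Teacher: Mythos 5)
Your proof is correct and follows essentially the same route as the paper's: reduce to projective measurements by extremality (the paper cites Cleve et al.), block-diagonalize Alice's two projective measurements via Jordan's lemma (the paper's Masanes lemma), factorize the maximally entangled correlation over the blocks with weights $d_k/d$, and then handle the fact that Bob's compressed operators are only POVMs by decomposing them into extremal (projective) qubit POVMs (the paper cites D'Ariano et al. for this last step). The only differences are cosmetic — your use of $U\otimes\bar U$ invariance in place of a basis choice, the explicit appeal to Milman's theorem at the end, and the slight imprecision that a $1\times1$ block yields a local product point rather than necessarily a deterministic one — none of which affects the argument.
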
 

\begin{proof}
Let us begin by simplifying the notation and denote the POVM elements of Alice and Bob, respectively, by $\{E_{a|x}\}_{a,x=0,1}$ and $\{F_{b|y}\}_{b,y=0,1}$. Then as mentioned in \cref{Eq:MESCorrelation}, the correlation due to the maximally entangled two-qudit state $\ket{\MESd}$ can be written as
\begin{equation}\label{Eq:MESCorrelation2}
    P(a,b|x,y)=\bra{\MESd}\!E_{a|x}\otimes F_{b|y}\! \ket{\MESd}=\frac{\tr(E_{a|x}\tp F_{b|y})}{d}.
\end{equation}
Since we are concerned with a two-outcome Bell scenario, it follows from Theorem 5.4 of~\cite{Cleve:IEEE:2004} (see also~\cite{Liang:PRA:2007}) that it suffices to consider projective measurements for correlations $\vecP$ lying on the boundary of $\M$. Moreover, by the Lemma of~\cite{Masanes06}, there exists a local orthonormal basis such that all four projectors $\{E_{a|x}\}_{a,x=0,1}$ are simultaneously block diagonal, with blocks of size $2\times 2$ or $1\times 1$, i.e.,\footnote{\label{fn:DirectSum}Here and below, we use the direct sum symbol $\oplus$ to emphasize the block diagonal structure of the corresponding operators. Strictly, $\oplus$ should be replaced by a regular sum $\sum$ and each $\Pi_i$, $E\tp_{a|x}$, $F_{b|y}$, $E^{(i)\,\text{\tiny T}}_{a|x}$, $F^{(i)}_{b|y}$ should be thought of as a full $d$-dimensional matrix.}
\begin{equation}\label{Eq:BlockDecomposition}
   E\tp_{a|x}=\bigoplus_{i=1}^{\frac{d+k}{2}}\Pi_iE\tp_{a|x}\Pi_i \quad\forall\,\, a,x=0,1
\end{equation}
where $\Pi_i$ is the projection operator onto the $i$-th block and $k$ ($\frac{d-k}{2}$) is the number of $1\times 1$ ($2\times 2$) blocks.

Using \cref{Eq:BlockDecomposition} in \cref{Eq:MESCorrelation2} and the fact that the block projection operator satisfies $(\Pi_i)^2=\Pi_i$, we get
\begin{equation}\label{Eq:MESCorrelation3}
    \begin{split}
        P(a,b|x,y)&=\frac{1}{d}\tr\left(\oplus_{i=1}^{\frac{d+k}{2}}\Pi_iE\tp_{a|x}\Pi_iF_{b|y}\right)\\
        &=\frac{1}{d}\sum_{i=1}^{\frac{d+k}{2}}\tr(\Pi_iE\tp_{a|x}\Pi_iF_{b|y})\\
        &=\sum_{i=1}^{\frac{d+k}{2}}\frac{2}{d}\frac{1}{2}\tr(E^{(i)\,\text{\tiny T}}_{a|x}F^{(i)}_{b|y})
    \end{split}
\end{equation}
where $E^{(i)\,\text{\tiny T}}_{a|x}:=\Pi_iE\tp_{a|x}\Pi_i$ and $F^{(i)}_{b|y}:=\Pi_iF_{b|y}\Pi_i$ are easily seen to define $d$-dimensional POVMs,
with $d=1$ or $2$.

Whenever $\Pi_i$ projects onto a two-dimensional subspace, 
\begin{equation}
	 \frac{1}{2}\tr(E^{(i)\,\text{\tiny T}}_{a|x}F^{(i)}_{b|y})=\bra{\MESqb}E^{(i)}_{a|x}\otimes F^{(i)}_{b|y} \ket{\MESqb}
\end{equation}
is a correlation in $\M_2$ attainable by a two-qubit maximally entangled $\ket{\MESqb}$ using local POVM  $\{E^{(i)}_{a|x}\}$ and $\{F^{(i)}_{b|y}\}$. If instead, $\Pi_i$ is a one-dimensional projector, the corresponding $E^{(i)}_{a|x}$ and $F^{(i)}_{b|y}$ are real numbers lying in $[0,1]$, so does the expression $\tr(E^{(i)\,\text{\tiny T}}_{a|x}F^{(i)}_{b|y})$. The latter can again be reproduced using $\ket{\MESqb}$ and POVM elements that are proportional to the identity operator. To complete the proof, we recall from~\cite{DAriano:2005tg} that for qubit measurements, all extremal POVMs are projectors. All in all, we thus see that {\em any} correlation in $\M$ can always be written as a convex mixture of correlations attainable by performing projective measurements on $\ket{\MESqb}$. In other words, extreme points of $\M$ in this Bell scenario originate from performing projective measurements on $\ket{\MESqb}$.
\end{proof}

\begin{corollary}\label{CHSH:Decompose:MES}
In the CHSH Bell scenario, any $\vecP\in\M_d$ obtained by performing projective measurements on  $\ket{\MESd}$ has a convex decomposition using at most $\frac{d-1}{2}$ nonlocal $\vecP'\in\M_2$ when $d$ is odd, but $\frac{d}{2}$ nonlocal $\vecP'\in\M_2$ when $d$ is even.
\end{corollary}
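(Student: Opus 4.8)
The plan is to extract the decomposition directly from the proof of Lemma~\ref{CHSH:Extreme:MES} and then merely count its nonlocal components. That proof already expresses any $\vecP\in\M_d$ obtained from projective measurements on $\ket{\MESd}$ as a convex combination of block contributions, cf.~\cref{Eq:MESCorrelation3}: the $\frac{d-k}{2}$ two-dimensional blocks each furnish an $\M_2$ correlation entering with weight $\frac{2}{d}$, while the $k$ one-dimensional blocks each enter with weight $\frac{1}{d}$, where $k$ is the number of $1\times1$ blocks in the simultaneous block-diagonalization guaranteed by the Lemma of~\cite{Masanes06}. The convex decomposition into $\M_2$ correlations is thus already in hand, and one readily checks $\frac{d-k}{2}\cdot\frac{2}{d}+k\cdot\frac{1}{d}=1$; what remains is to bound how many of its terms can be nonlocal.

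First I would show that the $1\times1$ blocks can never contribute a nonlocal term. For a one-dimensional projector $\Pi_i$, each $E^{(i)}_{a|x}$ and $F^{(i)}_{b|y}$ is a scalar in $[0,1]$, and because the original operators form POVMs these scalars satisfy $\sum_a E^{(i)}_{a|x}=\sum_b F^{(i)}_{b|y}=1$. Hence the corresponding correlation factorizes as $\tr(E^{(i)\,\text{\tiny T}}_{a|x}F^{(i)}_{b|y})=E^{(i)}_{a|x}F^{(i)}_{b|y}=p(a|x)\,q(b|y)$, a product distribution of the manifestly local form~\cref{Eq:LHV}. Consequently, every nonlocal $\M_2$ correlation appearing in the decomposition must originate from one of the $\frac{d-k}{2}$ two-dimensional blocks, so the number of nonlocal components is at most $\frac{d-k}{2}$.

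The final step is an elementary parity count on $k$. Since $\frac{d-k}{2}$ must be a non-negative integer, $d-k$ is even, and $k$ therefore shares the parity of $d$. For even $d$ the smallest admissible value is $k=0$, yielding at most $\frac{d}{2}$ two-dimensional blocks; for odd $d$, $k$ must be odd and hence at least $1$, leaving at most $\frac{d-1}{2}$ such blocks. Combining this with the preceding paragraph gives the stated bounds. The only conceptually substantive point --- and the reason the even and odd cases split --- is precisely this observation that an odd-dimensional space cannot be tiled entirely by $2\times2$ blocks, forcing at least one residual $1\times1$ block and hence the tighter bound $\frac{d-1}{2}$; everything else is bookkeeping on the decomposition already supplied by Lemma~\ref{CHSH:Extreme:MES}. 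The one detail I would take care to verify is that the scalar contributions are genuinely local rather than merely extremal in $\M_2$, but this is immediate from their factorized product form.
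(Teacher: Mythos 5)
Your proposal is correct and follows essentially the same route as the paper's own proof: both read off the decomposition $\vecP=\frac{2}{d}\sum_i\vecP'_i+\frac{1}{d}\sum_j\vecP''_j$ from the proof of Lemma~\ref{CHSH:Extreme:MES}, observe that only the $2\times 2$-block terms $\vecP'_i\in\M_2$ can be nonlocal, and conclude via the parity constraint that $k\ge 1$ for odd $d$ while $k\ge 0$ for even $d$. Your added justification that the $1\times 1$-block contributions factorize into a product distribution is a slightly more explicit version of what the paper already establishes inside the Lemma's proof.
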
 

\begin{proof}
From the proof of Lemma~\ref{CHSH:Extreme:MES}, we see that any $\vecP\in\M_d$ obtained by performing projective measurements on  $\ket{\MESd}$ admits the decomposition:
\begin{equation}\label{Eq:MES:vecP:Decomposition}
	\vecP=\frac{2}{d}\sum_{i=1}^{\frac{d-k}{2}}\vecP'_i+\frac{1}{d}\sum_{j = 1}^{k}\vecP''_j
\end{equation}
where $k$ is the number of $1\times 1$ block in the decomposition of $\{E\tp_{a|x}\}_{a,x=0,1}$, $\vecP'_i\in\M_2$ for all $i$, and $\vecP''_j\in\L$ for all $j$. Clearly, only the $\vecP'_i$ can be nonlocal. To complete the proof, note that for $d$ odd, $k\ge 1$ whereas for $d$ even, $k\ge0$.
\end{proof}

Using Lemma~\ref{CHSH:Extreme:MES}, one can show also the following corollary concerning the maximal CHSH violation by $\ket{\MESd}$ (see \cref{App:MaxCHSH:MES} for a proof).
\begin{corollary}\label{Prop:MaxCHSH:MES}
The maximal CHSH violation by the maximally entangled two-qudit state $\ket{\Phi^+_d}$ is:
\begin{equation}
    \SCHSH^\text{max}(\ket{\Phi^+_d}) =\left\{ 
        \begin{aligned}
            ~&2\sqrt{2}\left(\frac{d-1}{d}\right)+\frac{2}{d},& &d = \text{odd}\\
            ~&2\sqrt{2},&&d = \text{even}.
        \end{aligned}
    \right .
\end{equation}    
\end{corollary}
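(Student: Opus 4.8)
The plan is to combine the convex decomposition of \cref{CHSH:Decompose:MES} with the linearity of the CHSH functional. Because $\SCHSH$ is linear in $\vecP$ and $\M_d$ is the convex hull of the correlations attainable from $\ket{\MESd}$, its maximum over $\M_d$ is attained at a correlation generated by projective measurements on $\ket{\MESd}$ (that projective measurements suffice here is exactly what underlies \cref{CHSH:Extreme:MES}). Any such correlation admits the decomposition $\vecP=\frac{2}{d}\sum_{i=1}^{(d-k)/2}\vecP'_i+\frac{1}{d}\sum_{j=1}^{k}\vecP''_j$, with $\vecP'_i\in\M_2$, $\vecP''_j\in\L$, and $k$ the number of $1\times1$ blocks.

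First I would invoke linearity to write $\SCHSH(\vecP)=\frac{2}{d}\sum_i\SCHSH(\vecP'_i)+\frac{1}{d}\sum_j\SCHSH(\vecP''_j)$ and then bound each term: $\SCHSH(\vecP'_i)\le 2\sqrt{2}$ by Tsirelson's bound (since $\M_2\subset\Q$), and $\SCHSH(\vecP''_j)\le 2$ since $\vecP''_j$ is local. Collecting terms gives $\SCHSH(\vecP)\le 2\sqrt{2}-\tfrac{(2\sqrt{2}-2)k}{d}$, which is strictly decreasing in $k$; hence the maximum is obtained by making the number of one-dimensional blocks as small as possible.

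The decisive structural point is a parity constraint: since $(d-k)/2$ counts the $2\times2$ blocks, $k$ and $d$ must have the same parity. For even $d$ this allows $k=0$ and the bound $2\sqrt{2}$, whereas for odd $d$ at least one $1\times1$ block is forced, so $k_{\min}=1$ and the bound becomes $2\sqrt{2}\,\tfrac{d-1}{d}+\tfrac{2}{d}$. This unavoidable local block for odd $d$ is precisely what separates the two cases.

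Finally I would prove tightness by an explicit block-diagonal construction on $\ket{\MESd}$: use $\lfloor d/2\rfloor$ identical $2\times2$ blocks, each carrying the optimal two-qubit CHSH observables that saturate $2\sqrt{2}$, together with a single $1\times1$ block when $d$ is odd carrying a deterministic assignment (e.g.\ $\langle A_0\rangle=-1$, $\langle B_0\rangle=\langle B_1\rangle=1$) that attains the local value $2$. Substituting into the decomposition reproduces the stated values. I expect the main obstacle to lie not in the inequality, which follows almost mechanically, but in this achievability step: one must verify that the prescribed operators genuinely assemble into valid projective measurements on $\ket{\MESd}$ and that the lone $1\times1$ block really does realize the local CHSH maximum for the sign pattern in \cref{Eq:CHSH}.
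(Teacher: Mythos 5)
Your argument is correct and is essentially the paper's own proof: the same block decomposition from Corollary~\ref{CHSH:Decompose:MES}, linearity of $\SCHSH$, Tsirelson's bound $2\sqrt{2}$ on each $\M_2$ block and the local bound $2$ on each $1\times1$ block, and the parity constraint forcing $k\ge 1$ for odd $d$. The only cosmetic difference is that for tightness the paper simply cites Eq.~(5) of~\cite{Liang2006} rather than spelling out the explicit block-diagonal construction you sketch.
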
 

We now return to the problem of characterizing when the boundaries of $\M$ meet  the boundaries of $\NS$. For general quantum correlations, Classes shown in \cref{tab:Classes} with three or fewer zeros are the only ones lying on the $\NS$ boundary yet being outside the Bell polytope $\L$. Since $\M \subsetneq \Q$, these are the only boundary Classes that we need to consider for the present purpose. 
In~\cref{Sec:Class2a}, we give an example of a nonlocal correlation in $\Q_{2a}$ due to $\ket{\MESqb}$, thus showing that the overlap of $\M$ with $\NS$ in Class 2a is not empty. The following theorem, which we show below, takes care of most of the other Classes.
\begin{theorem}\label{Thm:MES:Local}
Nonlocal correlation in $\Q_{2b}$ and $\Q_{2c}$ (and hence, respectively, $\Q_{3a}$ and $\Q_{3b}$) cannot arise from any finite-dimensional maximally entangled state. 
\end{theorem}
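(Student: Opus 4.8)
The plan is to push everything down to two-qubit maximally entangled states and then exploit the rigidity of perfect (anti)correlations. By Lemma~\ref{CHSH:Extreme:MES}, any $\vecP\in\M$ (in particular any correlation arising from a single $\ket{\MESd}$, since $\M_d\subseteq\M$) is a convex combination $\vecP=\sum_k\lambda_k\vecP_k$ of correlations $\vecP_k$ obtained from $\ket{\MESqb}$ with projective measurements. Since every entry of each $\vecP_k$ is non-negative, and a convex combination of non-negative numbers vanishes only if every summand does, each $\vecP_k$ must carry zeros at exactly the positions prescribed by $\Q_{2b}$ (resp.\ $\Q_{2c}$). It therefore suffices to show that each such $\vecP_k$ is Bell-local: a convex mixture of local correlations is local, contradicting the assumed nonlocality of $\vecP$. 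If any POVM element appearing in $\vecP_k$ is full rank---in particular if a measurement is trivial---then $\vecP_k\in\L$ already (the argument cited in the proof of Lemma~\ref{lem1}), so I may assume both parties use genuine rank-one projective measurements and write $A_x=\vec a_x\cdot\vec\sigma$, $B_y=\vec b_y\cdot\vec\sigma$ with unit vectors $\vec a_x,\vec b_y$.

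Two structural facts drive the argument. First, because the reduced state of $\ket{\MESqb}$ is maximally mixed, every single-party marginal of $\vecP_k$ is uniform, $P(a|x)=P(b|y)=\tfrac12$; combined with a single vanishing entry in a block, normalization then forces that entire $2\times2$ block to be perfectly (anti)correlated, so a zero at a diagonal (anti-diagonal) position yields $\Exp{A_xB_y}=-1$ ($+1$). Second, on $\ket{\MESqb}$ the correlator is bilinear, $\Exp{A_xB_y}=\vec a_x\cdot T\vec b_y$ with $T=\text{diag}(1,-1,1)$ orthogonal, so $|\Exp{A_xB_y}|=1$ holds exactly when $T\vec b_y=\pm\vec a_x$.

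For Class 2b the two prescribed zeros $P(0,0|0,0)=P(1,1|1,0)=0$ sit in the two blocks sharing Bob's setting $y=0$, giving $\Exp{A_0B_0}=\Exp{A_1B_0}=-1$, whence $T\vec b_0=-\vec a_0=-\vec a_1$ and thus $A_0=A_1$. Alice's observables coincide, are jointly measurable, and by Fine's theorem~\cite{Fine:PRL:1982} $\vecP_k\in\L$. For Class 2c the zeros $P(0,0|0,0)=P(1,0|1,1)=0$ lie in the non-neighbouring blocks $(x{=}0,y{=}0)$ and $(x{=}1,y{=}1)$, forcing $\Exp{A_0B_0}=-1$ and $\Exp{A_1B_1}=+1$. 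Here the two perfect correlators involve disjoint measurement pairs, so the coincidence trick fails; instead I would substitute $\vec a_0=-T\vec b_0$ and $\vec a_1=T\vec b_1$ into the remaining two correlators to obtain $\Exp{A_0B_1}=-c$ and $\Exp{A_1B_0}=c$ with $c=(T\vec b_0)\cdot(T\vec b_1)=\vec b_0\cdot\vec b_1$, $|c|\le1$, and then verify that, together with the vanishing marginals, every one of the eight CHSH facet inequalities is satisfied (each signed sum evaluates to $\pm2$ or $\pm2c$). Hence $\vecP_k\in\L$ once more. Finally, the parenthetical statements follow by inclusion: the defining zeros of $\Q_{3a}$ ($\Q_{3b}$) contain those of $\Q_{2b}$ ($\Q_{2c}$), so $\Q_{3a}\subseteq\Q_{2b}$ and $\Q_{3b}\subseteq\Q_{2c}$, and the claim for the three-zero Classes is immediate.

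The routine reduction and the marginal/zero bookkeeping are straightforward; the genuine work concentrates in Class 2c, where---lacking a shared measurement to force two observables to coincide---one must compute the two free correlators explicitly and confirm that the rigid values $\Exp{A_0B_0}=-1$ and $\Exp{A_1B_1}=+1$ alone pin the correlation inside the local polytope. I expect this correlator computation, together with the care needed to dispose of the degenerate (full-rank/trivial) measurement cases so that uniform marginals may legitimately be assumed, to be the only points requiring real attention.
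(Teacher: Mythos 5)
Your proof is correct, and it reaches the conclusion by a genuinely different route from the paper's. The paper first proves Lemma~\ref{lem2} (a vanishing probability on $\ket{\MESd}$ forces $E_{a|x}\tp$ and $F_{b|y}$ to be simultaneously diagonalizable), uses it to generate a ``partner zero'' in the same block, and then chains each Class into one already known to be local: Class 2b (and 3a) acquires two zeros in the same row and falls to Lemma~\ref{lem1}, while Class 2c is pushed into Class 3b and then into Class 4b, which \cref{app1} showed to be local. You obtain the same partner-zero/perfect-correlation structure more directly from the uniform marginals of $\ket{\MESqb}$ (a single zero plus $P(a|x)=P(b|y)=\tfrac12$ forces the whole block to be perfectly (anti)correlated), and then finish by Bloch-vector rigidity instead of by reduction to earlier Classes: for Class 2b the two perfect anticorrelators share $B_0$, forcing $\vec a_0=\vec a_1$ and hence locality via Fine's theorem; for Class 2c you substitute $T\vec b_0=-\vec a_0$ and $T\vec b_1=\vec a_1$ into the remaining correlators and check the eight CHSH facets explicitly. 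Both routes handle the convex-decomposition bookkeeping identically (every component of a mixture must inherit the prescribed zeros, and the full-rank/trivial-measurement components are local outright). What each buys: your argument is self-contained at the qubit level and avoids invoking the Class 4b appendix and Lemma~\ref{lem2}, at the cost of the explicit facet verification --- which does go through, since the signed sums evaluate to $\pm2$ and $\pm2c$ with $|c|\le 1$, and with uniform marginals and positivity the $8$ CHSH facets together with the $16$ positivity facets completely characterize $\L$ in this scenario; the paper's Lemma~\ref{lem2}, by contrast, is dimension-independent and lets the theorem recycle locality results already established for the more-zero Classes.
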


\begin{figure}[h!tbp]
\captionsetup{justification=RaggedRight,singlelinecheck=off}
\centering
\begin{subfigure}{0.45\textwidth}
\centering
  \includegraphics[width=\linewidth]{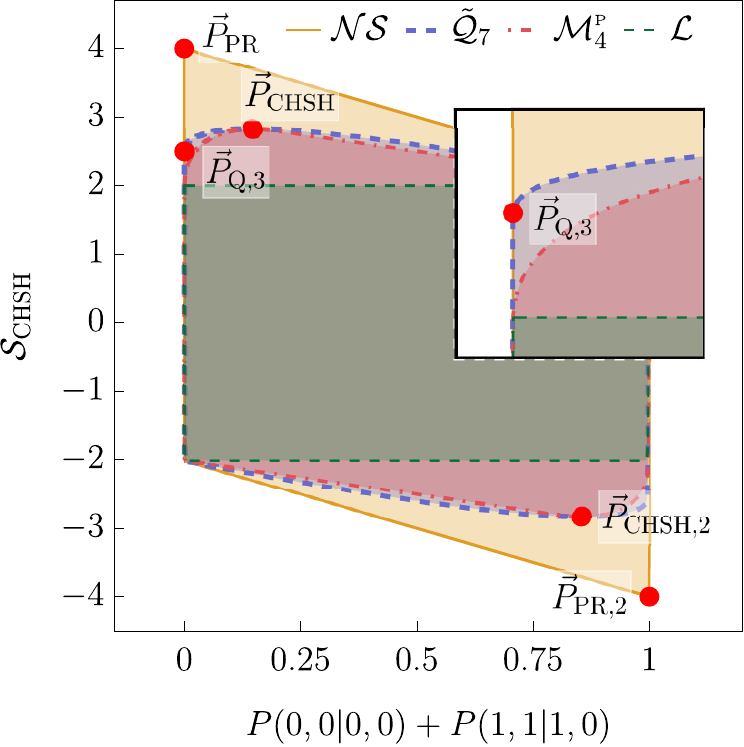}
  \caption{
  A projection revealing the gap between $\Q$ and $\M$ within Class 2b, i.e., all correlations satisfying $P(0,0|0,0)+P(1,1|1,0)=0$ (shown on the vertical line where the horizontal value is zero). Among them, $\vecPQiii$ of \cref{Eq:PQ3} gives the maximal $\SCHSH$ value and self-tests the quantum strategy of \cref{Eq:Class2b} with \cref{Eq:2b:Parameters}. As shown in~\cref{Thm:MES:Local}, $\M \cap \Q_{2b} \subsetneq \L$.}  
  \label{fig:2sub1}
\end{subfigure}
\begin{subfigure}{0.45\textwidth}
\centering
  \includegraphics[width=\linewidth]{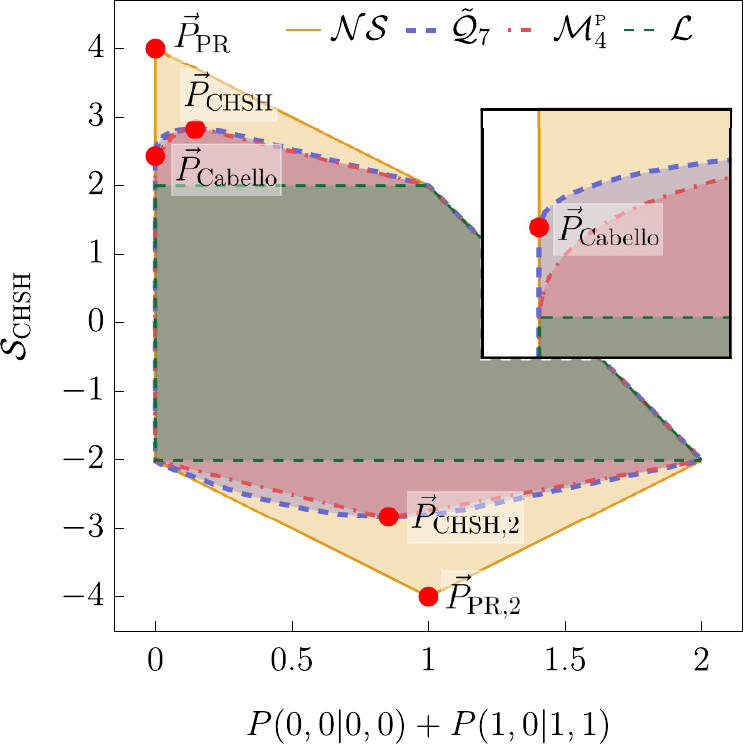}
  \caption{A projection revealing the gap between $\Q$ and $\M$ within Class 2c, i.e., all correlations satisfying $P(0,0|0,0)+P(1,0|1,1)=0$ (shown on the vertical line where the horizontal value is zero). Among them, $\vecPC$ of \cref{Eq:PCabello} gives the maximal $\SCHSH$ value and self-tests  the quantum strategy of \cref{Eq:Class2c} with \cref{Eq:2c:AllParameters}. As shown in~\cref{Thm:MES:Local}, $\M \cap \Q_{2c} \subsetneq \L$.}
  \label{fig:2sub2}
\end{subfigure}
\caption{Two-dimensional projection of $\NS$, $\L$, $\Q_7$, and $\Mp_4$ onto the plane labeled by the CHSH Bell value $\SCHSH$ and some linear sum of probabilities (see text for details). When the horizontal parameter is equal to zero, we recover the configuration specified in the correlation tables of~\cref{tab:class2b} and ~\cref{tab:class2b}. The insets clearly illustrate, correspondingly, the non-exposed nature of both $\vecPQiii$ and $\vecPC$ as well as the gap between $\Q$ and $\M$ within each Class. The explicit form of the other $\vecP$'s shown are given in \cref{Eq:vecPR} and \cref{Eq:3vecP}.
}
\label{fig:2}
\end{figure}

\begin{proof}
To prove the Theorem, we make use of the following Lemma, whose proof is given in~\cref{App:MESlocal}.
\begin{lemma}\label{lem2}
For the maximally entangled state $\ket{\MESd}$, if the probability $P(a,b|x,y)=\bra{\MESd} E_{a|x}\otimes F_{b|y}\ket{\MESd}=0$ for any $x,y,a,b\in\{0,1\}$, the corresponding operators $E_{a|x}\tp$ and $F_{b|y}$ can be diagonalized in the same basis.
\end{lemma}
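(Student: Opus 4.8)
The plan is to reduce the vanishing-probability condition to an orthogonality statement about the supports of the two operators. First I would invoke \cref{Eq:MESCorrelation2} to rewrite the hypothesis $P(a,b|x,y)=0$ as $\tr(E_{a|x}\tp F_{b|y})=0$. The two operators appearing here are both positive semidefinite: $F_{b|y}$ is a POVM element by assumption, while $E_{a|x}\tp$ is the transpose of a POVM element and hence shares its (nonnegative) spectrum and, being the conjugate of a Hermitian matrix, remains Hermitian. Writing $M:=E_{a|x}\tp$ and $N:=F_{b|y}$, the task becomes to show that $M,N\succeq 0$ with $\tr(MN)=0$ can be simultaneously diagonalized.

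The central step is the elementary fact that for positive semidefinite $M,N$, the condition $\tr(MN)=0$ forces $MN=0$. I would prove this using the positive square roots $M^{1/2},N^{1/2}$ and cyclicity of the trace: $\tr(MN)=\tr(M^{1/2}N^{1/2}N^{1/2}M^{1/2})=\|N^{1/2}M^{1/2}\|_{\mathrm F}^2$. Vanishing of this Hilbert--Schmidt norm gives $N^{1/2}M^{1/2}=0$, and taking the adjoint $M^{1/2}N^{1/2}=0$; sandwiching then yields $MN=M^{1/2}(M^{1/2}N^{1/2})N^{1/2}=0$ and likewise $NM=0$.

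From $MN=NM=0$ the conclusion is immediate: $M$ and $N$ commute, and since each is Hermitian (hence normal and diagonalizable), a pair of commuting diagonalizable operators admits a common eigenbasis. Equivalently, $MN=0$ says $\operatorname{range}(N)\subseteq\ker(M)=\operatorname{range}(M)^{\perp}$, so the supports of $M$ and $N$ are mutually orthogonal; one then diagonalizes both in a single orthonormal basis adapted to the decomposition $\operatorname{range}(M)\oplus\operatorname{range}(N)\oplus(\operatorname{range}(M)+\operatorname{range}(N))^{\perp}$, on which $N$ vanishes on the first summand, $M$ vanishes on the second, and both vanish on the third.

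The only place that needs a little care---the \emph{main obstacle}, though a mild one---is the passage from $\tr(MN)=0$ to $MN=0$, which genuinely relies on the positivity of \emph{both} factors (the implication fails for indefinite operators); the remainder is bookkeeping. I would also state explicitly that $E_{a|x}\tp$ is positive semidefinite and not merely Hermitian, since this is precisely what licenses the square-root argument above.
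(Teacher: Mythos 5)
Your proof is correct and follows essentially the same route as the paper: both reduce the hypothesis to $\tr(E_{a|x}\tp F_{b|y})=0$ for two positive semidefinite operators, deduce that the product vanishes, and conclude commutation and hence simultaneous diagonalizability. The only difference is in the elementary middle step --- the paper expands both operators in their eigenbases and observes that the vanishing sum of nonnegative terms $\lambda_i\lambda'_j|\langle e_i|f_j\rangle|^2$ forces all cross-overlaps to vanish, whereas you use the square-root/Frobenius-norm argument; these are interchangeable ways of proving the same fact.
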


Since extremal correlations of $\M$ have qubit realization, cf. Lemma~\ref{CHSH:Extreme:MES}, it suffices to show that $\ket{\MESqb}$ cannot give nonlocal correlation in these Classes. Moreover, one should bear in mind that (i) correlations having fewer zeros may be obtained as mixtures of correlations having more zeros, but not the other way around (ii) $\vecP\in\Q$ (and hence $\M$) having more than three zeros must be local (iii) it suffices to consider rank-one projective measurements if we want the resulting extremal correlation to be nonlocal.

Now, note from Lemma~\ref{lem2} that for rank-one projective measurements acting on $\ket{\MESqb}$, 
\begin{subequations}
\begin{align}
	P(0,0|0,0)=0\implies P(1,1|0,0)=0,\label{Eq:Implication1}\\
	P(1,0|1,1)=0\implies P(0,1|1,1)=0 \label{Eq:Implication2}
\end{align}
\end{subequations}
Applying \cref{Eq:Implication1} to correlations from Class 3a (\cref{tab:class3a}), we end up having two zeros on the same row of the correlation table. By Lemma~\ref{lem1}, we know that such correlations must be local. Moreover, by point (i) and (ii) above, there cannot be mixtures of nonlocal correlations with more zeros that give a nonlocal $\vecP\in\M\cap\Q_{3a}$. Hence, all $\vecP\in\M\cap\Q_{3a}$ are local.  Similarly, applying \cref{Eq:Implication1} to correlations in Class 2b (\cref{tab:class2b}) leads to correlations that are local since they have two zeros in the same row. Even though correlations from the Class 3a can be mixed to give correlations from the Class 2b, using the just-established result for $\vecP\in\M\cap\Q_{3a}$, we know that such mixtures must again be local.

Next, by applying \cref{Eq:Implication2} to correlations in the Class 3b (\cref{tab:class3b}), we end up with correlations (up to relabeling) from the Class 4b (\cref{tab:class4b}). However, we already show in \cref{app1} that all $\vecP\in\Q_{4b}$ are local. Together with point (i) and (ii) given above, we thus conclude that all $\vecP\in\M\cap\Q_{3b}$ must be local. Finally, the application of \cref{Eq:Implication1} to correlations in the Class 2c (\cref{tab:class2c}) give correlations in the Class 3b. Moreover, an inspection of \cref{tab:Classes} shows that there are no other Classes of correlations that can be mixed to give correlations from the Class 2c. Hence, all $\vecP\in\M\cap\Q_{2c}$ are local, which completes the proof of the Theorem.

\end{proof}

\begin{figure}[t!]
\captionsetup{justification=RaggedRight,singlelinecheck=off}
\centering
  \includegraphics[width=0.93\linewidth]{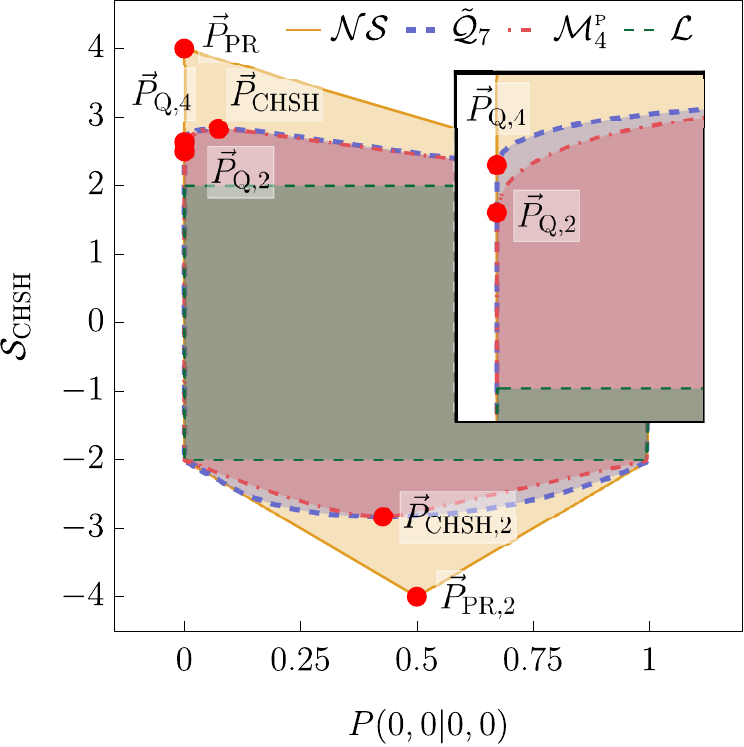}
\caption{Two-dimensional projection of $\NS$, $\L$, $\Q_7$, and $\Mp_4$ onto the plane labeled by the CHSH value $\SCHSH$ and $P(0,0|0,0)$. When the horizontal parameter is equal to zero, we recover the configuration specified in~\cref{tab:class1}. The inset clearly illustrates the gap between $\Q$ and $\M$ within Class 1 (i.e., when the horizontal value is zero) even though there exists nonlocal point $\vecP \in \{\Q_1\cap \M\}$ that is arbitrarily close to $\vecPQii$. The explicit form of the other $\vecP$'s shown are given in \cref{Eq:vecPR} and \cref{Eq:3vecP}. 
}\label{fig:3}
\end{figure}

\cref{Thm:MES:Local} can be seen as a no-go theorem for self-testing maximally entangled states on the common boundaries of $\Q$ and $\NS$ belonging to the aforementioned Classes. In a way, the theorem is anticipated given Lemma~\ref{CHSH:Extreme:MES} and the absence~\cite{Hardy1993,KCA+06} of a Hardy-type argument for the Bell state $\ket{\MESqb}$.
 To illustrate the fact that $\Q_{2b}\cap\M\subsetneq \L$ and $\Q_{2c}\cap\M\subsetneq \L$, we plot in~\cref{fig:2sub1} and~\ref{fig:2sub2}, correspondingly, the 2-dimensional projections of $\NS$, $\L$, $\tilde{\Q}_7$ (the level-$7$ outer approximation of $\Q$ due to~\cite{Moroder13}) and $\M_4$ (the level-$4$ outer approximation of $\M$ due to~\cite{Lin:Quantum:2022}\footnote{\label{footnote:Mp=M}Strictly, the technique of~\cite{Lin:Quantum:2022} only outer approximates the  set of $\M$ assuming projective measurements. However, as we remark in the proof of~\cref{Thm:MES:Local}, $\Mp$ and $\M$ coincide in this simplest Bell scenario.}). Indeed, in both cases, as the boundary of $\M_4$ approaches that of $\NS$, it also approaches that of $\L$. 

\begin{table*}
\small\addtolength{\tabcolsep}{-4pt}
          \captionsetup{justification=RaggedRight}
    \begin{tabular}{|c|c|c||c|c|c|c|c|c|c|}
    \hline
         Class & Necessarily $\in\L$? & $\L^\text{c} \cap \M \neq \emptyset$?  & Maximizing $\vecP$ & Extremal in $\Q$? & Self-test? & Non-Exposed? & $\SCHSH^{\max}$ & $E(\ket{\psi})$ \\ \hline\hline
        4b (\cref{tab:class4b}) & Yes & N/A & N/A & No & No & No &$2$ & 1\\ \hline
         3a (\cref{tab:class3a}) & No & No & Eq.~\eqref{Eq:vecPH}~\cite{Hardy1993} & Yes & Yes~\cite{Rabelo12} & Yes~\cite{Goh2018} & $2.3607$ & 0.6742 \\
        3b (\cref{tab:class3b}) & No & No & \cref{Eq:vecPQ} & Yes & Yes & Yes &$2.2698$ & 0.7748\\
        2a (\cref{tab:class2a}) & No & Yes & \cref{Eq:PQ2} & Yes & Yes & Yes &  $2.5$ & 1\\
        2b (\cref{tab:class2b}) & No & No & \cref{Eq:PQ3}  & Yes & Yes & Yes & $2.5$ & 0.8113 \\
        2c (\cref{tab:class2c}) & No & No & \cref{Eq:PCabello}~\cite{KCA+06} & Yes & Yes~\cite{Rai:PRA:2021} & Yes & $2.4312$ & 0.7794\\ 
        1 (\cref{tab:class1}) & No & Yes & \cref{Eq:PQ4} & Yes & Yes & Yes &  $2.6353$ & 0.9255\\ \hline 
\end{tabular}
    \caption{\label{tab:Summary} Summary of our findings regarding the common boundaries of $\Q$ and $\NS$. These boundaries are classified according to the number of zeros present in the respective correlation tables, see \cref{tab:Classes}. For each of these Classes given in the leftmost column, we list in the second and third column, respectively, whether a $\vecP\in\Q$ from the Class is necessarily local and whether a nonlocal $\vecP\in\Q$ from the Class can be realized using a finite-dimensional maximally entangled state (here $\L^\text{c}$ denotes the complement of $\L$ in $\NS$, i.e., the set of nonlocal correlations). Further to the right, we give the equation number for the CHSH-maximizing quantum correlation from this Class, an indication on whether the example is extremal in $\Q$, is a self-test,\footnote{Although it was only shown in~\cite{Rabelo12} and~\cite{Rai:PRA:2021} that the quantum correlation self-tests the corresponding state, one can easily check there exists diagonal local unitaries [$u_A=u_B=\text{diag}(e^{i\theta}, 1)$ for the state and the measurements in Lemma 2 in~\cite{Rabelo12} and $u_A=\text{diag}(e^{i\phi}, 1)$, $u_B=\text{diag}(e^{i\xi}, 1)$ for Eq.~(5) and Eq.~(7) in~\cite{Rai:PRA:2021}] such that Definition~\ref{Dfn:Equivalent} holds, and thus~\cref{Thm:self-test:upgrade} can be applied to also self-test the underlying measurements.} and is non-exposed. Equation numbers for the more general examples are given in \cref{tab:SummaryReduction}. In the last two columns, we give, respectively, the maximal CHSH  value attainable in each Class and the entanglement of formation~\cite{Wootters:PRL:1998} $E(\ket{\psi})$ of the state $\ket{\psi}$ giving this maximum. Notice that even among those Classes having the same number of zeros, the state $\ket{\psi}$ giving a  larger $\SCHSH^\text{max}$ may not have a larger amount of entanglement.}
\end{table*}

As for Class 1, \cref{Eq:Implication1} implies that with rank-one projective measurements acting on $\ket{\MESqb}$, we cannot obtain $\vecP\in\M$ with only one zero. Together with Lemma~\ref{CHSH:Extreme:MES}, we come to the intriguing observation that $\vecP\in\M$ having exactly one zero are non-extremal in $\M$. However, if we introduce some noise to Alice's $0$-th measurements in~\cref{Eq:Class2a}, namely, by adopting for some positive $\epsilon\approx 0$,
\begin{equation}
         M^A_{0|0}=(1-\epsilon)\frac{\mathbb{1}_2+A_0}{2}\,\,\text{ and }\,\,  M^A_{1|0} = \mathbb{1}_2-M^A_{0|0}
\end{equation}
instead of \cref{Eq:Obs->POVM}, we can clearly generate nonlocal $\vecP\in\M$ within Class 1. Moreover, it is easy to see that for such strategies, $\sup_{0 < \epsilon < 1}\SCHSH=2.5$, i.e., 
we can find $\vecP\in\M$ having only one zero that is arbitrarily close to $\vecPQii$. In fact, using the technique described in~\cite{Lin:Quantum:2022}, we find that $2.5$ matches the best upper bound on $\SCHSH$ for $\vecP\in\M\cap\Q_{1}$ to within a numerical precision better than $10^{-5}$. Still, it is worth noting that this largest value $\SCHSH$ that $\vecP\in\M$ can approach within Class 1 lies strictly below the value of $\SCHSH \approx 2.63533$ attainable by $\vecPQiv\in\Q_{1}$, as illustrated in the inset of~\cref{fig:3}.

As a side remark, note that the mixing strategy given above allows one to transform quantum correlations belonging to one Class to another Class having fewer zeros.

\section{Conclusion}~\label{sec5}

Knowing when the boundary of the quantum set $\Q$ meets the boundary of the no-signaling set $\mathcal{NS}$ helps us better understand the limits of quantum resources when it comes to information processing tasks. In this work, independently of~\cite{Rai:PRA:2019}, we completely characterize the common boundary of these two sets in the simplest Bell scenario. Inspired by the work of ~\cite{Fritz_2011}, we classify the distinct cases according to the numbers of zero present in the correlation table (see~\cref{tab:Classes}). Our main findings are summarized in~\cref{tab:Summary}.

In addition to showing that only six of these Classes contain nonlocal correlations (also shown independently in~\cite{Rai:PRA:2019}), we prove that only two contain nonlocal correlations that are attainable using finite-dimensional maximally entangled states. For each of these Classes with $k$ zeros where $k\in\{1,2,3,4\}$, we provide a $(5-k)$-parameter family of (extremal) quantum strategies realizing these correlations. Furthermore, we  determine for each Class the correlation leading to the maximal CHSH-Bell-inequality violation. As we show in this work, all these maximizing quantum strategies can be self-tested by the resulting correlation with some robustness. Moreover, all these self-testing correlations are provably non-exposed~\cite{Goh2018}.

In arriving at the above results, we obtain several other results that may be of independent interest. For example, we establish in~\cref{Thm:self-test:upgrade} that in the CHSH Bell scenario, if all qubit strategies giving the same extremal quantum correlation are local-unitarily equivalent, then this correlation is a self-test. We also show in Corollary~\ref{CHSH:Decompose:MES} that in this simplest Bell scenario, the set of correlations attainable by finite-dimensional maximally entangled states can be realized using mixtures of correlations from a Bell pair with projective measurements. In turn, this allows us to derive in Corollary~\ref{Prop:MaxCHSH:MES} the maximal CHSH Bell inequality violation by an arbitrary finite-dimensional maximally entangled state. This question was left open since the work of~\cite{Liang2006}.

An observant reader has probably noticed that the self-testing results presented here for the CHSH-maximizing correlation of Class 3a, 3b, 2b, and 2c do not seem particularly robust (see \cref{fig:selftest:class3a} and \cref{fig:selftest:class3b} -- \cref{fig:selftest:class1} in~\cref{App:Robust}). Thus, an obvious avenue of research is to adapt the techniques of~\cite{Kaniewski:PRL:2015} here to improve the robustness of these results. In fact, these and other related work~\cite{LMS+2023,Jed:private} would suggest that the weak form~\cite{Jeba:PRR:2019,Kaniewski:PRR:2020} of self-testing discovered in other Bell scenarios may well be absent in this Bell scenario. Another closely related question concerns the possibility to self-test other quantum correlations from the same Class. The recent work by Rai {\em et al.}~\cite{Rai:PRA:2022} shows that this is possible for a two-parameter family of correlations from Class 3a. Our results show the same for a one-parameter family of correlations from Class 2a. 
Moreover, we have identified a one-parameter family from Class 3b, 2b, and 2c where our numerical results suggest that self-testing is plausible. A complete analysis for these and the other correlations from each Class would be desirable.

Also worth noting is that among the correlations given in~\cite{Rai:PRA:2022}, our calculation based on $21^4=194,481$ uniformly chosen samples shows that all these self-testing Hardy-type correlations are non-exposed,\footnote{These observations hold to a numerical precision better than $10^{-13}$.} likewise, for $10^5$ self-testing correlations from the one-parameter family given in~\cref{Sec:Class2a}. Based on these observations, it seems conceivable that (1) all extremal quantum correlations on the boundary of $\NS$ may be self-tested, and (2) all these self-testing correlations on the boundary of $\NS$ are non-exposed. If so, it would be impossible to self-test such extremal nonlocal quantum correlations using {\em solely} the maximal quantum violation of any Bell inequality. However, as we show in this work, we can circumvent this impossibility by supplementing the Bell value with additional (zero) constraint(s). We are optimistic that this approach remains viable for non-exposed points in even more complicated Bell scenarios such as that considered in~\cite{Chen:PRA:2013}. Determining the validity of these conjectures clearly requires more work and shall thus be left for future work.

\section*{Acknowledgments}

We are grateful to R. Augusiak, J. Kaniewski, M. T. Quintino, V. Scarani for helpful discussions, to A. Rai and anonymous reviewers of Quantum for helpful comments on an earlier version of this manuscript, to P.-S. Lin for sharing his code and other contributions to this research project. This work is supported by the National Science and Technology Council (formerly Ministry of Science and Technology), Taiwan (Grants No. 107-2112-M-006-005-MY2,  109-2112-M-006-010-MY3, 110-2112-M-032-005-MY3, and 111-2119-M-008-002),  a grant provided by Academia Sinica through the Academia Sinica Research Award for Junior Research Investigators, and the Foundation for Polish Science through the First Team project (First TEAM/2017-4/31) co-financed by the European Union under the European Regional Development Fund.

\appendix

\section{Proof of \cref{Thm:self-test:upgrade}}
\label{App:Proof:SelfTestingTheorem}

The following proof mirrors that given in~\cite{Rabelo12} for showing that the quantum correlation giving the maximal success probability in the original Hardy paradox~\cite{Hardy1993} is a self-test.
\begin{proof}
Let us denote by $\vecP$ the extremal nonlocal correlation of interest. By Neumark's theorem~\cite{Neumark}, all possible quantum strategies leading to $\vecP$ can be realized using a pure entangled state (possibly of a higher Hilbert space dimension) and projective measurements. Then, by the Lemma shown in~\cite{Masanes06}, there exists a local basis such that all the projective POVM elements on Alice's (Bob's) side  are simultaneously block diagonal, with blocks of size $2\times 2$ or $1\times 1$, i.e.,\footnote{As with \cref{fn:DirectSum}, to be mathematically precise, all the direct sums should be replaced by a regular sum in the following equations. Accordingly,  each $\Pi^A_i$ (and $M^{A_i}_{a|x}$) is an operator acting on the full space, but only has support on the subspace of the $i$-th block. Likewise for $\Pi^B_j$ and $M^{B_j}_{b|y}$.} 
\begin{equation}~\label{gen_meas}
\begin{split}
        M^A_{a|x} = \bigoplus_i M^{A_i}_{a|x} = \bigoplus_i \Pi^A_i M^{A_i}_{a|x} \Pi^A_i,\\
        M^B_{b|y} = \bigoplus_j M^{B_j}_{b|y} = \bigoplus_j \Pi^B_j M^{B_j}_{b|y} \Pi^B_j,
\end{split}
\end{equation}
where $i$, $j$ are labels for the block and $\Pi^A_i$ ($\Pi^B_j$) is the projector onto the $i$-th ($j$-th) block of Alice's (Bob's) Hilbert space, in the sense that $\bigoplus_i \Pi^A_i=\mathbb{1}_A$ and $\bigoplus_j \Pi^B_j=\mathbb{1}_B$.

Using this in Born's rule, \cref{Eq:Born}, we get
\begin{align}
        P(a,b|x,y)&=\tr[\rho(M^A_{a|x}\otimes M^B_{b|y})]\nonumber\\
        &=\tr\big[\rho \bigoplus_i M^{A_i}_{a|x}\otimes \bigoplus_j M^{B_j}_{b|y}\big]\nonumber\\
        &=\sum_{i,j}\nu_{ij} \tr[\rho_{ij} (M^{A_i}_{a|x}\otimes  M^{B_j}_{b|y})],\nonumber\\
        &=\sum_{i,j}\nu_{ij}P_{ij}(a,b|x,y),\label{Eq:ExtremalMixture}
\end{align}
where $\nu_{ij}:=\tr[\Pi^A_i \otimes \Pi^B_j) \rho (\Pi^A_i \otimes \Pi^B_j)]$ and when $\nu_{ij}\neq 0$, 
\begin{gather}\label{Eq_rho_P}
\begin{split}
        \rho_{ij}:= \frac{(\Pi^A_i \otimes \Pi^B_j) \rho (\Pi^A_i \otimes \Pi^B_j) }{\nu_{ij}},\\
        P_{ij}(a,b|x,y):=\tr[\rho_{ij} (M^{A_i}_{a|x}\otimes  M^{B_j}_{b|y})].
\end{split}        
\end{gather}
Note that the probability of a successful projection onto the $i$-th block of Alice's Hilbert space and the $j$-th block of Bob's Hilbert space can be factorized into the product of a successful projection onto the individual block of the respective Hilbert space, i.e., $\nu_{ij}=r_is_j$, where $r_i=\tr[\Pi^A_i\,\tr_\text{B}(\rho)\,\Pi^A_i], s_j=\tr[\Pi^B_j\,\tr_\text{A}(\rho)\,\Pi^B_j]\ge0$, and $\sum_i r_i = \sum_j s_j=1$.

By assumption, $\vecP$ is extremal. However, we also see in \cref{Eq:ExtremalMixture} that $\vecP$ is a convex mixture of $\vecP_{ij}$. This can only hold if $\vecP_{ij}$ is in fact $\vecP$ for all $i,j$. This implies that $\nu_{ij}$ involving one-dimensional projection must vanish as the otherwise surviving $\rho_{ij}$ cannot lead to a nonlocal $\vecP_{ij}$. Moreover, since the surviving local blocks are all of size $2\times 2$, we know by assumption that for all $i,j$, the quantum strategy of $\left\{\rho_{ij}=\proj{\psi_{ij}}, \{M^{A_i}_{a|x}\}, \{M^{B_j}_{b|y}\}\right\}$ is $2$-equivalent to the reference strategy. In other words, there exist local unitaries $u_{A,i}$ and $u_{B,j}$, cf. \cref{Eq:Dfn:d-equivalent},  such that
\begin{gather}\label{Eq_blockselftest}
	u_{A,i}\otimes u_{B,j} \ket{\psi_{ij}}=\ket{\widetilde{\psi}_{ij}},\\
	u_{A,i}\otimes u_{B,j} (M^{A_i}_{a|x}\otimes  M^{B_j}_{b|y})u_{A,i}^\dagger\otimes u_{B,j}^\dagger=\widetilde{M^{A_i}_{a|x}}\otimes \widetilde{M^{B_j}_{b|y}}\nonumber
\end{gather}
where $\ket{\widetilde{\psi}_{ij}}$ and $\{\widetilde{M^{A_i}_{a|x}}\}$,  $\{\widetilde{M^{B_j}_{b|y}}\}$ are the reference two-qubit strategy having support in the $i$-th ($j$-th) qubit block of Alice's (Bob's) Hilbert space.

Defining $\Lambda_A:=\bigoplus_{l}u_{A,l}^\dagger$ and $\Lambda_B:=\bigoplus_{m}u_{B,m}^\dagger$, then the global state $\rho=\proj{\psi}$ is easily seen to be:
\begin{equation}~\label{gen_state}
    \begin{split}
        \ket{\psi}=&\bigoplus_{i,j}\sqrt{\nu_{ij}}\ket{\psi_{ij}}\\
        =&\bigoplus_{ij}u_{A,i}^\dagger\otimes u_{B,j}^\dagger\sqrt{\nu_{ij}}\ket{\widetilde{\psi}_{ij}}\\
        = & \Lambda_A\otimes \Lambda_B\bigoplus_{i,j}\sqrt{\nu_{ij}}\ket{\widetilde{\psi}_{ij}}.
    \end{split}
\end{equation}
whereas the corresponding POVM elements are
\begin{align}~\label{localisometiesmeasure}
        M^A_{a|x}\otimes M^B_{b|y} &= 
        \bigoplus_i M^{A_i}_{a|x}\otimes \bigoplus_j M^{B_j}_{b|y} \\
        &= \Big(\bigoplus_i u_{A,i}^\dagger \widetilde{M^{A_i}_{a|x}} u_{A,i})\otimes (\bigoplus_j u_{B,j}^\dagger \widetilde{M^{B_j}_{b|y}} u_{B,j}\Big)\nonumber\\
        &= \Lambda_A\otimes \Lambda_B \bigoplus_{i,j}(\widetilde{M^{A_i}_{a|x}}\otimes \widetilde{M^{B_j}_{b|y}})\Lambda_A^\dagger\otimes \Lambda_B^\dagger.\nonumber
\end{align}

Without loss of generality, we may take the reference POVM elements for the first input ($x=y=0$) as a measurement in the computational basis\footnote{If the intended reference measurements are not already in this form, one can make use of the promised $2$-equivalence to choose, instead, a reference measurement that has this feature.}
\begin{equation}
	\widetilde{M^{A_i}_{a|0}}=\proj{2i+a},\quad \widetilde{M^{B_j}_{b|0}}=\proj{2j+b}
\end{equation}
and the reference state $\ket{\widetilde{\psi}_{ij}}$ in the same computational basis.

What remains is to apply the local isometries 
\begin{equation}
	\Phi_A=\Phi'\circ\Lambda_A^\dagger,\quad \Phi_B=\Phi'\circ\Lambda_B^\dagger 
\end{equation}
defined via
\begin{gather}
    \begin{split}
        \Phi'\ket{2s}_{K}\mapsto \ket{2s}_{K'}\ket{0}_{K''},\\
        \Phi' \ket{2s+1}_{K} \mapsto \ket{2s}_{K'}\ket{1}_{K''},
    \end{split}
\end{gather}
for $K\in\{A,B\}$, likewise for $K'$ and $K''$. It is then straightforward to see that these isometries indeed fulfill the self-testing requirement given in \cref{Eq:Selftest}, thus completing the proof that the given extremal nonlocal correlation $\vecP$ self-tests the reference state and measurement.
\end{proof}

\section{All quantum correlations in Class 4b are local}\label{app1}

As with the proof of Lemma~\ref{lem1}, we denote by $\{\ketA{a|x}\}$  ($\{\ketB{b|y}\}$)  the orthonormal bases defining Alice's $x$-th (Bob's $y$-th) measurement.
With the help of relabeling, quantum correlations in Class 4b can always be cast in the form of \cref{tab:class4b}.
Then, the two ADZs in the upper-left block imply that the shared  two-qubit pure state can be expressed as
\begin{equation}\label{eqa1}
    \ket{\Psi}=\sum_{i=0}^1 c_i\ketA{i|0}\ketB{i|0},\quad \sum_{i=0}^1 |c_i|^2=1.
\end{equation}
Similarly, the ADZs in the lower-right block mean that the shared state can also be written as
\begin{equation}\label{eqa2}
    \ket{\Psi}=\sum_{i=0}^1 d_i\ketA{i|1}\ketB{i|1},\quad \sum_{i=0}^1 |d_i|^2=1.
\end{equation}

Now let $U_A$ and $U_B$, respectively, be the unitary connecting the bases of different local measurement settings, i.e., 
\begin{gather}\label{Eq:Transformation}
	 \ketA{i|1} = \sum_j (U_A)_{ji}\ketA{j|0},\,\, \ketB{i|1} = \sum_j (U_B)_{ji}\ketB{j|0},\\
\label{Eq.unitarites.lemme3.1}
	 U_A=\begin{bmatrix}
    \alpha & -\beta^*  \\
    \beta & \alpha^*
    \end{bmatrix},\,\, U_B=\begin{bmatrix}
    \alpha' & -\beta'^* \\
    \beta' & \alpha'^*
    \end{bmatrix},
\end{gather}
where $^*$ represents complex conjugation and $|\alpha|^2+|\beta|^2=|\alpha'|^2+|\beta'|^2=1$. Here and below, without loss of generality, we consider only unitary matrices of unit determinant.

Substituting the transformations of \cref{Eq:Transformation} and \cref{Eq.unitarites.lemme3.1} into equation~\eqref{eqa2}, we obtain an expansion of $\ket{\Psi}$ in the basis of $\{\ketA{i|0}\ketB{j|0}\}$. Specifically, the expansion coefficients for the term $\ketA{0|0}\ketB{1|0}$ and $\ketA{1|0}\ketB{0|0}$, which ought to be zero according to \cref{eqa1}, are, respectively,
\begin{equation}
\begin{split}
	\ketA{0|0}\ketB{1|0}:\quad d_0\alpha \beta '-d_1\beta ^* \alpha '^*=0,\\
	\ketA{1|0}\ketB{0|0}:\quad d_0\beta \alpha '-d_1\alpha^* \beta'^*=0.
\end{split}
\end{equation}
Hence, for $d_0,d_1\neq0$, their ratio is just a phase factor:
\begin{equation}
    \frac{d_0}{d_1}=\sqrt{\frac{\alpha^* \alpha '^* \beta^* \beta'^*}{\alpha \alpha ' \beta \beta '}}=e^{i\phi_d}\,\,\because\,\,\alpha \alpha ' \beta \beta '\equiv|\alpha \alpha ' \beta \beta '|e^{-i\phi_d}.
\end{equation}
Similarly, using the inverse unitaries $U_A^\dag$ and $U_B^\dag$, we can write the basis vectors $\{\ketA{i|0}\ketB{j|0}\}$ in the basis of $\{\ketA{i|1}\ketB{j|1}\}$ in~\cref{eqa1} and arrive at the conclusion that $c_0/c_1$ is also a phase factor, which we denote by $e^{i\phi_c}$.

This means that if $\ket{\Psi}$ exhibits exactly the zeros distribution shown in~\cref{tab:class4b}, it must be maximally entangled, i.e.,
\begin{equation}\label{eqa5}
    \begin{split}
        \ket{\Psi}=&\frac{e^{i\phi_0}}{\sqrt{2}}[e^{i\phi_c}\ketA{0|0}\ketB{0|0}+\ketA{1|0}\ketB{1|0}]\\
        =&\frac{e^{i\phi_1}}{\sqrt{2}}[e^{i\phi_d}\ketA{0|1}\ketB{0|1}+\ketA{1|1}\ketB{1|1}]\\
    \end{split}
\end{equation}
where $e^{i\phi_0}$ and $e^{i\phi_1}$ are global phase factors. Using \cref{Eq_Observables}, we then get
\begin{equation}\label{eqa7}
    \begin{split}
        &\langle A_0 B_0 \rangle =\bra{\Psi}A_0 B_0\ket{\Psi}=1,\\
        &\langle A_0 B_1 \rangle =\bra{\Psi}A_0 B_1\ket{\Psi}=|\alpha|^2-|\beta|^2,\\
        &\langle A_1 B_0 \rangle =\bra{\Psi}A_1 B_0\ket{\Psi}=|\alpha|^2-|\beta|^2,\\
        &\langle A_1 B_1 \rangle =\bra{\Psi}A_1 B_1\ket{\Psi}=1,
    \end{split}
\end{equation}
which satisfy the CHSH inequality of \cref{Eq:CHSH} and all its relabelings. Hence, all $\vecP\in\Q$ from Class 4b must be local.

\section{Robust self-testing of quantum strategies}\label{App:robust_self-test}

In any real experiment, the observed measurement statistics are likely to differ from the exact self-testing quantum correlation. As such, for practical purposes, it is important to understand the robustness of any given self-testing statement against imperfections. To this end, we shall recall from~\cite{Yang14,Bancal15} the SWAP method and explain how it can be applied to numerically determine the robustness of the self-testing results presented in~\cref{Sec:Q-NS}. 

For the robust self-testing of a quantum state, we follow~\cite{Yang14,Bancal15} by considering local operators $\Phi_{AA'}$ ($\Phi_{BB'}$) that act jointly on the black-box system $A$ ($B$) and the trusted auxiliary system $A'$ ($B'$), i.e., $\Phi \rho_{AB}\otimes\left(\ket{00}\bra{00}\right)_{A'B'}\Phi^\dagger$, where 
\begin{equation}\label{Eq:GlobalOp}
	\Phi = \Phi_{AA'}\otimes\Phi_{BB'}.
\end{equation} 
As we see shortly, in the ideal case where Alice's (Bob's) actual observables $A_i$ ($B_i$) coincide with the reference observables $\widetilde{A}_i$ ($\widetilde{B}_i$), the chosen local operator $\Phi_{AA'}$ ($\Phi_{BB'}$) becomes the operator that swaps the Hilbert spaces $\H_A$ and $\H_{A'}$ ($\H_B$ and $\H_{B'}$). Then the fidelity 
 \begin{equation}\label{Eq:Fidelity}
    F = \bra{\widetilde{\psi}}\rho_{\text{\tiny{SWAP}}}\ket{\widetilde{\psi}},
 \end{equation}
between the reference state $\ket{\widetilde{\psi}}$ and the ``swapped" state 
  \begin{equation}~\label{def:rho_swap}
     \rho_\text{\tiny{SWAP}} = \tr_{AB}\left[\Phi~\rho_{AB}\otimes (\proj{00})_{A'B'}~\Phi^\dagger\right],
 \end{equation}
naturally quantifies the closeness of relevant shared state in the black boxes $\rho_{AB}$ to the reference state $\ket{\widetilde{\psi}}$.

To this end, let us first note from \cref{Sec:Q-NS} that all reference observables considered are of the form:
\begin{equation}\label{Eq:RefObs}
    \widetilde{A_i}=\cos{(\theta_{i})}\sigma_x+\sin{(\theta_{i})\sigma_z}. 
\end{equation}
Next, we define, for Alice's subsystem,  the local operator: 
\begin{subequations}\label{Eq:LocalSwap}
\begin{align}
    \Phi&_{AA'}(A_0,A_1):=\nonumber\\
    &U_{AA'}(A_0,A_1)V_{AA'}(A_0,A_1) U_{AA'}(A_0,A_1),
\end{align}
where the ``controlled-NOT'' gates are:
\begin{align}
    U_{AA'}(A_0,A_1) &:= \mathbb{1}_A\otimes \proj{0}+\widetilde{\sigma_{x}}_{,A}(A_0,A_1)\otimes\proj{1}, \nonumber\\
    V_{AA'}(A_0,A_1) &:= \frac{\mathbb{1}_A+\widetilde{\sigma_{z}}_{,A}(A_0,A_1)}{2} \otimes \mathbb{1} \\ 
    				&\qquad+\frac{\mathbb{1}_A-\widetilde{\sigma_z}_{,A}(A_0,A_1)}{2}\otimes\sigma_x,\nonumber
\end{align}
and following the form given in \cref{Eq:RefObs}, we define
\begin{equation}
\begin{split}
    \widetilde{\sigma_x}_{,A}(A_0,A_1):= \frac{-\sin\theta_{1}\,A_0+\sin\theta_{0}\,A_1}{\sin\theta_{0}\,\cos\theta_{1}\,-\cos\theta_{0}\,\sin\theta_{1}\,},\\
    \widetilde{\sigma_z}_{,A}(A_0,A_1):= \frac{\cos\theta_{1}\,A_0-\cos\theta_{0}\,A_1}{\sin\theta_{0}\,\cos\theta_{1}\,-\cos\theta_{0}\,\sin\theta_{1}\,}
\end{split}
\end{equation}
\end{subequations}
 Using~\cref{Eq:Obs->POVM}, we can rewrite the local operator $\Phi_{AA'}(A_0,A_1)$ in terms of the actual POVM elements, i.e., $\Phi_{AA'}(A_0,A_1)\to \Phi_{AA'}(\mathbb{1}_2,M^A_{0|0},M^A_{0|1})$. Completely analogous definitions can be given for Bob's local operator $\Phi_{BB'}$ in terms of his actual POVM elements $M^B_{0|0}$ and $M^B_{0|1}$.  Importantly, when the black boxes indeed implement the reference measurements,  $\widetilde{\sigma_x}_{,A}(\widetilde{A_0},\widetilde{A_1}) =\widetilde{\sigma_x}_{,B}(\widetilde{B_0},\widetilde{B_1})=\sigma_x$ and $\widetilde{\sigma_z}_{,A}(\widetilde{A_0},\widetilde{A_1})=\widetilde{\sigma_z}_{,B}(\widetilde{B_0},\widetilde{B_1}) =\sigma_z$, the local operators $\Phi_{AA'}$, $\Phi_{BB'}$ become the ideal operator that swap the black-box subsystem and the respective auxiliary subsystem, and hence the fidelity of \cref{Eq:Fidelity} is unity.

More generally, by substituting \cref{Eq:GlobalOp}, \cref{def:rho_swap}, \cref{Eq:LocalSwap} and the analogous expressions for Bob's operator into \cref{Eq:Fidelity}, we see that the fidelity of interest is a linear function of a subset of the moments $\mu = \{\tr(\rho_{AB}\mathbb{1}),\tr(\rho_{AB}M^A_{0|0}),\dots\}$. To determine the robustness of the established self-testing statements (with respect to the reference state), we shall compute the worst-case  fidelity by optimizing over all possible quantum realizations compatible with the observed value of $\SCHSH$ and some bounded deviation from the $\NS$ boundary. 

In practice, since we only have access to a converging hierarchy~\cite{NPA2008,Doherty08,Moroder13} of outer approximations to $\Q$ (see also~\cite{Lin:Quantum:2022}), we solve, instead, the following semidefinite program (SDP) to obtain a lower bound on this worst-case fidelity:
 \begin{equation}\label{Robust_ST_State}
 \begin{split}
     &\qquad \mathcal{F} = \min_{\mu\in\tilde{\Q}_\ell} \bra{\widetilde{\psi}}\rho_{\text{\tiny{SWAP}}}\ket{\widetilde{\psi}}\\
     \text{such that} &\sum_{a,b,x,y=0}^1 (-1)^{xy+1+b+1} P(a,b|x,y)=\SCHSH,\\
     &~P(a,b|x,y) \le \varepsilon\,\,\forall\,\, P(a,b|x,y) \in \mathbb{P},
 \end{split}
 \end{equation}
where $\mathbb{P}$ is the set of conditional probabilities that are required to vanish in each Class, $\varepsilon$ represents the maximum allowed deviation from this requirement, and $\tilde{\Q}_\ell$ is, for example, the level-$\ell$ outer approximation of $\Q$ described in~\cite{Moroder13} Not further that in this work, all reference states are some two-qubit state, which may be written in the Schdmit form: $\ket{\widetilde{\psi}}=c_0\ket{00}+c_1\ket{11}$ where $c_0\ge c_1$. So, even if Alice and Bob share only the product state $\ket{00}$ and hence do not violate any Bell inequality, they can already achieve a fidelity of $c_0^2$. In other words, for a nontrivial self-testing of the reference state $\ket{\widetilde{\psi}}$, the fidelity has to be larger than $c_0^2$.

For the self-testing of measurements, we now consider 
 the figure of merit:
 \begin{align}
     T_A &\equiv \frac{1}{2}\{P_A(0|0,\ketA{0|0})+P_A(1|0,\ketA{1|0})\nonumber\\
     &\qquad +P_A(0|1,\ketA{0|1})+P_A(1|1,\ketA{1|1})\}-1 \label{FoM_A}
 \end{align}
 for Alice where
  \begin{align}\label{Eq:P_A}
     P_A(&a|x,\ket{\phi})=\nonumber\\
     &\tr\left\{M^A_{a|x}  \otimes\mathbb{1}_{A'} [\Phi_{AA'}(\rho_{AB}\otimes\proj{\phi})\Phi_{AA'}^\dagger]\right\}.
 \end{align} 
For Bob's measurements, we use a similar figure of merit $T_B$ with $P_A(a|x,\ketA{a|x})$ in \cref{FoM_A} replaced by $P_B(b|y,\ketB{b|y})$ where the latter is defined in essentially the same way as  $P_A(a|x,\ket{\phi})$ in \cref{Eq:P_A}.
Again, notices that when the black boxes indeed implement the reference measurements, each of these figure of merits becomes unity.

To determine the robustness of the established self-testing statements with respect to the reference measurements, we are interested in the smallest possible value of these figures of merit given an observation of some correlation that deviates from the ideal self-testing correlation. For that matter, we again outer-approximate $\Q$ by considering some superset $\tilde{\Q}_\ell$ of $\Q$ and  solve the following SDPs to obtain a lower bound on the respective figures of merit:
  \begin{equation}\label{Robust_ST_Meas}
 \begin{split}
      &\qquad  \tau_{i} = \min_{\mu\in\tilde{\Q}_\ell}~T_i  \\
     \text{such that} &\sum_{a,b,x,y=0}^1 (-1)^{xy+1+b+1} P(a,b|x,y)=\SCHSH,\\
     &~P(a,b|x,y) \le \varepsilon\,\,\forall\,\, P(a,b|x,y) \in \mathbb{P},
 \end{split}
 \end{equation}
 for $i\in\{A,B\}$. Note also that if both parties' measurements are trivial, i.e., $M^A_{a|x} = M^A_{b|y} = \frac{\mathbb{1}_2}{2}$ for all $a,b,x,y$, and hence being jointly measurable, one can already achieve a score of $T_A=T_B=0$. Thus, for a nontrivial self-testing of measurements, we must have $\tau_A, \tau_B>0$.
 
\section{Quantum correlation giving maximal CHSH value in $\Q_{3b}$ is a self-test}\label{app-st3}

Here, we prove that $\vecPQ$ of \cref{Eq:vecPQ} gives the maximal CHSH violation in Class 3b. Moreover, $\vecPQ$ is not only an extreme point of $\Q$ but also self-tests the state and measurements of \cref{Eq:Class3b} with \cref{Eq:3b:Parameters}.

\subsection{Maximal CHSH value for $\vecP\in\Q_{3b}$}
\label{App:MaxCHSH-Q3b}

With the measurement bases defined in \cref{Eq_MeasBases} and \cref{Eq_Observables}, a general pure two-qubit state $\ket{\psi}$ giving $P(0,0|0,0)=P(1,1|0,0)=0$ (cf. \cref{tab:class3b}) is:
\begin{equation}~\label{state_two_zeros-in_same_block}
    \ket{\psi}=\cos{\theta}\ketA{0|0}\ketB{1|0}+e^{i\phi}\sin{\theta}\ketA{1|0}\ketB{0|0}.
\end{equation}
Suppose the  measurement bases are related via \cref{Eq:Transformation} by 
\begin{equation}\label{uab3}
\begin{split}
	 U_A&=\begin{bmatrix}
    e^{i\gamma_A}\cos{\alpha} & e^{i\omega_A}\sin{\alpha} \\
    -e^{-i\omega_A}\sin{\alpha} & e^{-i\gamma_A}\cos{\alpha}
    \end{bmatrix},\\
	U_B&=\begin{bmatrix}
    -e^{-i\omega_B}\sin{\beta} & e^{-i\gamma_B}\cos{\beta} \\
     e^{i\gamma_B}\cos{\beta} & e^{i\omega_B}\sin{\beta}
    \end{bmatrix}.
\end{split}
\end{equation}
To arrive at nonlocal correlations, the local measurements should not commute, while the state must be entangled, thus:
\begin{equation}\label{Eq_constraints}
	\alpha, \beta, \theta \notin \left\{\tfrac{n}{2}\pi\right\}_{n\in \mathbb{Z}}.
\end{equation}

Using \cref{state_two_zeros-in_same_block} and \cref{uab3} in \cref{Eq:CHSH} and \cref{Eq:Born}, we arrive, respectively, at the CHSH Bell value:
\begin{align}~\label{S_two_zeros_in_same_block}
        \SCHSH = \zeta\prod_{j=\alpha,\beta,\theta}\sin{2j} +2\cos 2\alpha\cos^2\beta+2\sin^2\beta
\end{align}
and the joint conditional probability
\begin{align}\label{Eq:Constraint:P1011}
        P(1,0|1,1)=&-\tfrac{\zeta}{4}\prod_{j=\alpha,\beta,\theta}\sin{2j}+\sin^2{\alpha}\cos^2{\beta}\cos^2{\theta}\nonumber\\
                  &+\cos^2{\alpha}\sin^2{\beta}\sin^2{\theta}=0
\end{align}
where 
\begin{equation}\label{Eq:zeta}
	\zeta:= \cos{(\gamma_A+\gamma_B+\omega_A+\omega_B+\phi)}
\end{equation} 
and the requirement that the latter probability vanishes follows from the structure of zeros shown in \cref{tab:class3b}.

Substituting the expression of $\zeta$ obtained from \cref{Eq:Constraint:P1011} into \cref{S_two_zeros_in_same_block} and simplifying give
\begin{equation}\label{3zerosS1}
    \SCHSH =2 [\sin^2{\theta}(\cos{2\alpha}-\cos{2\beta})+1].
\end{equation}
Note further that in the expression of $P(1,0|1,1)$ in \cref{Eq:Constraint:P1011}, the variables $\gamma_A, \gamma_B, \omega_A, \omega_B, \phi$ appear only in $\zeta$. If the constraint were to hold for some non-extremal value of the cosine function in $\zeta$, a slight variation of some of these parameters must lead to a negative value for the probability $P(1,0|1,1)$. Hence, the constraint of \cref{Eq:Constraint:P1011} must also imply an extremal value of $\zeta$, i.e., $\zeta=\pm1$. Using this in \cref{Eq:Constraint:P1011} itself and simplifying lead to
\begin{equation}\label{tantheta1}
    \tan{\alpha}\pm \tan{\beta}\tan{\theta}=0\quad\text{ for }\quad \zeta=\mp 1.
\end{equation}

Consider now the maximization of \cref{3zerosS1} subjected to the constraint of \cref{tantheta1} via the Lagrangian functions $\mathcal{L}_\pm$:
\begin{align}
    \mathcal{L}_\pm = &2 [\sin^2{\theta}(\cos{2\alpha}-\cos{2\beta}) +1]\nonumber\\
    &+ \lambda (\tan{\alpha}\pm \tan{\beta}\tan{\theta}),
\end{align}
where $\lambda$ is a Lagrange multiplier. Requiring that their vanishing partial derivatives vanish give
\begin{gather}
    \frac{\partial\L_\pm}{\partial \alpha}  =  -4\sin^2{\theta}\sin{2\alpha}+\lambda \sec^2{\alpha}=0,\label{3dsa}\\
    \frac{\partial\L_\pm}{\partial \beta}  = 4\sin^2{\theta}\sin{(2\beta)} \pm \lambda \sec^2{\beta}\tan{\theta}=0,\label{3dsb}
\end{gather}
and, upon the elimination of $\lambda$, 
\begin{equation}\label{tantheta2}
    \tan{\theta}=\mp \frac{\sin{\beta}\cos^3{\beta}}{\sin{\alpha}\cos^3{\alpha}}.
\end{equation}

Together, ~\cref{tantheta2} and~\cref{tantheta1} imply that extreme values of $\SCHSH$, under the constraint of \cref{tantheta1}, occur when $\sin^2{2\alpha}=\sin^2{2\beta}$, i.e., $\beta \in \left\{\tfrac{n\pi}{2} \pm \alpha\right\}_{n \in \mathbb{Z}}$. Using this in~\cref{3zerosS1}, one sees that $\SCHSH$ can be larger than $2$ only if 
\begin{equation}\label{betaconst2}
    \beta \in \left\{(n+\tfrac{1}{2})\pi \pm \alpha\right\}_{n \in \mathbb{Z}}.
\end{equation}

Substituting~\cref{betaconst2} into~\cref{tantheta1} and \cref{3zerosS1}, we see that the optimization problem reduces to
\begin{equation}\label{opt3zeros}
    \begin{split}
        \max_{\theta,\alpha}\quad& 4\cos{2\alpha}\sin^2{\theta}+2\\
        \rm{s.t.}\quad&  \pm\tan^2{\alpha}=\tan{\theta},
    \end{split}
\end{equation}
where the sign in the constraint is positive if and only if the sign of $\zeta$ is opposite to that of $\alpha$ in \cref{betaconst2}.
By eliminating $\alpha$ using the constraint or otherwise, we get
\begin{equation}\label{eq_of_theta1}
    \pm\tan^3{\theta}+\tan^2{\theta}\pm\tan{\theta}-1 = 0,
    \end{equation}
whose solutions in the interval $[-\pi,\pi]$ are\footnote{The other solutions of $\theta\in[-\pi,\pi)$ are equivalent to the two presented in \cref{Eq_theta} up to a global phase factor, see \cref{state_two_zeros-in_same_block}.}
\begin{equation}\label{Eq_theta}
    \theta= \pm\theta_0, \quad\,\theta_0:=\tan^{-1}\tfrac{1}{3}\left( -1-\tfrac{2}{\tau}+\tau \right)\approx0.15851\pi
\end{equation}
where $\tau$ is defined in \cref{Eq:theta-tau} and the sign associated with $\tan\theta$ in \cref{eq_of_theta1} [$\theta_0$ in \cref{Eq_theta}] follows that of the constraint in \cref{opt3zeros}.
Accordingly, one finds from \cref{opt3zeros} that admissible solutions of $\alpha$ in the interval of $[0,2\pi)$ are
\begin{subequations}\label{Eq:alpha}
\begin{equation}
	\alpha= \alpha_0, \,\,\alpha_0+\pi \quad\text{and}\quad \alpha = \pi -\alpha_0, \,\,2\pi-\alpha_0
\end{equation}
where 
\begin{equation}
	\alpha_0:=\tan^{-1}\sqrt{|\tan{\theta_0}|}\approx0.20224\pi
\end{equation}
\end{subequations}
Finally, using \cref{betaconst2}, the extreme value of $\SCHSH$ are due to the following choice of $\beta$ in the interval $[0,2\pi)$:
\begin{equation}\label{Eq:beta}
	\beta= \frac{\pi}{2}\pm\alpha_0, \frac{3\pi}{2}\pm\alpha_0.
\end{equation}

It can be verified that all 32 legitimate combinations of $\theta, \alpha, \beta$, and $\zeta$  [see, respectively, \cref{Eq_theta}, \cref{Eq:alpha}, \cref{Eq:beta}, \cref{Eq:zeta} and the paragraph just below \cref{opt3zeros}] give rise to the same correlation $\vecPQ$ and hence the same maximal Bell value $\SCHSH=4 - 4 (2\kappa_1 + \kappa_2)\approx 2.26977$, where $\vecPQ$, $\kappa_1$ and $\kappa_2$ are defined in \cref{Eq_kappa}.

\subsection{Self-testing based on the maximally-CHSH-violating correlation in $\Q_{3b}$}
\label{App:Self-test:Class3b}

\begin{theorem}~\label{self-test_state&meas3}
If the CHSH value given in \cref{Eq:MaxCHSH:Class3b} is observed alongside $P(0,0|0,0)=P(1,1|0,0)=P(1,0|1,1)=0$, then the underlying state and measurements are equivalent to those given in \cref{Eq:Class3b} with \cref{Eq:3b:Parameters}
up to local isometries, i.e., $\vecPQ$ self-tests this quantum realization.
\end{theorem}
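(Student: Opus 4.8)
The plan is to derive the self-testing statement from the general upgrade result, Theorem~\ref{Thm:self-test:upgrade}, once two ingredients are secured: (i) that $\vecPQ$ is an extremal nonlocal point of $\Q$, and (ii) that every qubit strategy reproducing $\vecPQ$ is $2$-equivalent to the reference strategy of \cref{Eq:Class3b} with \cref{Eq:3b:Parameters}. For extremality I would argue directly from the zero pattern. Since every entry of a correlation in $\Q$ is nonnegative, any convex decomposition $\vecPQ = p\,\vecP_1 + (1-p)\,\vecP_2$ with $\vecP_1,\vecP_2\in\Q$ and $p\in(0,1)$ forces both $\vecP_1$ and $\vecP_2$ to vanish on the three designated positions, i.e.\ $\vecP_1,\vecP_2\in\Q_{3b}$. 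Because $\SCHSH$ is linear and, by the analysis of \cref{App:MaxCHSH-Q3b}, $\vecPQ$ is the \emph{unique} maximizer of $\SCHSH$ over $\Q_{3b}$, linearity forces $\vecP_1=\vecP_2=\vecPQ$, establishing extremality. Nonlocality is immediate from $\SCHSH\approx2.26977>2$.

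With extremality in hand, the block-diagonalization argument employed in the proof of Theorem~\ref{Thm:self-test:upgrade} reduces the task to qubit strategies, so it suffices to show that every two-qubit realization of $\vecPQ$ is $2$-equivalent to the reference. Here I would reuse the parametrization already established in \cref{App:MaxCHSH-Q3b}: any qubit strategy giving $\vecPQ$ inherits the three zeros, hence its state takes the form of \cref{state_two_zeros-in_same_block} with measurement bases related by the unitaries \cref{uab3}; moreover, since such a strategy attains the maximal CHSH value in $\Q_{3b}$, it must coincide with one of the critical configurations found there. Consequently, all qubit strategies reproducing $\vecPQ$ are exhausted by the continuous phase freedom $\gamma_A,\gamma_B,\omega_A,\omega_B,\phi$ (constrained only through $\zeta=\pm1$) together with the discrete choices of $\theta\in\{\pm\theta_0\}$, $\alpha$, and $\beta$ listed in \cref{Eq_theta}, \cref{Eq:alpha}, and \cref{Eq:beta}.

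The remaining step is to exhibit, for each such configuration, local qubit unitaries $u_A,u_B$ taking it to the reference, as demanded by Definition~\ref{Dfn:Equivalent}. I would first apply diagonal phase unitaries of the form $\mathrm{diag}(e^{i\varphi},1)$ on each side to absorb $\gamma_A,\gamma_B,\omega_A,\omega_B$ and the state phase $\phi$, thereby rotating Alice's and Bob's zeroth bases to the computational basis so that $A_0=B_0=\sigma_z$ and casting $\ket{\psi}$ into the real Schmidt form of \cref{Eq:3b:State}. The residual discrete sign and branch ambiguities then correspond to Clifford-type local unitaries: flipping $\theta\to-\theta$ or shifting $\alpha,\beta$ by $\pi$ or reflecting them about $\tfrac{\pi}{2}$ is implemented by conjugation with $\sigma_z$, $\sigma_x$, or their products (equivalently, relabelings of outcomes), each of which preserves the reference form of the observables while mapping one admissible configuration onto another. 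Verifying that such a pair simultaneously satisfies \cref{Eq:Equivalent:State} and \cref{Eq:Equivalent:Meas} then reduces to a finite, explicit check over the $32$ combinations.

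The main obstacle I anticipate is precisely this last bookkeeping: one must confirm that a \emph{single} pair $(u_A,u_B)$ works for each configuration, matching both the shared state and all four local observables at once, rather than only the state or only one measurement. The continuous phases are harmless, being absorbed by diagonal unitaries, but the interplay between the discrete branches of $\alpha,\beta$ and the sign of $\zeta$ must be tracked carefully so that the resulting map is a genuine local-unitary equivalence and not merely a correlation-preserving relabeling. Once this $2$-equivalence is certified for all qubit strategies, Theorem~\ref{Thm:self-test:upgrade} applies verbatim and yields that $\vecPQ$ self-tests the quantum realization of \cref{Eq:Class3b} with \cref{Eq:3b:Parameters}.
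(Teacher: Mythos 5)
Your proposal is correct and follows essentially the same route as the paper's proof: reduce to the qubit strategies characterized in \cref{App:MaxCHSH-Q3b}, show they are all $2$-equivalent to the reference via explicit local unitaries (the paper's \cref{uab3_self-test}, which are diagonal unitaries absorbing the phases $\gamma_A,\gamma_B,\omega_A,\omega_B,\phi$ together with the discrete sign factors $\cos\alpha/|\cos\alpha|$, etc.), note extremality of $\vecPQ$ from its being the unique maximizer within $\Q_{3b}$, and invoke \cref{Thm:self-test:upgrade}. The only point to tighten is that conjugation by $\sigma_x$ (or outcome relabelings) must not be used, since it would map $A_0=\sigma_z$ to $-\sigma_z$ and violate \cref{Eq:Equivalent:Meas}; diagonal unitaries alone suffice for all $32$ configurations, exactly as you anticipate in your final caveat.
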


\begin{proof}
In \cref{App:MaxCHSH-Q3b}, we have characterized all the qubit strategies leading to the maximal CHSH  value attainable by $\vecP\in\Q_{3b}$. Next, we show that all these strategies are $2$-equivalent to the one that performs on the reference state
\begin{subequations}\label{Eq:Reference3b}
\begin{equation}\label{MDS3state}
    \ket{\widetilde{\psi}}=\cos{\theta_0}\ketA{0|0}\ketB{1|0}+\sin{\theta_0}\ketA{1|0}\ketB{0|0}
\end{equation}
the measurement [cf.~\cref{Eq:Obs->POVM}]
\begin{equation}\label{MDS3meas}
    \begin{split}
            \widetilde{A}_0=\sigma_z,  \quad
	    \widetilde{A}_1=\cos{2\alpha_0}\,\sigma_z-\sin{2\alpha_0}\,\sigma_x, \\
	    \widetilde{B}_0=\sigma_z, \quad
	    \widetilde{B}_1=-\cos{2\beta_0}\,\sigma_z-\sin{2\beta_0}\,\sigma_x,
    \end{split}
\end{equation}
\end{subequations}
where $\theta_0, \alpha_0$ are defined, respectively, in \cref{Eq_theta} and \cref{Eq:alpha}, $\beta_0=\frac{\pi}{2}-\alpha_0$, while $\ketA{0,1|0}$ and $\ketB{0,1|0}$ are taken as the $\pm1$ eigenstate of the Pauli matrix $\sigma_z$ in \cref{MDS3meas}.

To prove the lemma, we first note that the quantum strategy specified in \cref{Eq:Reference3b} is indeed an example of those characterized in \cref{App:MaxCHSH-Q3b} with $\theta=\theta_0$, $\alpha=\alpha_0$, $\beta=\beta_0$, and $\phi=\gamma_A=\gamma_B=\omega_A=\omega_B=0$. Next, note that for a general choice of these parameters, we have
\begin{equation}\label{Eq:3b:genericobservables}
    \begin{split}
    	&\qquad\quad A_0 = B_0 = \sigma_z=\begin{bmatrix}
            1 & 0 \\
            0 & -1
            \end{bmatrix},\\
	    {A}_1&=\begin{bmatrix}
            \cos2\alpha & -e^{i(\gamma_A+\omega_A)}\sin2\alpha \\
            -e^{-i(\gamma_A+\omega_A)}\sin2\alpha & -\cos2\alpha
            \end{bmatrix},\\
	    {B}_1&=\begin{bmatrix}
            -\cos2\beta & -e^{-i(\gamma_B+\omega_B)}\sin2\beta \\
            -e^{i(\gamma_B+\omega_B)}\sin2\beta & \cos2\beta
            \end{bmatrix}.
    \end{split}
\end{equation}
Moreover, for any fixed value of $\gamma_A, \gamma_B, \omega_A, \omega_B,$ and $\phi$, all the 32 strategies characterized in \cref{App:MaxCHSH-Q3b} related to different choices of $\theta$, $\alpha$, and $\beta$ only lead to different combination of signs in $\tan\theta, \sin2\alpha, \cos2\alpha, \sin2\beta$, and $\cos2\beta$. The strategy of \cref{Eq:Reference3b}, in particular, corresponds to a choice of positive sign for all these parameters.

With these observations in mind, it is not difficult to verify that the unitaries 
\begin{equation}~\label{uab3_self-test}
    \begin{split}
        u_A&=\begin{bmatrix}
    e^{-i\gamma_A}\frac{\cos{\alpha}}{|\cos{\alpha}|} & 0 \\
    0 & e^{i\omega_A}\frac{\sin{\alpha}}{|\sin{\alpha}|}
    \end{bmatrix},\\
    u_B&=\begin{bmatrix}
     e^{i\gamma_B}\frac{\cos{\beta}}{|\cos{\beta}|} & 0 \\
    0 & e^{-i\omega_B}\frac{\sin{\beta}}{|\sin{\beta}|}
    \end{bmatrix},
    \end{split}
\end{equation}
transform the state of \cref{state_two_zeros-in_same_block} to the reference state of \cref{MDS3state} via \cref{Eq:Equivalent:State}, while ensuring that the observables of \cref{Eq:3b:genericobservables} transform to those of \cref{MDS3meas} via \cref{Eq:Equivalent:Meas} for all $a,x,b,y$. To arrive \cref{Eq:Equivalent:State}, we have also made use of the relation of the sign of $\zeta$ and $\tan\theta$, as mentioned above \cref{eq_of_theta1}. This completes the proof that all qubit strategies giving the maximal CHSH  value in $\Q_{3b}$ are $2$-equivalent.

The last ingredient that we need is the observation that $\vecPQ$ is extremal in $\Q$, which follows from Fact~\ref{CHSH:ExtremeQ} and the fact in our characterization (see~\cref{App:MaxCHSH-Q3b}) of {\em all} qubit strategies leading to the maximal CHSH violation in Class 3b, {\em only} $\vecPQ$ leads to this largest violation while complying with the constraint of Class 3b (cf. \cref{tab:class3b}). A direct application of \cref{Thm:self-test:upgrade} using the extremality of $\vecPQ$ and the just-established $2$-equivalence of all qubit strategies leading to this maximum completes the proof.
\end{proof}

\section{Quantum correlation giving maximal CHSH value in $\Q_{2b}$ is a self-test}
\label{app-st2}

\subsection{Maximal CHSH value for $\vecP\in\Q_{2b}$}
\label{App:MaxCHSH-Q2b}

With the measurement bases defined in \cref{Eq_MeasBases} and \cref{Eq_Observables}, we may write a general pure two-qubit state $\ket{\psi}$ as
\begin{equation}
    \ket{\psi}=\sum_{i=0}^{1}\left(c_{i0}\ketA{i|0}\ketB{0|0}+c_{i1}\ketA{i|1}\ketB{1|0}\right),
\end{equation}
where $\sum_{i,j=0}^{1}|c_{ij}|^2=1$. Since $\vecP\in\Q_{2b}$ satisfies $P(0,0|0,0)=P(1,1|1,0)=0$, these constraints imply, respectively, that  $c_{00}=c_{11}=0$. We may thus restrict our attention to states of the form:
\begin{equation}\label{state_two_zeros-state}
    \ket{\psi}=\cos{\theta}\ketA{1|0}\ketB{0|0}+e^{i\phi}\sin{\theta}\ketA{0|1}\ketB{1|0}.
\end{equation}

To proceed, let the local measurement bases be related via \cref{Eq:Transformation} by
\begin{equation}\label{uab2}
\begin{split}
	 U_A=\begin{bmatrix}
    e^{-i\gamma_A}\cos{\alpha} & e^{-i\omega_A}\sin{\alpha} \\
    -e^{i\omega_A}\sin{\alpha} & e^{i\gamma_A}\cos{\alpha}
    \end{bmatrix},\\
	 U_B=\begin{bmatrix}
    e^{-i\gamma_B}\cos{\beta} & e^{-i\omega_B}\sin{\beta} \\
     -e^{i\omega_B}\sin{\beta} & e^{i\gamma_B}\cos{\beta}
    \end{bmatrix}.
\end{split}
\end{equation}

Using \cref{state_two_zeros-state} and \cref{uab2} in \cref{Eq:CHSH} and \cref{Eq:Born}, we arrive at the  Bell value:
\begin{align}~\label{S_class2b}
        \SCHSH &= 2-4\sin^2{\alpha}(\cos^2{\theta}\sin^2{\beta}+\sin^2{\theta}\cos^2{\beta})\nonumber\\
        &+2\zeta\sin{2\theta}\sin{\alpha}\sin{2\beta}
\end{align}
where $\zeta:= \cos(\omega_B+\gamma_B-\phi-\omega_A)$.

Since the variables $\phi, \omega_A, \omega_B, \gamma_B$ appear in $\SCHSH$ only through $\zeta$, it is clear that for $\SCHSH$ to arrive at its largest value, we must have $\zeta$ taking its extremal value of $\pm1$, in accordance with the sign of $\sin{2\theta}\sin{\alpha}\sin{2\beta}$. We denote the resulting the Bell value, respectively, by
\begin{align}~\label{S_class2bpm}
        \S_\pm &= 2-4\sin^2{\alpha}(\cos^2{\theta}\sin^2{\beta}+\sin^2{\theta}\cos^2{\beta})\nonumber\\
        &\pm2\sin{2\theta}\sin{\alpha}\sin{2\beta}.
\end{align}
For a Bell violation to be possible (i.e., for either of $\S_\pm$ to be larger than the local bound of $2$), the local measurements should not commute and the state must be entangled. It can then be verified from \cref{state_two_zeros-state} and \cref{S_class2bpm} that the remaining variables must again satisfy \cref{Eq_constraints}.

To find the largest value of $\SCHSH$, we  proceed using standard variational techniques by first computing the partial derivatives of $\S_\pm$ with respect to $\alpha$, $\beta$, $\theta$ and setting each of them to zero to get, correspondingly,
\begin{gather}
    \sin{\alpha}=\pm \frac{\sin{2\theta}\sin{2\beta}}{4(\cos^2{\beta}\sin^2{\theta}+\sin^2{\beta}\cos^2{\theta})},\label{dsa}\\
    \pm \sin{2\theta}\cos{2 \beta}-\sin{\alpha}\sin{2\beta}\cos2\theta=0,\label{dsb}\\
    \pm\cos{2\theta}\sin{2 \beta}-\sin{\alpha}\sin{2\theta}\cos2\beta=0.\label{dst}
\end{gather}
By substituting \cref{dsa} into \cref{dsb} and \cref{dst}, and upon reduction, one eventually arrives at
\begin{equation}\label{const1}
    |\sin{\theta}|=|\cos{\theta}|=|\sin{\beta}|=|\cos{\beta}|=\frac{1}{\sqrt{2}},
\end{equation}
and hence $\sin{\alpha}=\pm \frac{1}{2}\sin{2\beta}\sin{2\theta}$. 
By considering all possible values of $\theta, \alpha, \beta\in [0,2\pi)$ meeting these constraints, one finds that the maximal value of \cref{S_class2b} is $\SCHSH=\frac{5}{2}$, which is attainable 
{\em iff} $\alpha\in \{\frac{\pi}{6},\frac{5\pi}{6},\frac{7\pi}{6},\frac{11\pi}{6}\}$, $\beta\in \{ \frac{\pi}{4},\frac{3\pi}{4},\frac{5\pi}{4},\frac{7\pi}{4}\}$, and $\theta\in \{ \frac{\pi}{4},\frac{3\pi}{4},\frac{5\pi}{4},\frac{7\pi}{4}\}$. By explicit calculations, it can  be verified that all these strategies give exactly $\vecPQiii$. Since this is the only correlation in $\Q_3$ giving this maximal value, $\vecPQiii$ is an extreme point of $\Q$.

\subsection{Self-testing based on the maximally-CHSH-violating correlation in $\Q_{2b}$}
\label{App:Self-test:Class2b}

\begin{theorem}~\label{self-test_state&meas2}
If the CHSH value of $\frac{5}{2}$ is observed alongside the zeros shown in~\cref{tab:class2b}, then the underlying state and measurements are equivalent to those given in \cref{Eq:Class2b} and \cref{Eq:2b:Parameters} up to local isometries, i.e., $\vecPQiii$ self-tests this particular quantum realization.
\end{theorem}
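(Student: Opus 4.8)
The plan is to follow verbatim the template established in the proof of Theorem~\ref{self-test_state&meas3} for Class 3b. Since \cref{App:MaxCHSH-Q2b} already characterizes \emph{all} qubit strategies attaining $\SCHSH=\frac52$ in $\Q_{2b}$---namely the state of \cref{state_two_zeros-state} with $\theta\in\{\frac\pi4,\frac{3\pi}4,\frac{5\pi}4,\frac{7\pi}4\}$, $\alpha\in\{\frac\pi6,\frac{5\pi}6,\frac{7\pi}6,\frac{11\pi}6\}$, $\beta\in\{\frac\pi4,\frac{3\pi}4,\frac{5\pi}4,\frac{7\pi}4\}$, with $\zeta=\pm1$ fixing the remaining phases---the first step is to show that every such strategy is $2$-equivalent, in the sense of \cref{Dfn:Equivalent}, to the single reference strategy of \cref{Eq:Class2b} with \cref{Eq:2b:Parameters}, i.e.\ the choice $\theta=\beta=\frac\pi4$, $\alpha=\frac\pi6$, and all phases set to zero.

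The main work is the explicit construction of the local gauge. Following the form of the diagonal unitaries in \cref{uab3_self-test}, I would write down $u_A=\mathrm{diag}\!\left(e^{-i\gamma_A}\tfrac{\cos\alpha}{|\cos\alpha|},\,e^{i\omega_A}\tfrac{\sin\alpha}{|\sin\alpha|}\right)$ and an analogous $u_B$, and verify that $u_A\otimes u_B$ simultaneously (i) maps the generic state of \cref{state_two_zeros-state} to the reference state via \cref{Eq:Equivalent:State} and (ii) brings the generic observables determined by \cref{uab2} to the reference observables via \cref{Eq:Equivalent:Meas}, for all $a,x,b,y$. The point is that the admissible strategies differ from the reference one only through the signs of $\sin2\alpha,\cos2\alpha,\sin2\beta,\cos2\beta$ and the relative phase $\zeta$; a diagonal phase/sign unitary on each side is exactly rich enough to absorb these discrepancies, provided one uses the correlation between the sign of $\zeta$ and that of $\sin2\theta\sin\alpha\sin2\beta$ recorded in \cref{App:MaxCHSH-Q2b} when aligning the relative phase of the two Schmidt branches.

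To close the argument, I would invoke extremality: \cref{App:MaxCHSH-Q2b} establishes that $\vecPQiii$ is the \emph{unique} correlation in $\Q_{2b}$ reaching $\SCHSH=\frac52$, so by Fact~\ref{CHSH:ExtremeQ} it is an extreme point of $\Q$. With extremality together with the $2$-equivalence of all maximizing qubit strategies in hand, a direct application of \cref{Thm:self-test:upgrade} promotes the observation to the desired self-testing statement, completing the proof.

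The hard part will be the discrete bookkeeping rather than any deep estimate: one must check that a \emph{single} pair of diagonal local unitaries works uniformly across every admissible combination of $(\theta,\alpha,\beta,\zeta)$. The subtlety is that the outcome relabelings implicitly induced by flipping the sign of $\cos2\alpha$ (or $\cos2\beta$) must remain compatible with the phase gauge that aligns the state; matching these two requirements simultaneously, exactly as in the Class 3b case, is where the care is needed.
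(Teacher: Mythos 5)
Your proposal is correct and follows essentially the same route as the paper's proof in \cref{App:Self-test:Class2b}: use the exhaustive characterization of the $\SCHSH=\frac{5}{2}$ strategies from \cref{App:MaxCHSH-Q2b}, absorb the residual sign and phase freedom with diagonal local unitaries (the paper's \cref{uab2_self-test}, whose phases are conjugate to your Class-3b-modelled guess because of the differing convention in \cref{uab2}), and then combine the resulting $2$-equivalence with the extremality of $\vecPQiii$ via \cref{Thm:self-test:upgrade}. The only detail to fix in execution is that sign convention on $e^{\pm i\gamma_A}$, which your planned verification step would catch.
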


\begin{proof}
From the unitaries given in~\cref{uab2}, the state of~\cref{state_two_zeros-state} can be written explicitly in the bases of $\{\ketA{0|0},\ketA{1|0}\}$ and $\{\ketB{0|0},\ketB{1|0}\}$ as
\begin{align}~\label{generalstate}
    \ket{\psi}&=\cos{\theta}\ketA{1|0}\ketB{0|0} \\
    &+e^{i\phi}\sin{\theta}(e^{-i\gamma_A} \cos{\alpha}\ketA{0|0}-e^{i\omega_A}\sin{\alpha}\ketA{1|0})\ketB{1|0},\nonumber
\end{align}
whereas the corresponding observables read as:
\begin{equation}\label{Eq:2b:genericobservables}
    \begin{split}
    	&\qquad\quad A_0 = B_0 = \sigma_z=\begin{bmatrix}
            1 & 0 \\
            0 & -1
            \end{bmatrix},\\
	    {A}_1&=\begin{bmatrix}
            \cos2\alpha & -e^{-i(\gamma_A+\omega_A)}\sin2\alpha \\
            -e^{i(\gamma_A+\omega_A)}\sin2\alpha & -\cos2\alpha
            \end{bmatrix},\\
	    {B}_1&=\begin{bmatrix}
            \cos2\beta & -e^{-i(\gamma_B+\omega_B)}\sin2\beta \\
            -e^{i(\gamma_B+\omega_B)}\sin2\beta & -\cos2\beta
            \end{bmatrix}.
    \end{split}
\end{equation}
The quantum strategy of \cref{Eq:Class2b} with \cref{Eq:2b:Parameters} is simply a special case of the above corresponding to $\alpha=\frac{\pi}{6}$, $\beta=\theta=\frac{\pi}{4}$, $\gamma_A=\gamma_B=\omega_A=\omega_B=\phi=0$, giving
\begin{subequations}\label{Eq:Reference2b}
\begin{align}
    \ket{\widetilde{\psi}}&=\frac{1}{\sqrt{2}}\left(\frac{\sqrt{3}}{2}\ket{0}\!\ket{1}+\ket{1}\!\ket{0}-\frac{1}{2}\ket{1}\!\ket{1}\right),\label{MDS2state}\\
    \begin{split}\label{MDS2meas}
            &\widetilde{A}_0=\sigma_z,  \quad
	    \widetilde{A}_1=\frac{1}{2}\sigma_z-\frac{\sqrt{3}}{2}\,\sigma_x, \\
	    &\widetilde{B}_0=\sigma_z, \quad
	    \widetilde{B}_1=-\,\sigma_x.
    \end{split}
\end{align}
\end{subequations}

Now, notice that for any fixed value of $\gamma_A, \gamma_B, \omega_A, \omega_B,$ and $\phi$, all the 64 strategies characterized in \cref{App:MaxCHSH-Q2b} related to different choices of $\theta$, $\alpha$, and $\beta$ only lead to different combination of signs in $\tan\theta, \sin2\alpha, \cos2\alpha, \sin2\beta$, and $\cos2\beta$. The strategy of \cref{Eq:Reference2b}, in particular, corresponds to a choice of positive sign for all these parameters.

It is then easy to verify that the following unitaries
\begin{equation}~\label{uab2_self-test}
    \begin{split}
        u_A&=\begin{bmatrix}
    e^{i\gamma_A}\frac{\cos{\alpha}}{|\cos{\alpha}|} & 0 \\
    0 & e^{-i\omega_A}\frac{\sin{\alpha}}{|\sin{\alpha}|}
    \end{bmatrix},\\
    u_B&=\begin{bmatrix}
     e^{i\gamma_B}\frac{\cos{\beta}}{|\cos{\beta}|} & 0 \\
    0 & e^{-i\omega_B}\frac{\sin{\beta}}{|\sin{\beta}|}
    \end{bmatrix}
    \end{split}
\end{equation}
establish the $2$-equivalence of all these qubit strategies. In other words, they transform, via \cref{Eq:Dfn:d-equivalent}, the state of \cref{generalstate} and the observables of \cref{Eq:2b:genericobservables}, respectively, to the reference state of \cref{MDS2state} and reference observables of \cref{MDS2meas} for all $a,x,b,y$. Note that in arriving at \cref{MDS2state}, we have also made use of the fact that the sign of $\zeta= \cos(\omega_B+\gamma_B-\phi-\omega_A)$ follows that of of $\sin{2\theta}\sin{\alpha}\sin{2\beta}$.  

Finally, a straightforward application of \cref{Thm:self-test:upgrade} using the just-established $2$-equivalence and the fact that $\vecPQiii$ is extremal in $\Q$ (see the end of \cref{App:MaxCHSH-Q2b}) completes the proof of the theorem.
\end{proof}

\section{Quantum correlation giving maximal CHSH value in $\Q_1$ is a self-test}\label{App:Self-test:Class1}

To find the maximal CHSH violation for correlations in this Class, one may follow the procedure given in \cref{App:MaxCHSH-Q3b} or \cref{App:MaxCHSH-Q2b} by considering general qubit unitary matrices of unit determinant connecting the different measurement bases. For simplicity, however, we note that in the maximization of $\SCHSH$, it suffices to consider extreme points in $\Q$. By Fact~\ref{CHSH:ExtremeQ2}, it thus suffices to consider observables of the form of \cref{Eq:Meas:Class1}.
Consider now the Bell operator~\cite{Braunstein1992} corresponding to the Bell inequality of \cref{Eq:CHSH}
\begin{equation}
    \begin{split}
        \B &= -A_0\otimes B_0-A_1\otimes B_0-A_0\otimes B_1+ A_1\otimes B_1.
    \end{split}
\end{equation}
The maximal value of $\SCHSH$ is obtained by measuring the observables of \cref{Eq:Meas:Class1} on the eigenstate $\ket{\psi}$ of $\mathcal{B}$ giving the largest eigenvalue.

For correlations from Class 1, however, we also have the constraint $P(0,0|0,0)=|\braket{00|\psi}|^2=0$. Thus, to find the maximal CHSH value achievable by $\vecP\in\Q_1$, we may consider $\ket{\psi}$ to be an eigenstate of the principle $3\times 3$ sub-matrix $\B'$ of $\B$ related to the span of \{$\ket{10},\ket{01},\ket{11}$\}, i.e., 
\begin{equation}
    \B'=
\begin{bmatrix}
\tau  & S_\alpha S_\beta &  (1- C_\beta)S_\alpha \\
 S_\alpha s_\beta & \tau & (1-C_\alpha)S_\beta \\
 (1- C_\beta)S_\alpha & (1-C_\alpha)s_\beta & -\tau \\
\end{bmatrix}.
\end{equation}
where  $\tau:=1+C_\alpha+C_\beta-C_\alpha C_\beta$,  $C_\alpha=\cos\alpha$,  and $S_\beta=\sin\beta$, etc.

To compute the eigenvalue of $\B'$, note that its characteristic polynomial is 
\begin{equation}
   -\lambda ^3 +\tau\lambda ^2 +4 \lambda - 4 C_\alpha^2S_\beta^2+4 C_\alpha (C_\beta-1)-4 C_\beta (C_\beta+1),
\end{equation}
where
$\lambda$ is the eigenvalue. Using $\mathsf{Mathematica}$, we find that the largest value of $\SCHSH$, i.e., the largest real value of $\lambda$  is $\xi_3 \approx 2.6353$ with the unique parameters $C_\alpha = C_\beta=\xi_1 \approx 0.2862$, and the corresponding eigenvector is
\begin{equation}
    \ket{\psi} = \frac{\sin{\theta}}{\sqrt{2}}\left(\frac{\sin{\alpha}}{|{\sin{\alpha}}|}\ket{0}\!\ket{1}+\frac{\sin{\beta}}{|{\sin{\beta}}|}\ket{1}\!\ket{0}\right)+\cos{\theta}\ket{1}\!\ket{1},
\end{equation}
where $\xi_1$, $\xi_2$ and $\xi_3$ are, respectively, the smallest positive roots of the cubic polynomials given in \cref{Eq:Poly}.

\begin{theorem}~\label{self-test_state&meas1}
If the CHSH value of $\xi_3$ is observed alongside $P(0,0|0,0)=0$, then the underlying state and measurements are equivalent to those given in \cref{Eq:Class1}
up to local isometries, i.e., $\vecPQiv$ self-tests this particular quantum realization.
\end{theorem}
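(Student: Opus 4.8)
The plan is to mirror the proofs of the two preceding self-testing theorems (cf.~\cref{App:Self-test:Class3b} and~\cref{App:Self-test:Class2b}) and reduce the claim to a single application of \cref{Thm:self-test:upgrade}. Concretely, it suffices to establish two things: that $\vecPQiv$ is an extreme point of $\Q$, and that \emph{every} qubit strategy reproducing the Bell value $\SCHSH=\xi_3$ together with $P(0,0|0,0)=0$ is $2$-equivalent to the reference strategy of \cref{Eq:Class1}. Extremality is the quicker of the two: the characterization preceding this theorem shows that, among the canonical extreme-point strategies of \cref{CHSH:ExtremeQ2}, the value $\xi_3$ is reached \emph{only} at the unique parameters $C_\alpha=C_\beta=\xi_1$, so $\vecPQiv$ is the unique CHSH-maximizer inside $\Q_1$. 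Since the constraint $P(0,0|0,0)=0$ is preserved under convex decomposition (a vanishing convex combination of non-negative probabilities forces each term to vanish), any decomposition of $\vecPQiv$ would consist of correlations lying in $\Q_1$, and uniqueness of the maximizer then forbids a nontrivial decomposition; hence $\vecPQiv$ is extremal by \cref{CHSH:ExtremeQ}.

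The substantive part is the $2$-equivalence. First I would set up a completely general qubit strategy exactly as in \cref{App:MaxCHSH-Q3b} and \cref{App:MaxCHSH-Q2b}: choosing local bases in which $A_0=B_0=\sigma_z$, I write $A_1$ and $B_1$ as the most general traceless dichotomic observables carrying relative phases $\gamma_A,\omega_A,\gamma_B,\omega_B$ [in the form of~\cref{Eq:3b:genericobservables}], and take the most general pure state compatible with the single constraint $P(0,0|0,0)=|\braket{00|\psi}|^2=0$, namely one supported on $\mathrm{span}\{\ket{01},\ket{10},\ket{11}\}$ and carrying its own relative phases. Imposing $\SCHSH=\xi_3$ then collapses onto the real-coefficient eigenvalue problem for $\B'$ already solved above: the phases enter $\SCHSH$ only through cosines, so attaining the maximal Bell value forces each such cosine to its extremal value $\pm1$, after which the magnitude parameters are pinned to $C_\alpha=C_\beta=\xi_1$ and $\theta=\cos^{-1}\!\sqrt{\xi_2}$, leaving only the phases and a finite set of sign choices free.

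Next I would produce explicit diagonal local unitaries $u_A,u_B$ of the type appearing in \cref{uab3_self-test} and \cref{uab2_self-test} --- built from the phase factors $e^{\pm i\gamma},e^{\pm i\omega}$ and the signs $\mathrm{sgn}(\cos\alpha),\mathrm{sgn}(\sin\alpha)$, etc.\ --- and verify through \cref{Eq:Dfn:d-equivalent} that they simultaneously carry the general state to \cref{Eq:StateClass1} and the general observables to \cref{Eq:1:Observables}, absorbing all residual phases and signs. As in the earlier proofs, one keeps track of the correlation between the sign of each phase combination and the sign of $\tan\theta$. Once $2$-equivalence of all admissible qubit strategies to the single reference strategy is in hand, combining it with the extremality of $\vecPQiv$ and invoking \cref{Thm:self-test:upgrade} finishes the proof.

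The hard part will be bookkeeping rather than conceptual novelty: because Class~1 imposes only one zero, the general qubit strategy here carries the largest parameter count of any Class treated in this paper, and the reference strategy of \cref{Eq:Class1} is permutation-invariant ($A_i=B_i$). I would therefore need to check carefully that the CHSH constraint \emph{alone}, with no additional zeros to exploit, still drives every phase-cosine to $\pm1$ and merges all solution branches into the single correlation $\vecPQiv$, and that the extra permutation symmetry of the reference does not generate distinct solution families escaping the diagonal unitaries. Verifying that each sign-and-phase branch is genuinely $2$-equivalent to the one reference strategy, rather than to a merely relabeled copy of it, is the delicate point on which the argument turns.
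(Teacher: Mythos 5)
Your plan matches the paper's proof: the paper likewise establishes that $\xi_3$ is attained only at $C_\alpha=C_\beta=\xi_1$ within the canonical parametrization of \cref{CHSH:ExtremeQ2}, absorbs the residual sign freedom of $\sin\alpha,\sin\beta$ with a local $\sigma_z$, concludes that $\vecPQiv$ is the unique maximizer in $\Q_1$ and hence extremal, and invokes \cref{Thm:self-test:upgrade}. The only difference is that you propose to carry out the fully general complex-phase analysis explicitly, whereas the paper restricts to the real canonical form and relegates the general case to a footnote asserting it can be handled as in \cref{App:MaxCHSH-Q3b} and \cref{App:MaxCHSH-Q2b}; your version would therefore be, if anything, more complete than the published argument.
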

\begin{proof}
Since the only remaining freedom is the sign of $\sin{\alpha}$ and $\sin{\beta}$, it can be easily seen that the local unitary $\sigma_z$ can be used to generate the required sign. Therefore, the correlation $\vecPQiv$ which is the only one giving the maximal value of $\SCHSH$ in Class 1 is an extreme point of $\Q$ and any quantum strategy realizing this correlation is $2$-equivalent to the reference state and measurements given in \cref{Eq:Class1}.\footnote{Strictly, the current proof only applies to all quantum measurements of the form given in \cref{Eq:Meas:Class1}. However, even more general measurements involving complex state and observables, such as those considered in the proof in  \cref{App:MaxCHSH-Q3b} or \cref{App:MaxCHSH-Q2b} may be shown to be $2$-equivalent to the reference strategy in a similar manner. }
An application of~\cref{Thm:self-test:upgrade} then completes the proof.
\end{proof}

\onecolumngrid

\section{Robustness results}
\label{App:Robust}

Here, we provide the remaining plots that illustrate the level of robustness of the self-testing results established for $\vecP_\text{Q}$ of \cref{Eq:vecPQ} from Class 3b, $\vecP_\text{Q,2}$ of \cref{Eq:PQ2} from Class 2a, $\vecPQiii$ of \cref{Eq:PQ3} from Class 2b, $\vecPC$ of \cref{Eq:PCabello} from Class 2c, and $\vecPQiv$ of \cref{Eq:PQ4} from Class 1. In other words, for any given value $\SCHSH$ of the (suboptimal) CHSH Bell violation and the allowed tolerance $\varepsilon$ from the zero probabilities, we show the best numerically determined lower bound on (1) the fidelity with respect to the reference state, \cref{Eq:Fidelity}, and (2) an appropriate figure of merit, \cref{Robust_ST_Meas}, that quantifies the similarity with the target measurements. Further details about these figures of merit can be found in~\cref{App:robust_self-test}.

\begin{figure}[h!tbp]
\centering
    \captionsetup{justification=RaggedRight,singlelinecheck=off}
  \includegraphics[width=0.45\linewidth]{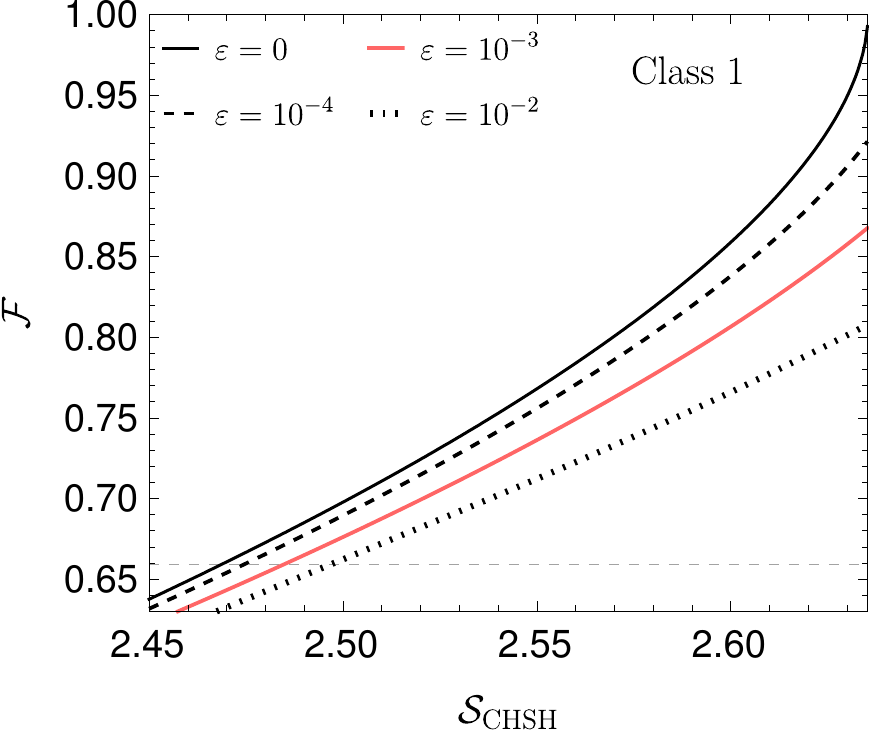}\hspace{1cm}
  \includegraphics[width=0.45\linewidth]{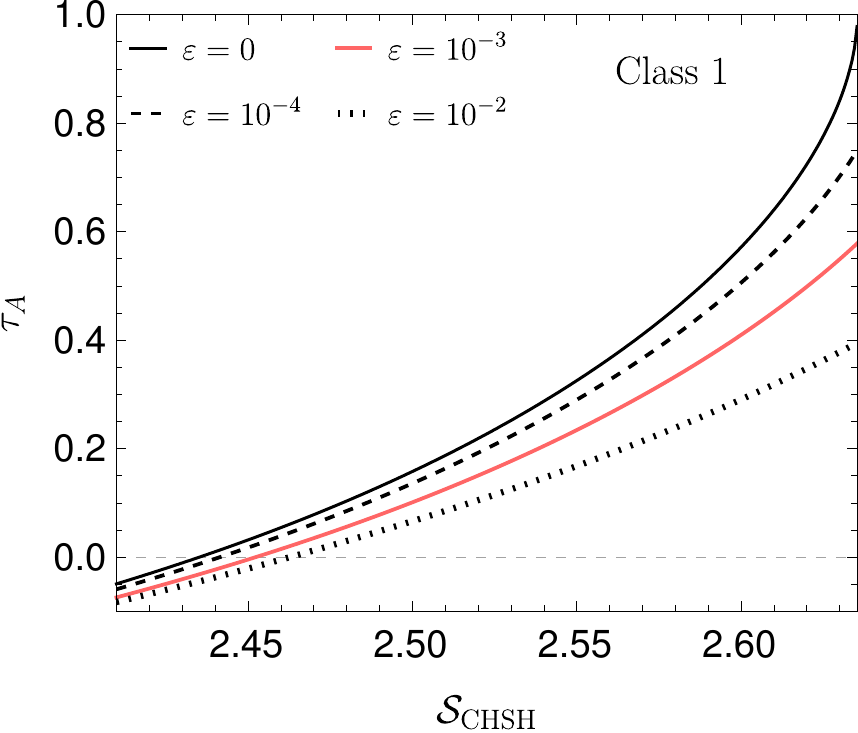}
  \caption{
  Plots illustrating the robustness of the self-testing of \cref{Eq:Class1} based on the correlation $\vecPQiv$ of \cref{Eq:PQ4} from Class 1. 
  From left to right, we show, respectively,  as a function of the Bell value $\SCHSH$, a lower bound on \cref{Eq:Fidelity} for self-testing the reference two-qubit state and a lower bound on \cref{Robust_ST_Meas} for self-testing both parties' observables.}
  \label{fig:selftest:class1}
\end{figure}

\twocolumngrid

\begin{figure}[h!tbp]
    \centering
    \captionsetup{justification=RaggedRight,singlelinecheck=off}
  \includegraphics[width=0.9\linewidth]{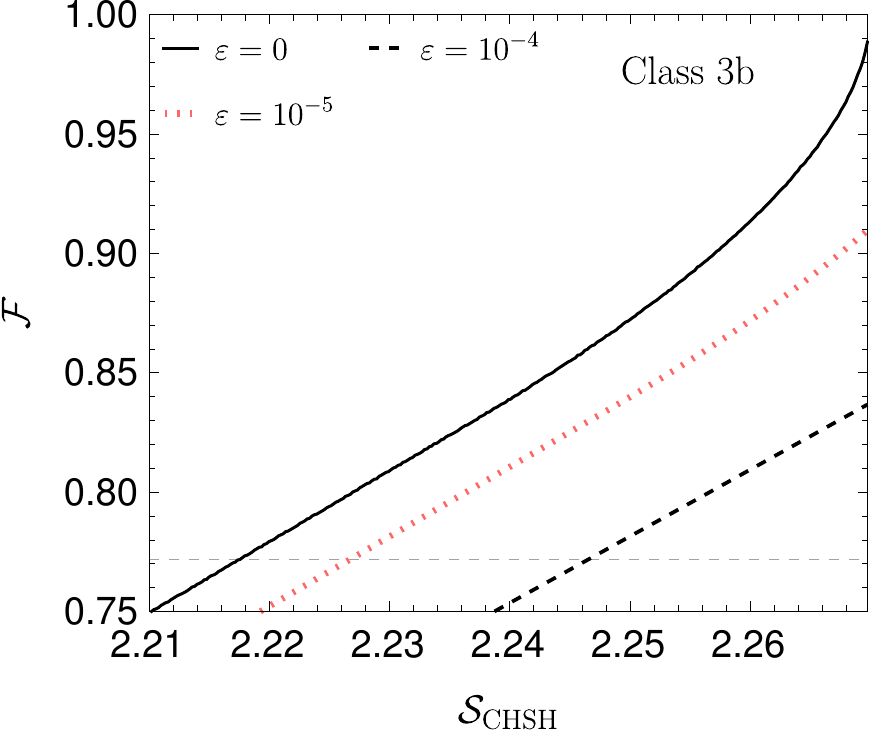}\vspace{0.5cm}
  \includegraphics[width=0.9\linewidth]{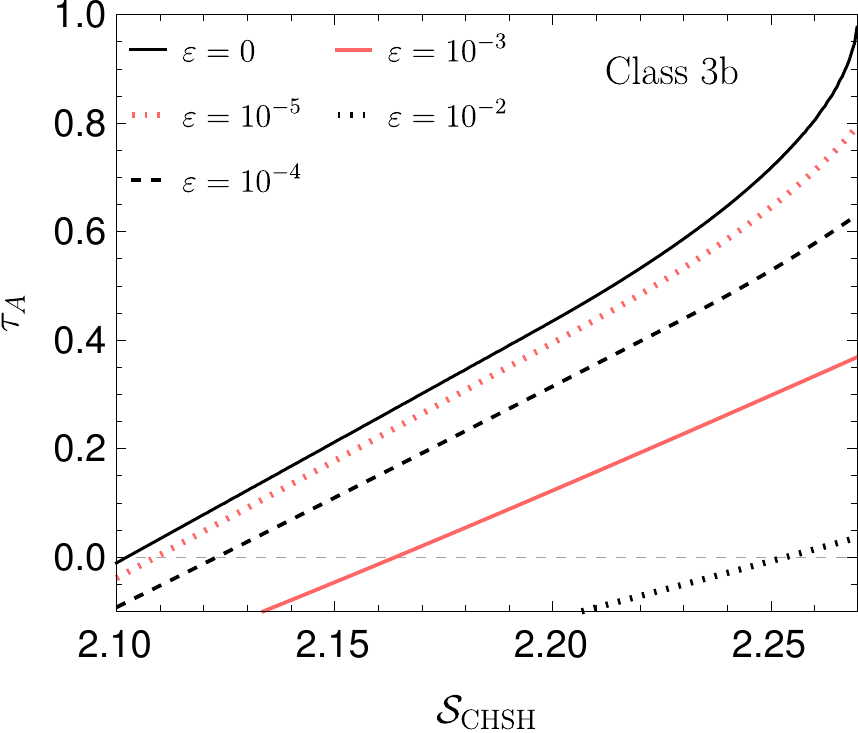}
  \includegraphics[width=0.9\linewidth]{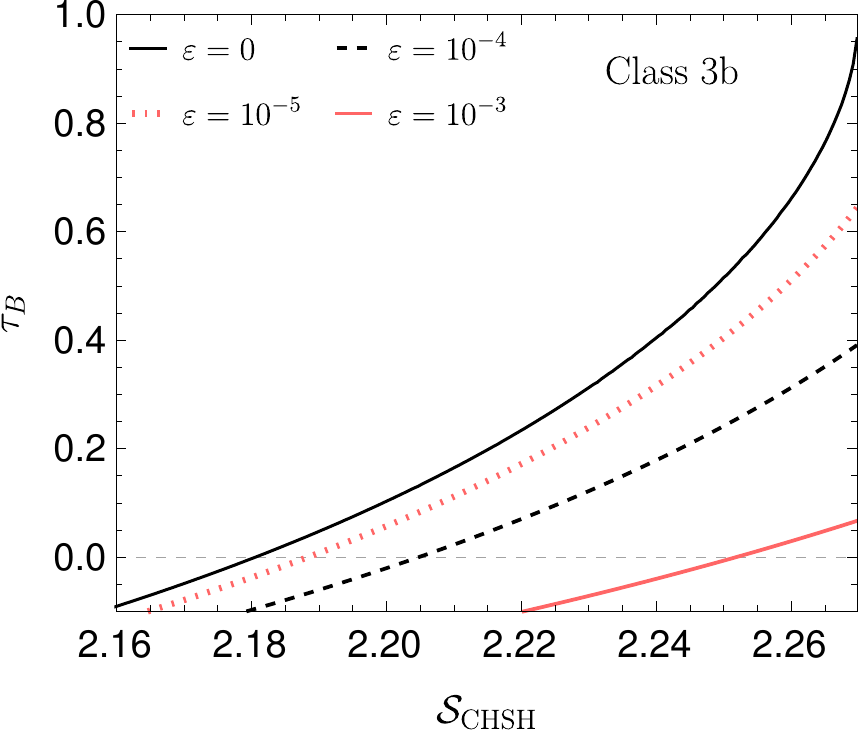}
  \caption{Plots illustrating the robustness of the self-testing of \cref{Eq:Class3b} and  \cref{Eq:3b:Parameters} based on the correlation $\vecP_\text{Q}$ of \cref{Eq:vecPQ} from Class 3b.  From top to bottom, the plots show, as a function of the Bell value $\SCHSH$, a lower bound on \cref{Eq:Fidelity} for self-testing the reference two-qubit state and a lower bound on \cref{Robust_ST_Meas} for self-testing Alice's and Bob's observables. For the significance of the dashed horizontal line and other details related to the plots, see the caption of~\cref{fig:selftest:class3a}.}
  \label{fig:selftest:class3b}
\end{figure}

\begin{figure}[h!tbp]
\centering
    \captionsetup{justification=RaggedRight,singlelinecheck=off}
  \includegraphics[width=0.9\linewidth]{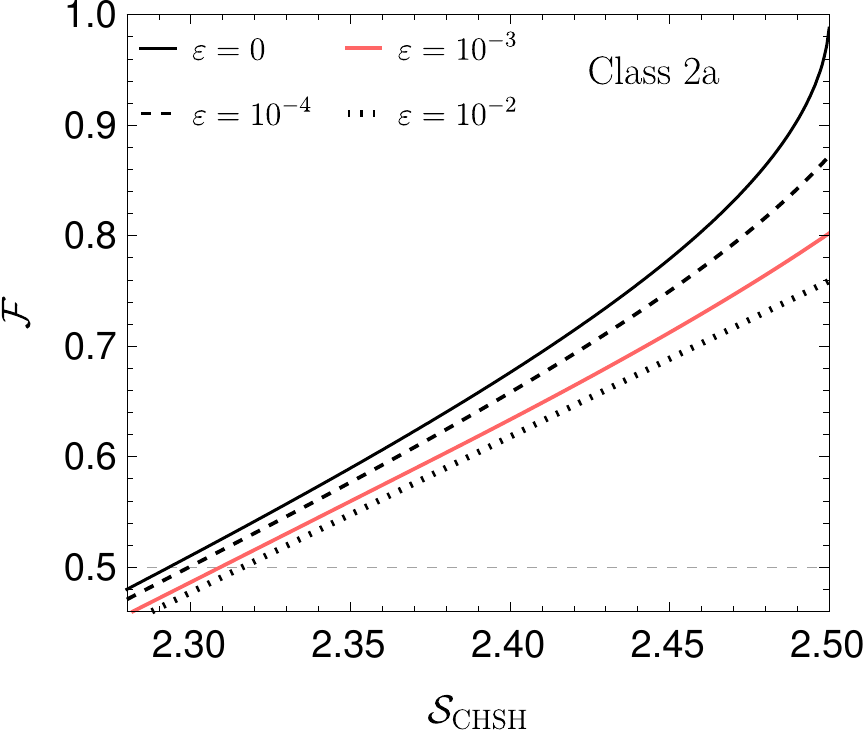}\vspace{0.5cm}
  \includegraphics[width=0.9\linewidth]{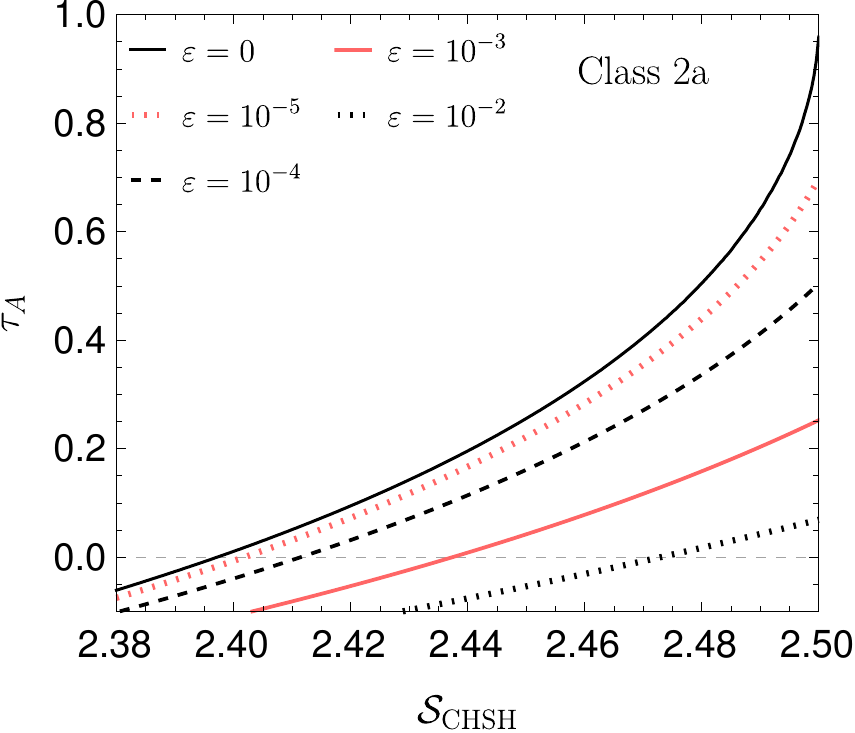}
  \includegraphics[width=0.9\linewidth]{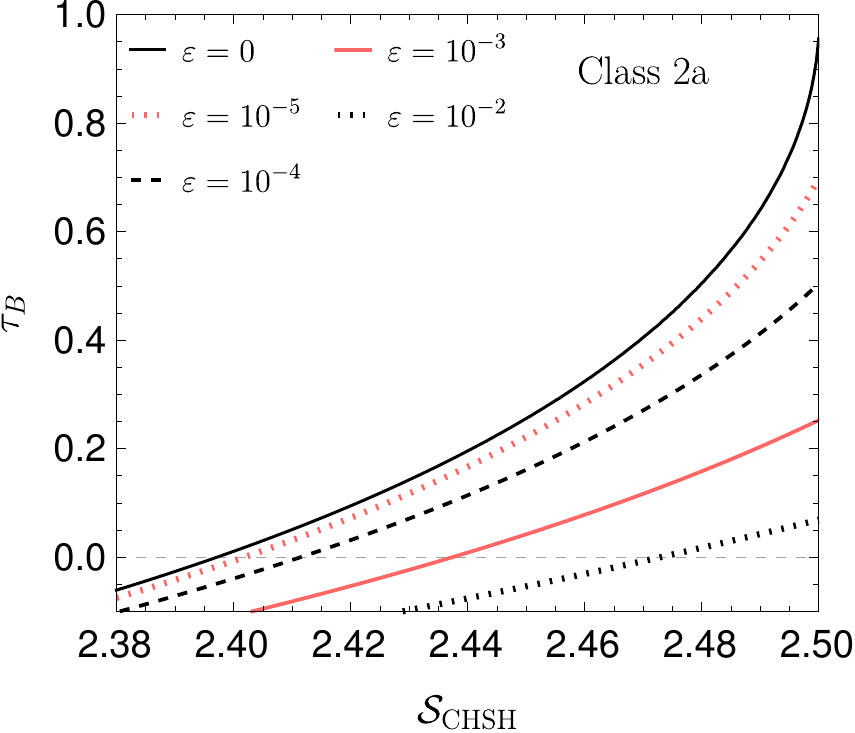}
  \caption{Plots illustrating the robustness of the self-testing of \cref{Eq:Class2a} based on the correlation $\vecP_\text{Q,2}$ of \cref{Eq:PQ2} from Class 2a. 
  From top to bottom, the plots show, as a function of the Bell value $\SCHSH$, a lower bound on \cref{Eq:Fidelity} for self-testing a Bell state and a lower bound on \cref{Robust_ST_Meas} for self-testing Alice's and Bob's observables. For the significance of the dashed horizontal line and other details related to the plots, see the caption of~\cref{fig:selftest:class3a}.} 
  \label{fig:selftest:class2a}
\end{figure}

\begin{figure}[h!tbp]
\centering
    \captionsetup{justification=RaggedRight,singlelinecheck=off}
  \includegraphics[width=0.9\linewidth]{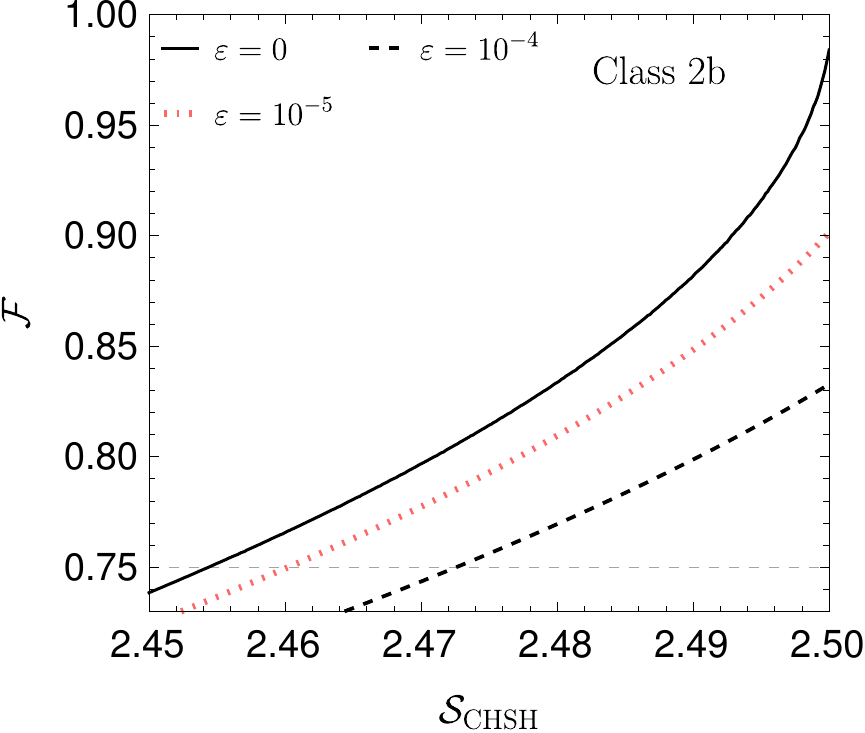}\vspace{0.5cm}
  \includegraphics[width=0.9\linewidth]{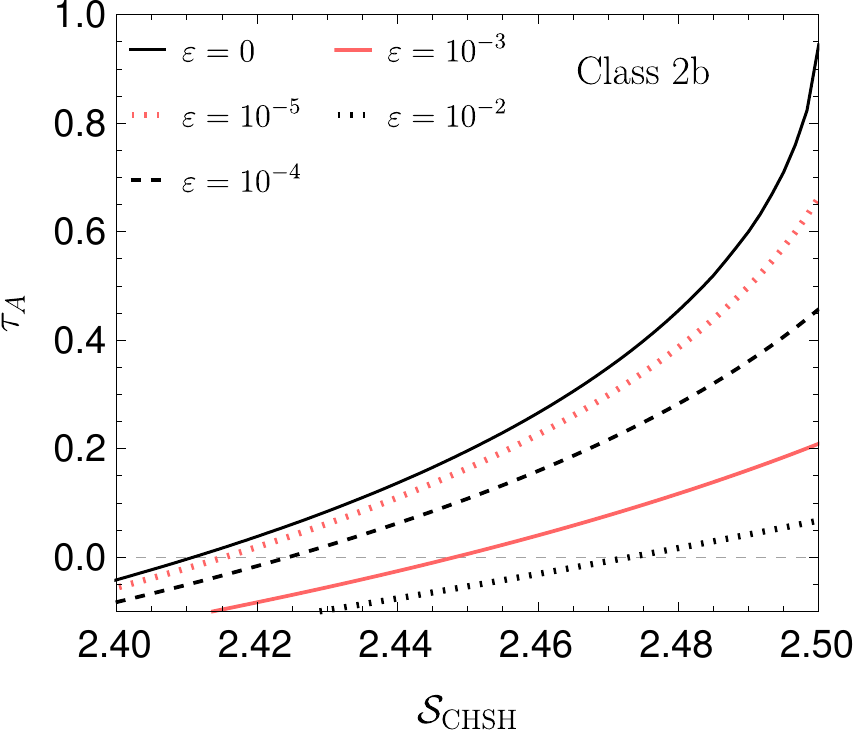}
  \includegraphics[width=0.9\linewidth]{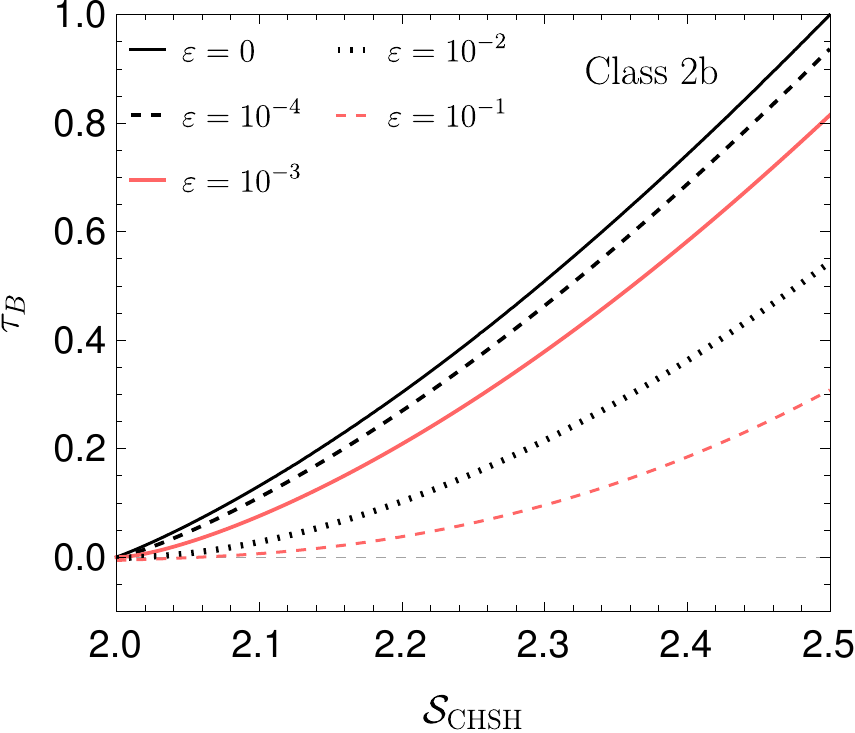}
  \caption{
  Plots illustrating the robustness of the self-testing of \cref{Eq:Class2b} and \cref{Eq:2b:Parameters} based on the correlation $\vecPQiii$ of \cref{Eq:PQ3} from Class 2b. 
  From top to bottom, the plots show as a function of the Bell value $\SCHSH$, a lower bound on \cref{Eq:Fidelity} for self-testing the reference two-qubit state and a lower bound on \cref{Robust_ST_Meas} for self-testing Alice's and Bob's observables. For the significance of the dashed horizontal line, and other details related to the plots, see the caption of~\cref{fig:selftest:class3a}.}
  \label{fig:selftest:class2b}
\end{figure}

\begin{figure}[h!tbp]
\centering
    \captionsetup{justification=RaggedRight,singlelinecheck=off}
  \includegraphics[width=0.9\linewidth]{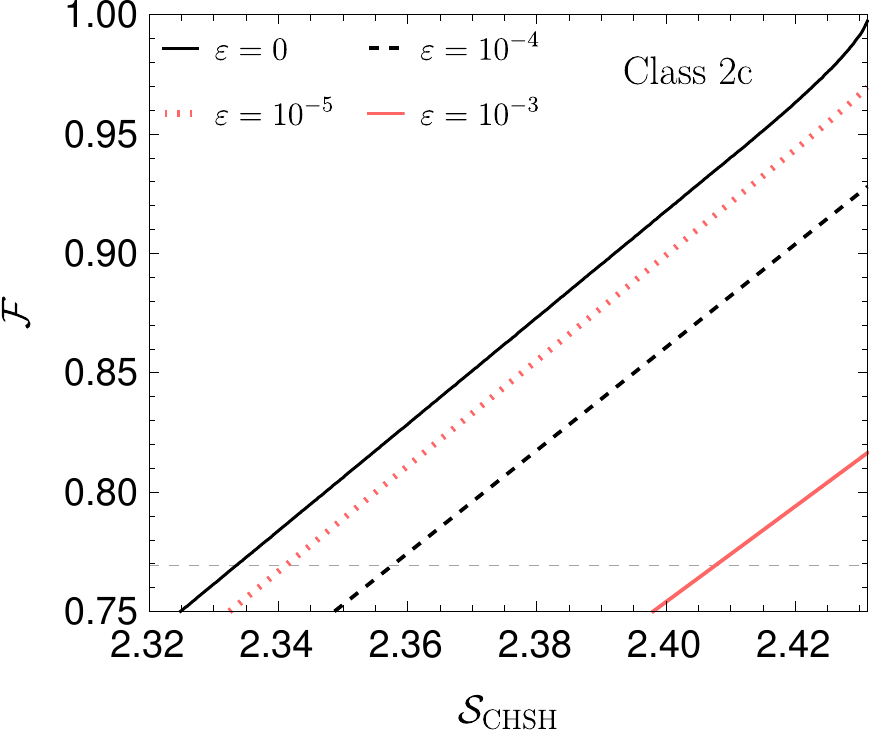}\vspace{0.5cm}
  \includegraphics[width=0.9\linewidth]{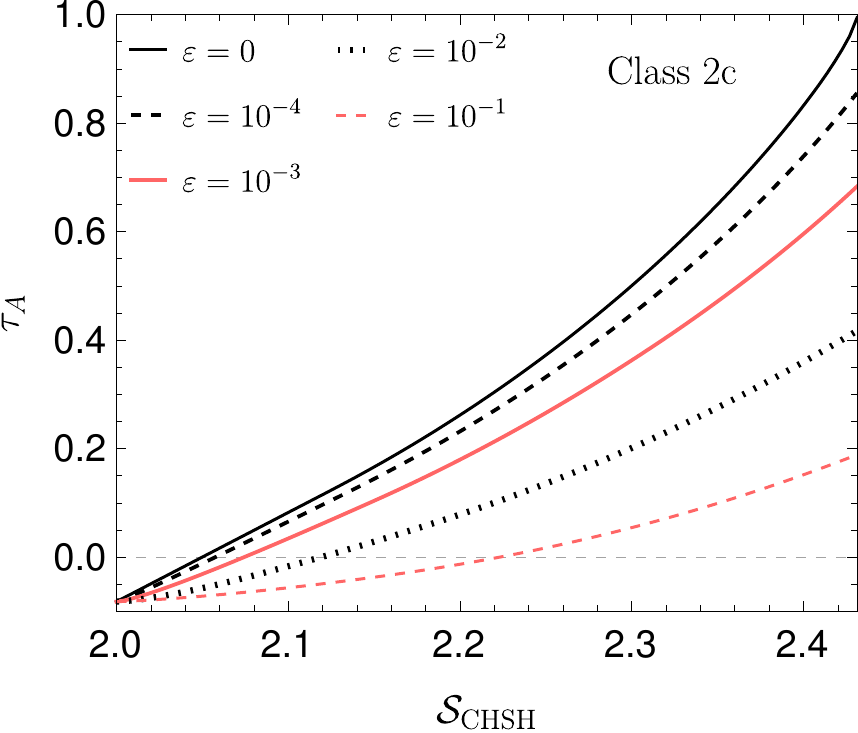}
  \includegraphics[width=0.9\linewidth]{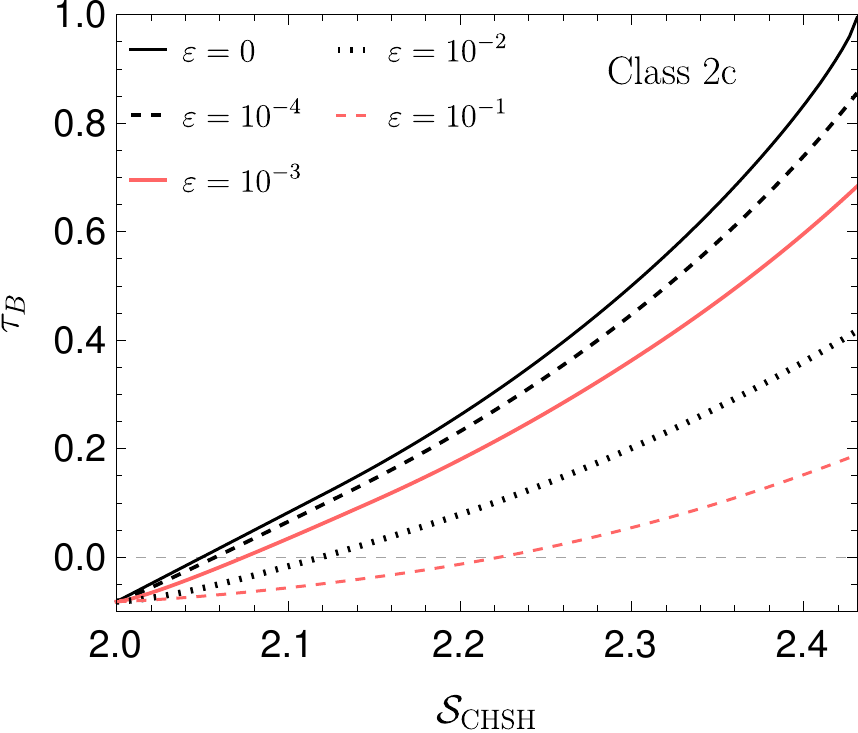}
  \caption{
  Plots illustrating the robustness of the self-testing of \cref{Eq:Class2c} with \cref{Eq:k} based on the correlation $\vecPC$ of \cref{Eq:PCabello} from Class 2c. 
  From top to bottom, the plots show as a function of the Bell value $\SCHSH$, a lower bound on \cref{Eq:Fidelity} for self-testing the reference two-qubit state and a lower bound on \cref{Robust_ST_Meas} for self-testing Alice's and Bob's observables. For the significance of the dashed horizontal line and other details related to the plots, see the caption of~\cref{fig:selftest:class3a}.}
  \label{fig:selftest:class2c}
\end{figure}

\section{Non-exposed quantum extreme points on the no-signaling boundary}

To arrive at the non-exposed property of the various example correlations given in \cref{Sec:Q-NS}, we apply the method described in Appendix H of Ref.~\cite{Goh2018}, i.e., we show that for each of the self-testing $\vecP$ described below, there is {\em no} Bell function  $\vec{B}$ such that it is the {\em unique} maximizer. For this matter, it suffices to show that for any Bell function where $\vecP$ is a maximizer, a local extreme point gives also the same Bell value. 

To this end, consider a vector $\vec{M}$ with components
\begin{equation}\label{OpeFunc}
	M(a,b|x,y)= M^A_{a|x}\otimes M^B_{b|y}.
\end{equation}
Then a Bell operator~\cite{Braunstein1992} can be written as
\begin{equation}\label{Eq:BellOperator}
    \B = \vec{B}\cdot \vec{M},
\end{equation}
where $\vec{B}$ is a vector of real numbers associated with an arbitrary Bell function.

Now, let $\vec{B}$ be a Bell function maximized by the self-testing correlation $\vec{P}$. Since the corresponding reference state $\ket{\widetilde{\psi}}$ is an eigenstate of the Bell operator $\B$, it follows from \cref{Eq:BellOperator} that any state vector $\ket{\phi}$ orthogonal to $\ket{\widetilde{\psi}}$, i.e., satisfying $\braket{\phi|\widetilde{\psi}}=0$, will result in a vector
\begin{equation}\label{Eq:vecT}
    \vec{T} \equiv \bra{\phi} \vec{M} \ket{\psi}
\end{equation}
orthogonal to $\vec{B}$, giving $\vec{B}\cdot\vec{T}=0$. Note also from \cref{OpeFunc}, \cref{Eq:vecT}, and the completeness relations of POVM that $\sum_{a,b} T(a,b|x,y)=0$ for all $x$ and $y$.

In our work, all $\ket{\widetilde{\psi}}$ are two-qubit states, so one can always find three $\ket{\phi}$ orthogonal to each reference state. For concreteness, we label all the sixteen local deterministic extreme points by $ j = 0,1,\ldots, 15$ such that
\begin{equation}
    P_{d,j}(a,b|x,y)=\delta_{a,(j\bmod{2})x\oplus \lfloor\frac{j\bmod{4}}{2}\rfloor} \delta_{b,\lfloor\frac{j\bmod{8}}{4}\rfloor y\oplus\lfloor\frac{j}{8}\rfloor}.
\end{equation}
Then, if the optimum value of the following linear program 
\begin{subequations}\label{Eq:Non-Exposed-LPs}
\begin{equation}\label{Non-Exposed_LinearProgram}
    \begin{split}
        \max_{\vec{B}\in \mathbb{R}^8}\quad &\vec{B}\cdot\vecP\\
        {\rm s.t.}\quad &\vec{B}\cdot \vec{T}_i = 0 \quad \text{for} \quad i \in\{1,2,3\}\\
        & \vec{B}\cdot \vecP_{d,j} \le 1 \quad \text{for}\quad j = 0,1,\ldots, 15,
    \end{split}
\end{equation}
is $1$ and this bound is saturated by at least one of the $\vecP_{d,j}$, we obtain a proof that the self-testing correlation $\vecP$ is non-exposed. To certify that the maximum value of \cref{Non-Exposed_LinearProgram} is upper bounded by 1,  we make use of {\em weak duality}~\cite{BoydBook}, namely the fact that {\em any} feasible solution of the linear program dual to \cref{Non-Exposed_LinearProgram}, i.e.,
\begin{equation}\label{Non-Exposed_DualLinearProgram}
    \begin{split}
        \min_{y_j \ge 0, z_i \in \mathbb{R}} \quad &\sum_{j=0}^{15} y_{j}\\
        {\rm s.t.}\quad &\sum_{j=0}^{15} y_{j}P_{d,j} + \sum_{i} z_{i}\vec{T}_{i} = \vecP \\ &\text{for}\quad i\in\{ 1,2,3\} \text{ and } j = 0,1,\ldots, 15.
    \end{split}
\end{equation}
\end{subequations}
always provides an upper bound to the optimum value of \cref{Non-Exposed_LinearProgram}. Note further that in \cref{Eq:Non-Exposed-LPs}, each $\vec{T}_i$ is obtained form a vector $\ket{\phi_i}$ orthogonal to $\ket{\widetilde{\psi}}$ via \cref{Eq:vecT}. To complete the proof, it suffices to consider any subset $\{\ket{\phi_i}\}$ such that the a dual feasible solution gives exactly $1$.

In the following, we provide the $\{\ket{\phi_i}\}$, $\{\vec{T}_i\}$ and the corresponding dual variables $\{y_j\}$, $\{z_i\}$ needed for the proof of the non-exposed nature of 
$\vecPQ$, $\vecPQii$, $\vecPQiii$ and $\vecP_{\text{Cabello}}$. 

\subsection{Class 3b}~\label{Non-Exposed_Class.3b}
\label{App:NonExposed:3b}

For the self-testing correlation $\vecPQ$ of \cref{Eq_kappa}, which gives the maximal CHSH Bell violation in $\Q_{3b}$, the corresponding reference state and measurements are provided in~\cref{Eq:Class3b} and \cref{Eq:3b:Parameters}. To certify the non-exposed nature of $\vecPQ$, it suffices to consider $\ket{\phi_1}= \ket{0}\!\ket{0}$ as the state orthogonal to the reference state of \cref{Eq:3b:State},
which gives
\begin{equation}
\begin{split}
    \vec{T}_{1} =\,\,
    \begin{tabular}{cc|cc|cc|}
         &  & \multicolumn{2}{c|}{$x=0$} & \multicolumn{2}{c|}{$x=1$} \\
         &  & $0$ & $1$~ & $0$ & $1$ \\
        \hline
        \multirow{2}{*}{$y=0$} & $0$ & $0$ & $0$ & $-t_{1}$ & $t_{1}$\\
                               & $1$ & $0$  & $0$  & $0$ &$0$ \\
        \hline
        \multirow{2}{*}{$y=1$} & $0$ & $-t_{2}$  & $0$ & $-t_{2}$ & $0$ \\
                               & $1$ & $t_{2}$ &$0$  & $t_2-t_1$& $t_{1}$\\
        \hline
        \end{tabular}~,
\end{split}\end{equation}
where
\begin{align}
    \begin{split}
        t_{1} &= \tfrac{1}{6} \left(\sqrt[3]{3 \sqrt{33}-17}-\tfrac{2}{\sqrt[3]{3 \sqrt{33}-17}}+4\right) \approx 0.2282\\
        t_{2} &= \tfrac{1}{6} \left(\sqrt[3]{19-3 \sqrt{33}}+\tfrac{4}{\sqrt[3]{19-3 \sqrt{33}}}-2\right) \approx 0.4196 
    \end{split}
\end{align}
The optimized $\vec{B}$ has only the non-zero components $B(a,b|1,0) =  1\quad\forall\,\,a, b$.

For the dual problem of~\cref{Non-Exposed_DualLinearProgram}, the assignment
\begin{align}
        y_{3} &= \tfrac{1}{6} \left(\sqrt[3]{2 \left(3 \sqrt{33}+13\right)}-\tfrac{4\ 2^{2/3}}{\sqrt[3]{3 \sqrt{33}+13}}-1\right) \approx 0.1478, \nonumber\\ 
        y_{8} &= \tfrac{1}{3} \left(\tfrac{\sqrt[3]{11 \left(3 \sqrt{33}-11\right)}}{2^{2/3}}-\tfrac{22^{2/3}}{\sqrt[3]{3 \sqrt{33}-11}}+2\right) \approx 0.1041 \nonumber\\
        y_{12} & = \tfrac{1}{2} - y_{8}, \quad y_6 = y_3-z_1 t_1,\nonumber\\
        y_9 &= 1- y_3 -y_6-y_{8}-y_{12}, \quad
        z_{1} = 2 y_{3},
\end{align}
and zero otherwise, is easily verified to be a feasible solution with a value of $1$, thus showing that $\vecPQ$ is not exposed. 

\subsection{Class 2a}~\label{Non-Exposed_Class.2a}

For the self-testing correlation $\vecPQii$ of \cref{Eq:PQ2}, which gives the maximal CHSH  violation in $\Q_{2a}$, the corresponding reference state and measurements are, respectively, $\ket{\Phi^+_2}$ and those provided in~\cref{Eq:Class2a}. To certify the non-exposed nature of $\vecPQii$, we consider the singlet state $\ket{\phi_1}= \frac{1}{\sqrt{2}}(\ket{0}\!\ket{1}-\ket{1}\!\ket{0})$ as the state orthogonal to $\ket{\Phi^+_2}$,
which gives
\begin{equation}
\begin{split}
       \tfrac{8}{\sqrt{3}}\,\vec{T}_{1} &=\,\,
    \begin{tabular}{cc|cc|cc|}
         &  & \multicolumn{2}{c|}{$x=0$} & \multicolumn{2}{c|}{$x=1$} \\
         &  & $0$ & $1$~ & $0$ & $1$ \\
        \hline
        \multirow{2}{*}{$y=0$} & $0$ & $0$ & $0$ & $-1$ & $1$ \\ 
	                                  & $1$ & $0$  & $0$  & $1$ &$-1$ \\ 
        \hline
        \multirow{2}{*}{$y=1$} & $0$ & $-1$  & $1$ & $-1$ &  $1$ \\
           	                          & $1$ & $1$ & $-1$  & $1$& $-1$\\ 
        \hline
        \end{tabular}~.
\end{split}
\end{equation}
The optimized $\vec{B}$ has only the non-zero components $B(1,b|0,0) = B(0,b|0,1) = 1\quad\forall\,\,b$.

For the dual problem of~\cref{Non-Exposed_DualLinearProgram}, the assignment
\begin{equation}
    y_{3} = y_{6} = y_{9} = y_{12} = \tfrac{1}{4}, \quad  z_{1} = \tfrac{1}{\sqrt{3}},
\end{equation}
and zero otherwise, is easily verified to be a feasible solution with a value of $1$, thus showing that $\vecPQii$ is not exposed. 

\subsection{Class 2b}~\label{Non-Exposed_Class.2b}

For the self-testing correlation $\vecPQiii$ of \cref{Eq:PQ3}, which gives the maximal CHSH violation in $\Q_{2b}$, the corresponding reference state and measurements are provided in~\cref{Eq:Class2b} and \cref{Eq:2b:Parameters}. To certify the non-exposed nature of $\vecPQiii$, it suffices to consider the following states orthogonal to the reference state of \cref{Eq:Class2b} with \cref{Eq:2b:Parameters},
\begin{equation}
    \begin{split}
        \ket{\phi_1} &= \ket{0}\!\ket{0},\quad
        \ket{\phi_2} = \sin{\alpha}\ket{0}\!\ket{1}+\cos{\alpha}\ket{1}\!\ket{1},
    \end{split}
\end{equation}
where $\alpha = \frac{\pi}{6}$. In this case,
\begin{equation}
\begin{split}
    8\sqrt{\tfrac{2}{3}}\,\vec{T}_{1} &=\,\,
    \begin{tabular}{cc|cc|cc|}
         &  & \multicolumn{2}{c|}{$x=0$} & \multicolumn{2}{c|}{$x=1$} \\
         &  & $0$ & $1$~ & $0$ & $1$ \\
        \hline
        \multirow{2}{*}{$y=0$} & $0$ & $0$ & $0$ & $-2$ & $2$ \\ 
	                                    & $1$ & $0$  & $0$  & $0$ &$0$ \\
        \hline
        \multirow{2}{*}{$y=1$} & $0$ & $-2$  & $0$ & $-3$ & $1$ \\
	                                   & $1$ & $2$ &$0$  & $1$& $1$\\ 
        \hline
        \end{tabular}~,\\
       8\sqrt{\tfrac{2}{3}}\, \vec{T}_{2} &=\,\,
    \begin{tabular}{cc|cc|cc|}
         &  & \multicolumn{2}{c|}{$x=0$} & \multicolumn{2}{c|}{$x=1$} \\
         &  & $0$ & $1$~ & $0$ & $1$ \\
        \hline
        \multirow{2}{*}{$y=0$} & $0$ & $0$ & $0$ & $0$ & $0$\\
                               & $2$ & $2$  & $-2$  & $0$ &$0$ \\  
        \hline
        \multirow{2}{*}{$y=1$} & $0$ & $1$  & $-3$ & $0$ & $-2$ \\
                               & $1$ & $1$ &$1$  & $0$& $2$\\  
        \hline
        \end{tabular}~,       
\end{split}
\end{equation}
and the optimized $\vec{B}$ takes the form of
\begin{equation}
\begin{split}
        \vec{B} &=
    \scalebox{0.92}{\begin{tabular}{cc|cc|cc|}
         &  & \multicolumn{2}{c|}{$x=0$} & \multicolumn{2}{c|}{$x=1$} \\
         &  & $0$ & $1$~ & $0$ & $1$ \\
        \hline
        \multirow{2}{*}{$y=0$} & $0$ & $-b_1$ & $b_2$ & $0$ & $1-b_1-b_2$\\
                               & $1$ & $0$  & $b_2$  & $0$ &$b_1$ \\ 
        \hline
        \multirow{2}{*}{$y=1$} & $0$ & $b_2$  & $0$ & $b_3$ & $b_1$ \\
                               & $1$ & $2b_1+b_2$ &$0$  & $0$& $0$\\
        \hline
        \end{tabular}}~,
\end{split}
\end{equation}
where $b_1\approx 0.2197$, $b_2\approx 0.3410$, and $b_3\approx0.6590$.

For the dual program in~\cref{Non-Exposed_DualLinearProgram}, the assignment
\begin{equation}
\begin{split}
     y_{2} &= y_{3} = y_{12} = y_{15} = \tfrac{1}{4}, \quad
     z_{1} = z_2 = \tfrac{1}{\sqrt{6}},
\end{split}
\end{equation}
and zero otherwise, is easily verified to be a feasible solution with a value of $1$, thus showing that $\vecPQiii$ is not exposed. 

\subsection{Class 2c}~\label{Non-Exposed_Class.2c}  

For the self-testing correlation $\vecPC$ of \cref{Eq:PCabello}, which gives the maximal CHSH violation in $\Q_{2c}$, the corresponding reference state and measurements are provided in~\cref{Eq:Class2c}. To certify the non-exposed nature of $\vecPC$, it suffices to consider the following states orthogonal to the reference state of \cref{Eq:CabelloState}
\begin{equation}\label{Eq:phi:2c}
    \begin{split}
        \ket{\phi_1} &= k_{2}\ket{0}\!\ket{0}-k_{1}\ket{0}\!\ket{1}-k_{3}\ket{1}\!\ket{0}+k_{2}\ket{1}\!\ket{1},\\
        \ket{\phi_2} &= k_{3}\ket{0}\!\ket{0}-k_{2}\ket{0}\!\ket{1}+k_{2}\ket{1}\!\ket{0}-k_{1}\ket{1}\!\ket{1}.
    \end{split}
\end{equation}
The exact analytic expression of $\vec{T}_1$ and $\vec{T}_2$ can be computed from \cref{Eq:vecT}, \cref{Eq:Class2c}, and \cref{Eq:phi:2c}. Below, we provide their approximate form for ease of reference:
\begin{equation}
\begin{split}
    \vec{T}_{1} &=
    \scalebox{0.92}{\begin{tabular}{cc|cc|cc|}
         &  & \multicolumn{2}{c|}{$x=0$} & \multicolumn{2}{c|}{$x=1$} \\
         &  & $0$ & $1$~ & $0$ & $1$ \\
        \hline
        \multirow{2}{*}{$y=0$} & $0$ & $0$ & $-t_{1}$ & $t_{2}$ & $t_{3}$\\
                               & $1$ & $0$  & $t_{1}$  & $t_{4}$ &$-t_2-t_3-t_4$ \\
        \hline
        \multirow{2}{*}{$y=1$} & $0$ & $t_{5}$  & $t_{6}$ & $t_{7}$ & $0$ \\
                               & $1$ & $-t_{5}$ &$-t_{6}$  & $t_{8}$& $-t_7-t_8$\\
        \hline
        \end{tabular}}~,\\
        \vec{T}_{2} &=
    \scalebox{0.92}{\begin{tabular}{cc|cc|cc|}
         &  & \multicolumn{2}{c|}{$x=0$} & \multicolumn{2}{c|}{$x=1$} \\
         &  & $0$ & $1$~ & $0$ & $1$ \\
        \hline
        \multirow{2}{*}{$y=0$} & $0$ & $0$ & $-u_{1}$ & $u_{2}$ & $u_{3}$\\
                               & $1$ & $u_{1}$  & $0$  & $u_{4}$ &$-u_2-u_3-u_4$ \\
        \hline
        \multirow{2}{*}{$y=1$} & $0$ & $u_{5}$  & $-u_{6}$ & $u_{7}$ & $0$ \\
                               & $1$ & $u_{6}$ &$-u_{5}$  & $u_{8}$& $-u_7-u_8$\\
        \hline
        \end{tabular}}~,\\
\end{split}
\end{equation}
where $t_1\approx 0.1507$, $t_2\approx 0.1319$, $t_3\approx -0.2826$, $t_4\approx0.2891$, $t_5=0.3202$, $t_6=0.0909$, $t_7\approx 0.4112$, and $t_8\approx 0.0098$ for $\vec{T}_1$, while $u_1\approx0.4214$, $u_2\approx-0.2950$, $u_3\approx-0.1264$, $u_4\approx0.2770$, $u_5\approx0.0871$, $u_6\approx0.3342$, $u_7\approx-0.2471$, $u_8\approx0.2291$ for $\vec{T}_2$.

The optimized $\vec{B}$ takes the form of
\begin{equation}
\begin{split}
        \vec{B} &=
    \begin{tabular}{cc|cc|cc|}
         &  & \multicolumn{2}{c|}{$x=0$} & \multicolumn{2}{c|}{$x=1$} \\
         &  & $0$ & $1$~ & $0$ & $1$ \\
        \hline
        \multirow{2}{*}{$y=0$} & $0$ & $b_1$ & $0$ & $0$ & $b_2$\\
                                            & $1$ & $1$  & $0$  & $0$ &$0$ \\ 
        \hline
        \multirow{2}{*}{$y=1$} & $0$ & $0$  & $1$ & $0$ & $-b_2$ \\
                                           & $1$ & $-b_3$ &$0$  & $b_4$& $b_5$\\
        \hline
        \end{tabular}~,
\end{split}
\end{equation}
where $b_1\approx 0.6918$, $b_2\approx 0.4170$, $b_3\approx0.2005$, $b_4\approx 0.1349$, and $b_5\approx 0.0917$.

For the dual program in~\cref{Non-Exposed_DualLinearProgram}, the assignment
\begin{align}
        y_{3} &= \tfrac{1}{6} \left(-\sqrt[3]{53-6 \sqrt{78}}-\tfrac{1}{\sqrt[3]{53-6 \sqrt{78}}}+7\right) \approx 0.3427, \nonumber\\
        y_{12} &= -\tfrac{\sqrt[3]{9-\sqrt{78}}}{3^{2/3}}-\tfrac{1}{\sqrt[3]{3 \left(9-\sqrt{78}\right)}}+2 \approx 0.4786, \nonumber\\
        y_{15} &= 1 - y_3 - y_{12} \approx 0.1787, \nonumber\\
        z_{1} &= -\sqrt{\tfrac{\zeta_1-\frac{133727}{\zeta_1}-145}{48}} \approx -0.6414, \\
        z_{2} &= \tfrac{1}{6} \left(\zeta_2-\tfrac{23}{\zeta_2}-11\right) \approx 0.3855, \nonumber\\
        \zeta_{1} &= \sqrt[3]{6827808 \sqrt{78}+35282447},\nonumber\\
        \zeta_{2} &=\sqrt[3]{186 \sqrt{78}+1639},\nonumber
\end{align}
and zero otherwise, is  easily verified to be a feasible solution with a value of $1$, thus showing that $\vecPC$ is not exposed.

\subsection{Class 1}~\label{Non-Exposed_Class.1}

For the self-testing correlation $\vecPQiv$ of \cref{Eq:PQ4}, which gives the maximal CHSH violation in $\Q_{1}$, the corresponding reference state and measurements are provided in~\cref{Eq:Class1}. To certify the non-exposed nature of $\vecPQiv$, it suffices to consider the following states orthogonal to the state of \cref{Eq:StateClass1}
\begin{equation}\label{Eq:phi:1}
    \begin{split}
        \ket{\phi_1} &= \ket{0}\!\ket{0},\\
        \ket{\phi_2} &= \cos\theta \left(\frac{\ket{0}\!\ket{1}+\ket{1}\!\ket{0}}{\sqrt{2}}\right)
        -\sin\theta\ket{1}\!\ket{1}.
    \end{split}
\end{equation}
The exact analytic expression of $\vec{T}_1$ and $\vec{T}_2$ can be computed from \cref{Eq:vecT}, \cref{Eq:Class2c}, and \cref{Eq:phi:2c}. Below, we provide their approximate form for ease of reference:
\begin{align}
    \vec{T}_{1} &=
    \begin{tabular}{cc|cc|cc|}
         &  & \multicolumn{2}{c|}{$x=0$} & \multicolumn{2}{c|}{$x=1$} \\
         &  & $0$ & $1$~ & $0$ & $1$ \\
        \hline
        \multirow{2}{*}{$y=0$} & $0$ & $0$ & $0$ & $t_{1}$ & $-t_{1}$\\
                               & $1$ & $0$  & $0$  & $0$ &$0$ \\
        \hline
        \multirow{2}{*}{$y=1$} & $0$ & $t_{1}$  & $0$ & $t_{2}$ & $t_{3}$ \\
                               & $1$ & $-t_{1}$ & $0$  & $t_{3}$& $-t_2-2t_3$\\
        \hline
        \end{tabular}~,\\
        \vec{T}_{2} &=
    \begin{tabular}{cc|cc|cc|}
         &  & \multicolumn{2}{c|}{$x=0$} & \multicolumn{2}{c|}{$x=1$} \\
         &  & $0$ & $1$~ & $0$ & $1$ \\
        \hline
        \multirow{2}{*}{$y=0$} & $0$ & $0$ & $u_{1}$ & $u_{2}$ & $u_{3}$\\
                               & $1$ & $u_{1}$  & $-2u_{1}$  & 
                               $u_{4}$ &$-u_{234}$ \\
        \hline
        \multirow{2}{*}{$y=1$} & $0$ & $u_{2}$  & $u_{4}$ & $u_{5}$ & $u_{6}$ \\
                               & $1$ & $u_{3}$ &$-u_{234}$  
                               & $u_{6}$& $-u_5-2u_6$\\
        \hline
        \end{tabular}~,\nonumber
\end{align}
where $t_1\approx 0.3298$, $t_2\approx 0.4766$, and $t_3\approx -0.1468$, for $\vec{T}_1$, while $u_1\approx0.1111$, $u_{234}\equiv u_2+u_3+u_4$, $u_2\approx 0.0397$, $u_3\approx 0.0715$,  $u_4\approx -0.3113$, $u_5\approx-0.1429$, and $u_6\approx -0.1288$ for $\vec{T}_2$.

The optimized $\vec{B}$ takes the form of
\begin{equation}
\begin{split}
        \vec{B} &=
    \begin{tabular}{cc|cc|cc|}
         &  & \multicolumn{2}{c|}{$x=0$} & \multicolumn{2}{c|}{$x=1$} \\
         &  & $0$ & $1$~ & $0$ & $1$ \\
        \hline
        \multirow{2}{*}{$y=0$} & $0$ & $1$ & $1+b_1$ & $-b_1$ & $0$\\
                                            & $1$ & $1$  & $1$  & $0$ &$b_2$ \\ 
        \hline
        \multirow{2}{*}{$y=1$} & $0$ & $0$  & $0$ & $0$ & $b_4$ \\
                                           & $1$ & $0$ &$b_3$  & $b_5$& $0$\\
        \hline
        \end{tabular}~,
\end{split}
\end{equation}
where $b_1\approx 0.4850$, $b_2\approx -0.1944$, $b_3\approx -0.6795$, $b_4\approx -0.7872$, and $b_5\approx -0.3025$.

For the dual program in~\cref{Non-Exposed_DualLinearProgram}, let $\zeta_1 \approx 0.4450, \zeta_2 \approx 0.7530$, and $\zeta_3  \approx 0.0677$ be the respective smallest positive roots of the following cubic polynomials:
$q_1(x) = x^3-x^2-2 x+1, q_2(x) = x^3-7 x^2+14 x-7,$ and $q_3(x) = x^3-32 x^2-116 x+8.$
The assignment
\begin{align}
        y_{3} &= y_{12} = \zeta_1, &
        y_{15} &= 1-2y_{3} \approx 0.1099, \nonumber\\
        z_{1} &= -\sqrt{\zeta_{2}} \approx -0.8678, &
        z_{2} &= \sqrt{\zeta_{3}} \approx 0.2602, 
\end{align}
and zero otherwise, is  easily verified to be a feasible solution with a value of $1$, thus showing that $\vecPQiv$ is not exposed.

\section{Proof of Corollary~\ref{Prop:MaxCHSH:MES}}
\label{App:MaxCHSH:MES}

From the proof of Corollary~\ref{CHSH:Decompose:MES}, we see that in maximizing the value of any Bell inequality in the CHSH Bell scenario by $\ket{\Phi^+_d}$, it suffices to consider correlations $\vecP\in\M_d$ taking the form of \cref{Eq:MES:vecP:Decomposition}.
Let $\vec{B}$ be the Bell coefficients associated with the CHSH inequality of \cref{Eq:CHSH}, then
\begin{align}~\label{maxP}
    &\max_{\vecP\in\M_d}\vec{B}\cdot\vecP \nonumber\\
    = &\max_{\vecP'_i\in \M_2,\vecP''_j\in \L,  k\ge 0, \frac{d-k}{2}\in \mathbb{Z}^+} \frac{2}{d}\sum_{i=1}^{\frac{d-k}{2}}\vec{B}\cdot\vecP'_i+\frac{1}{d}\sum_{j = 1}^{k}\vec{B}\cdot\vecP''_j \nonumber\\
     \le &\max_{k\ge 0, \frac{d-k}{2}\in \mathbb{Z}^+} \frac{2}{d}\sum_{i=1}^{\frac{d-k}{2}}\max_{\vecP'_i\in \M_2}\vec{B}\cdot\vecP'_i+\frac{1}{d}\sum_{j = 1}^{k}\max_{\vecP''_j\in \L}\vec{B}\cdot\vecP''_j \nonumber\\
     = &\max_{k\ge 0, \frac{d-k}{2}\in \mathbb{Z}^+} \frac{2}{d}\sum_{i=1}^{\frac{d-k}{2}}2\sqrt{2}+\frac{1}{d}\sum_{j = 1}^{k}2 \nonumber\\
     = &\left\{ 
        \begin{aligned}
            ~&2\sqrt{2}\left(\frac{d-1}{d}\right)+\frac{2}{d},&&d = \text{odd},\\
            ~&2\sqrt{2},&&d = \text{even}.
        \end{aligned}
    \right .
\end{align}
These upper bounds are indeed attainable [see Eq.~(5) of~\cite{Liang2006}], thus completing the proof that the maximal CHSH  violation by $\ket{\MESd}$ is given in the last line of \cref{maxP}.

\section{Proof of Lemma~\ref{lem2}}
\label{App:MESlocal}

Let us begin by noting that the transposition operation preserves the hermiticity of a matrix. Then, from spectral theorem, we may write :
\begin{equation}
    \begin{split}
    \begin{cases}
        &E_{0|0}\tp=\sum_{i=0}^{r^0} \lambda_i \proj{e_i},\quad0< \lambda_i \le 1,\\
        &E_{1|0}\tp=\mathbb{1}_d-E_{0|0}\tp,\\
    \end{cases}\\
    \begin{cases}
        &F_{0|0}=\sum_{i=0}^{r^1} \lambda'_i\proj{f_i},\quad 0< \lambda'_i   \le 1,\\
        &F_{1|0}=\mathbb{1}_d-F_{0|0}.
     \end{cases}
    \end{split}
\end{equation}
where $\lambda_i$ and $\lambda_i'$ are, respectively, the nonvanshing eigenvalues of $E\tp_{0|0}$ and $F_{0|0}$. 
Without loss of generality, suppose $P(0,0|0,0)=0$, then from \cref{Eq:MESCorrelation}, we get
\begin{equation}\label{Eq:VanishingSum}
    \frac{1}{d}\tr[E_{0|0}\tp F_{0|0}]= \frac{1}{d}\sum_{i=0}^{r^0}\sum_{j=0}^{r^1}\lambda_i \lambda'_j|\langle e_i | f_j \rangle|^2=0.
\end{equation}
Notice that the RHS of \cref{Eq:VanishingSum} is a sum of non-negative terms. The fact that this sum vanishes means that $|\langle e_i | f_j \rangle|=0$ for all $i,j$, which means that $E_{0|0}\tp F_{0|0}=0$, i.e., $E_{0|0}\tp$ and $F_{0|0}$ commute 
and  hence can be diagonalized in the same basis.

\end{CJK*}
\end{document}